\title{Isomorphism for Tournaments of Small Twin Width}
\tikzstyle{vertex}=[draw,circle,fill=white,minimum size=6pt,inner sep=0pt]
\tikzstyle{smallvertex}=[draw,circle,fill=white,minimum size=5pt,inner sep=0pt]
\newcommand{\CC}{{\mathcal C}}
\newcommand{\CE}{{\mathcal E}}
\newcommand{\CH}{{\mathcal H}}
\newcommand{\CL}{{\mathcal L}}
\newcommand{\CM}{{\mathcal M}}
\newcommand{\CP}{{\mathcal P}}
\newcommand{\CQ}{{\mathcal Q}}
\newcommand{\CS}{{\mathcal S}}
\newcommand{\CT}{{\mathcal T}}
\newcommand{\CW}{{\mathcal W}}
\newcommand{\NN}{{\mathbb N}}
\newcommand{\ZZ}{{\mathbb Z}}
\newcommand{\bigmid}{\;\big|\;}
\newcommand{\Bigmid}{\;\Big|\;}
\newcommand{\WL}[2]{\chi^{#1,#2}}
\newcommand{\WLit}[3]{\chi_{(#2)}^{#1,#3}}
\newcommand{\red}{{\sf red}}
\newcommand{\Ered}{E_{\sf red}}
\DeclareMathOperator{\md}{md}
\DeclareMathOperator{\Aut}{Aut}
\DeclareMathOperator{\Iso}{Iso}
\DeclareMathOperator{\Sym}{Sym}
\DeclareMathOperator{\CopRob}{CopRob}
\DeclareMathOperator{\BP}{BP}
\DeclareMathOperator{\tw}{tw}
\DeclareMathOperator{\width}{width}
\DeclareMathOperator{\dtw}{dtw}
\DeclareMathOperator{\tww}{tww}
\DeclareMathOperator{\dpw}{dpw}
\DeclareMathOperator{\ctw}{ctw}
\newcommand{\CFI}{{\sf CFI}}
\newcommand{\blue}{{\sf blue}}
\newcommand{\green}{{\sf green}}
\pgfplotsset{compat=1.18} 
\begin{document}

\maketitle

\begin{abstract}
 We prove that isomorphism of tournaments of twin width at most $k$ can be decided in time $k^{O(\log k)}n^{O(1)}$.
 This implies that the isomorphism problem for classes of tournaments of bounded or moderately growing twin width is in polynomial time.
 By comparison, there are classes of undirected graphs of bounded twin width that are isomorphism complete, that is, the isomorphism problem for the classes is as hard as the general graph isomorphism problem.
 Twin width is a graph parameter that has been introduced only recently (Bonnet et al., J.\ ACM 2022), but has received a lot of attention in structural graph theory since then.
 On directed graphs, it is functionally smaller than clique width.
 We prove that on tournaments (but not on general directed graphs) it is also functionally smaller than directed tree width (and thus, the same also holds for cut width and directed path width).
 Hence, our result implies that tournament isomorphism testing is also fixed-parameter tractable when parameterized by any of these parameters.

 Our isomorphism algorithm heavily employs group-theoretic techniques.
 This seems to be necessary: as a second main result, we show that the combinatorial Weisfeiler-Leman algorithm does not decide isomorphism of tournaments of twin width at most 35 if its dimension is $o(n)$.
 (Throughout this abstract, $n$ is the order of the input graphs.)
\end{abstract}

\section{Introduction}

The tournament isomorphism problem (TI) was recognized as a particularly interesting special case of the graph isomorphism problem (GI) early-on.
Already in 1983, Babai and Luks~\cite{BabaiL83} proved that TI is solvable in time $n^{O(\log n)}$;
it took 33 more years for Babai \cite{Babai16} to prove that the general GI is in quasi-polynomial time.
An important fact that makes TI more accessible than GI is that tournaments always have solvable automorphism groups.
This is a consequence of the observation that the automorphism groups of tournaments have odd order and the famous Feit-Thompson Theorem~\cite{FeitT63} stating that all groups of odd order are solvable.
However, even Babai's powerful new machinery did not help us to improve the upper bound for TI, as one might have hoped.
But TI is not only special from a group-theoretic perspective.
Another remarkable result, due to Schweitzer~\cite{Schweitzer17}, states that TI reduces to the problem of deciding whether a tournament has a nontrivial automorphism; the so-called \emph{rigidity problem}.
It is an open question whether the same holds for general graphs.

While there is an extensive literature on GI restricted to classes of graphs (see \cite{GroheN21,Neuen26} for recent surveys), remarkably little is known for restrictions of TI.
Ponomarenko \cite{Ponomarenko92} proved that TI is in polynomial time for tournaments whose automorphism group contains a regular cyclic subgroup,
and recently Arvind, Ponomarenko, and Ryabov~\cite{ArvindPR25} proved that TI is in polynomial time for edge-colored tournaments where at least one edge color induces a (strongly) connected spanning subgraph of bounded degree (even fixed-parameter tractable when parameterized by the out-degree).
While both of these results are very interesting from a technical perspective, they consider classes of tournaments that would hardly be called natural from a graph-theoretic point of view.
Natural graph parameters that have played a central role in the structural theory of tournaments developed by Chudnovsky, Seymour and others \cite{ChudnovskyFS12,ChudnovskyKLST18,ChudnovskySS19,ChudnovskyS11,FominP19} are \emph{cut width} and \emph{path width}.
The more recent theory of structural sparsity \cite{GajarskyKNMPST20,GajarskyPT22,NesetrilO16,NesetrilMS22} highlights \emph{clique width} and \emph{twin width}.
Here twin width is the key parameter.
Not only is it functionally smaller than the other parameters, which means that if cut width, path width, or clique width is bounded, then twin width is bounded as well, it is also known \cite{GenietT26} that a class of tournaments has bounded twin width if and only if it has a property known as \emph{monadic dependence (NIP)}.
Dependence is a key property studied in classical model theory.
A class of graphs is monadically dependent if and only if all set systems definable in this class by a first-order transduction have bounded VC dimension.
This property seems to characterize precisely the graph classes that are regarded as structurally sparse.
Since twin width of graphs and binary relational structures has been introduced in \cite{BonnetKTW22}, it received a lot of attention in algorithmic structural graph theory
\cite{BonnetGKTW22,BonnetGKTW24,BonnetGMSTT24,BonnetGMT23,BonnetKRT22,BonnetKRTW22,GajarskyPT22,GanianPSSS22,GenietT26,HlinenyJ25,Thomasse22}.
(We defer the somewhat unwieldy definition of twin width to Section~\ref{sec:tww}.)
Our main result states that tournament isomorphism is fixed-parameter tractable when parameterized by twin width.

\begin{theorem}
 \label{thm:tww-isomorphism-intro}
 The isomorphism problem for tournaments of twin width at most $k$ can be solved in time $k^{O(\log k)}\cdot n^{O(1)}$.
\end{theorem}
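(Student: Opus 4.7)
The plan is to mimic the Babai--Luks tournament isomorphism algorithm but drive the recursion using the contraction sequence from the twin-width decomposition rather than the full vertex set. Given contraction sequences of width at most $k$ for the two input tournaments, we obtain sequences of partitions $\CP_n \preceq \cdots \preceq \CP_1$ together with trigraphs $G_i$ in which every part has red degree at most $k$. First I would set up a recursive subroutine that, for each part $P$ occurring in the sequence, computes a succinct description (as a coset in $\Sym(P)$, given by generators) of the isomorphisms from the induced subtournament on $P$, colored by its interaction with the at most $k$ parts joined to it by red edges, to the corresponding object on a candidate part of the other tournament. The outer loop then assembles an isomorphism of the two tournaments by combining these local data starting from the singleton partition and proceeding up the contraction tree.

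For the combine step, when a part $P$ is formed by merging two children $P_1$ and $P_2$, I would lift the isomorphism cosets for $P_1$ and $P_2$ into $\Sym(P)$, and then intersect them with the constraints imposed by the now-visible black adjacencies between $P_1$ and $P_2$ and by the newly updated boundary coloring of $P$ toward its (at most $k$) red neighbors. Because the acting groups are subgroups of automorphism groups of colored tournaments, they are of odd order and thus solvable by the Feit--Thompson theorem, so the needed primitives---intersection of cosets, pointwise and setwise stabilizers, orbit and block-system computations---are all available through Babai--Luks recursion along a composition series. The factor $k^{O(\log k)}$ is produced by performing this recursion on a structure of \emph{effective degree} $O(k)$, namely the at-most-$k$ red neighbors that govern the local ambiguity, with the logarithm accounting for the depth of a composition series of a solvable permutation group of degree $O(k)$; the polynomial factor in $n$ absorbs the length of the contraction sequence and the cost of the standard bookkeeping.

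The main obstacle I anticipate is designing the right invariant for the combine step. The trigraph view fixes the black adjacencies inside $P$ but leaves the red ones to $P$'s at-most-$k$ red neighbors unresolved, and which partial isomorphisms of $P_1$ and $P_2$ extend consistently depends on how these boundary relations will eventually be oriented. The natural fix is to index the isomorphism description of $P$ by the full boundary coloring toward its red neighbors, giving at most $k^{O(1)}$ parallel instances that must be maintained simultaneously, and to define the combine operation so that the lifted cosets from $P_1$ and $P_2$ are intersected inside the subgroup that permutes compatible boundary patterns. Making this data structure precise---so that it remains of bounded description size, so that the action is always through a solvable group, and so that each merge stays within a $k^{O(1)}$ polynomial-time budget---is the crux of the argument; once this is in place, the remainder is a careful adaptation of standard group-theoretic isomorphism machinery for solvable groups to the contraction-sequence setting.
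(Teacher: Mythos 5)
There is a genuine gap, and it occurs at the very first step: you assume you are \emph{given} contraction sequences of width at most $k$ for the two input tournaments and then drive a bottom-up dynamic program over their parts. No efficient algorithm for producing such sequences is assumed or known at the required quality, and the paper deliberately sidesteps this: its algorithm never computes a contraction sequence or the twin width; the width bound is only used to prove the \emph{existence} of a certain combinatorial structure that is then detected canonically via the $2$-dimensional Weisfeiler-Leman coloring. More fundamentally, even if width-$k$ contraction sequences were handed to you, their parts are not isomorphism-invariant objects: an isomorphism $\varphi\in\Iso(T_1,T_2)$ has no reason to map parts of $T_1$'s contraction sequence to parts of $T_2$'s, so a combine step that lifts and intersects cosets along the two contraction trees computes only those isomorphisms respecting a particular (arbitrary) pair of decompositions and can miss $\Iso(T_1,T_2)$ entirely; repairing this by guessing a correspondence between the two sequences is not within your budget. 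This non-canonicity is precisely the obstacle the paper's key combinatorial work is designed to remove: using the mixed-degree lemmas (there is always a cross-cluster edge whose mixed neighbors meet at most $\tww(T)$ clusters, and each vertex has such edges into at most $2k+1$ clusters), it builds a sequence of partitions into strongly connected components of subgraphs induced by $2$-WL arc colors. These partitions are canonical, hence automatically respected by every isomorphism, and they carry a bounded ``cluster out-degree'' property that your red-degree bookkeeping does not provide (inside a part the black edges form a full tournament of unbounded size, and the red-neighbor boundary data you propose to index by is again tied to the non-canonical partition).

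A secondary inaccuracy: the $k^{O(\log k)}$ factor does not come from the depth of a composition series of a solvable group of degree $O(k)$ (that would only contribute polynomially). In the paper it arises from invoking the Babai--Luks $n^{O(\log n)}$ tournament-isomorphism algorithm on auxiliary quotient tournaments whose vertices are the at most $2k+1$ clusters reachable from a boundary vertex; all remaining steps (coset intersection, set/hypergraph stabilizer problems) are polynomial for solvable groups by Babai--Luks and Miller, exactly as in the Arvind--Ponomarenko--Ryabov-style lifting that the paper generalizes. So the solvable-group machinery you cite is the right toolkit for the combine step, but without the canonical, WL-definable partition sequence and the mixed-degree argument that bounds the number of clusters per vertex, the recursion you describe neither is well-founded on the actual input (no decomposition is given) nor computes the full isomorphism set.
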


Interestingly, isomorphism testing for undirected graphs of bounded twin width at most $4$ is as hard as the general GI.
This follows easily from the fact that a $\lceil 2\log n\rceil$-subdivision of every graph with $n$ vertices has twin width at most $4$ \cite{BergeBD22}.
Once more, this demonstrates the special role of the tournament isomorphism problem, though here the reason is not group-theoretic, but purely combinatorial.

Note that the dependence on the twin width $k$ of the algorithm in Theorem~\ref{thm:tww-isomorphism-intro} is subexponential, so our result implies that TI is in polynomial time even for tournaments of twin width $2^{O(\sqrt{\log n})}$.
Since twin width is functionally smaller than clique width, our result implies that TI is also fixed-parameter tractable when parameterized by clique width.
Additionally, we prove (Corollary~\ref{cor:tww-vs-dtw}) that the twin width of a tournament is functionally smaller than its \emph{directed tree width}, a graph parameter originally introduced in \cite{JohnsonRST01}.
Since the directed tree width of every directed graph is smaller than its cut width or directed path width, the same also holds for these two parameters.
Hence, TI is fixed-parameter tractable also when parameterized by directed tree width, directed path width or cut width.
To the best of our knowledge, this was not known for any of these parameters.
The fact that twin width is functionally smaller than directed tree width, directed path width and cut width on tournaments is interesting in its own right, because this result does not extend to general directed graphs (for any of the three parameters).

Our proof of Theorem \ref{thm:tww-isomorphism-intro} heavily relies on group-theoretic techniques.
In a nutshell, we show that bounded twin width allows us to cover a tournament by a sequence of directed graphs that have a property resembling bounded degree sufficiently closely to apply a group-theoretic machinery going back to Luks \cite{Luks82} and developed to great depth since then (see, e.g.,~\cite{Babai16,BabaiL83,GroheNS23,Miller83,Neuen22}).
Specifically, we generalize arguments that have been introduced by Arvind et al.~\cite{ArvindPR25} for TI on edge-colored tournaments where at least one edge color induces a spanning subgraph of bounded out-degree.

Yet one may wonder if this heavy machinery is even needed to prove our theorem, in particular in view of the fact that on many natural graph classes,
including, for example, undirected graphs of bounded clique width \cite{GroheN23}, the purely combinatorial Weisfeiler-Leman algorithm is sufficient to decide isomorphism (see, e.g.,~\cite{Grohe17,Kiefer20}).
We prove that this is not the case for tournaments of bounded twin width.

\begin{theorem}
 \label{thm:wl-tournament-tww}
 For every $k \geq 2$ there are non-isomorphic tournaments $T_k$ and $T_k'$ of order $|V(T_k)| = |V(T_k')| = O(k)$ and twin width at most $35$ that are not distinguished by the $k$-dimensional Weisfeiler-Leman algorithm.
\end{theorem}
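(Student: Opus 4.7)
The plan is to adapt the Cai--Fürer--Immerman construction ($\CFI$) to tournaments, over a base graph chosen so that the twin width stays bounded while the Weisfeiler--Leman (WL) lower bound still fires. The natural candidate for the base is the $k\times k$ toroidal grid $G_k$: it has $k^2$ vertices, twin width bounded by a small absolute constant, and treewidth $\Theta(k)$. The last property is crucial, because the classical $\CFI$ lower bound via the bijective $k$-pebble game says that $(\CFI(G_k),\CFI'(G_k))$ is not distinguished by $k$-dimensional WL as long as $G_k$ has sufficiently high treewidth (or cop number). Starting from this base is what produces tournaments of order $O(k^2)$ as claimed.

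I would first apply the standard $\CFI$ construction to $G_k$ to obtain a non-isomorphic pair of undirected graphs $(H_k,H_k')$ on $O(k^2)$ vertices that are $k$-WL equivalent. Since the $\CFI$ gadgets depend only on the maximum degree of $G_k$ (at most four), $H_k$ and $H_k'$ still have twin width bounded by an absolute constant: an explicit contraction sequence can be obtained by contracting each gadget to a single vertex first (yielding essentially $G_k$ again) and then applying a contraction sequence for the grid. Next comes the tournamentization step: replace each $\CFI$ gadget by a constant-size tournament gadget carrying the same $\mathbb{F}_2$-labels (for instance built from small Paley or cyclic tournaments), and orient every arc between distinct gadgets using a canonical ordering read off from the grid coordinates, with the edges of $H_k$ encoded as specific flips inside a fixed inter-gadget arc pattern. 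The tournaments $T_k$ and $T_k'$ obtained this way inherit the order $O(k^2)$, and their twin width can be bounded by $35$ by a direct analysis of the constant-size gadget's contribution to the red degree in the lifted contraction sequence.

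Non-isomorphism of $T_k$ and $T_k'$ is immediate: any tournament isomorphism can be decoded into an isomorphism $H_k\to H_k'$, of which there is none by the $\CFI$ twist. The main obstacle, and where the real work lies, is the WL lower bound: one has to show that the added orientation does not give $k$-dimensional WL enough extra information to distinguish $T_k$ from $T_k'$. The approach I would take is to lift the bijective $k$-pebble winning strategy for $\CFI$ on the grid to $(T_k,T_k')$, exhibiting at each position a partial isomorphism between the tournament gadgets that is compatible with the $\CFI$ twist yet invisible to the $k$ pebbles. Canonicity of the orientation from the grid structure is exactly what keeps the extra information inside what the $k$-pebble game can already see on $G_k$, which by the high treewidth is not enough to separate the $\CFI$ graphs. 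Pinning down the concrete constant $35$ is then bookkeeping in the chosen gadget.
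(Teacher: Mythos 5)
Your overall plan --- CFI over a grid-like base that simultaneously has treewidth $\Theta(k)$ and bounded twin width, followed by a ``tournamentization'' that orients everything via a canonical order, with twin width controlled by first contracting each gadget --- is exactly the skeleton of the paper's proof. But there is a genuine gap at the heart of your construction: you use the standard CFI construction over $\mathbb{F}_2$ and then try to put a tournament structure on each gadget ``carrying the same $\mathbb{F}_2$-labels''. The Duplicator strategy you intend to lift relies on the twist isomorphisms of the CFI graphs, and restricted to a single gadget these twists are fixed-point-free \emph{involutions} (adding a $0/1$-vector of even weight). A tournament admits no automorphism of even order: an order-$2$ automorphism must swap some pair $u\neq v$ and would reverse the arc between them. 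Hence no tournament gadget on the $\mathbb{F}_2$-CFI vertex set can be invariant under the required flips, the ``partial isomorphisms between the tournament gadgets compatible with the CFI twist'' that your strategy needs simply do not exist, and the WL lower bound does not transfer. This is precisely why the paper works with a $\mathbb{Z}_3$-version of CFI: there the twist maps have order $3$, each gadget $M_a(v)$ carries a tournament (the relation $F_a(v)$, defined cyclically via a fixed linear order on the base graph) that is invariant under these twists, and the inter-gadget arcs are oriented by ``CFI-edge versus non-edge'' together with the fixed order, which is again twist-invariant; with this, Duplicator's strategy for the mixed $\mathbb{Z}_3$-CFI graphs carries over verbatim to the tournaments.

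Two further points are not mere bookkeeping. First, your claim that non-isomorphism is ``immediate'' because any tournament isomorphism can be decoded into an isomorphism of the CFI graphs is too quick: the tournament cannot carry the vertex colors that identify the gadgets, so one must prove that every isomorphism of the tournaments respects the gadget classes. The paper does this by an induction along the linear order with an explicit in-degree count inside suffixes of the order, and this count uses that the base graph is $3$-regular. Second, and relatedly, the paper does not use the toroidal grid itself (degree $4$) but a $3$-regular spanning subgraph of it that still contains a $(k+1)\times(k+1)$ grid minor; this keeps each gadget at $9$ vertices and yields the red-degree bound $4\cdot 9-1=35$ when gadgets are contracted, which is where the constant $35$ in the statement comes from. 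With a degree-$4$ base and $\mathbb{F}_2$-gadgets your constants and your decoding argument would both have to be redone, but the unavoidable change is the switch from $\mathbb{F}_2$ to an odd-characteristic CFI construction.
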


We remark that it was known before that the Weisfeiler-Leman algorithm fails to decide tournament isomorphism.
Indeed, Dawar and Kopczynski (unpublished) proved that for every $k \geq 2$ there are non-isomorphic tournaments $U_k$ and $U_k'$ of order $|V(U_k)| = |V(U_k')| = O(k)$ that are not distinguished by the $k$-dimensional Weisfeiler-Leman algorithm.
Theorem \ref{thm:wl-tournament-tww} strengthens this result by constructing tournaments where additionally the twin width is bounded by a fixed constant.

The paper is organized as follows.
After introducing the necessary preliminaries in Section~\ref{sec:preliminaries}, Theorem \ref{thm:tww-isomorphism-intro} is proved in Sections \ref{sec:partitions} and \ref{sec:alg}.
First, we give our main combinatorial arguments in Section~\ref{sec:partitions}.
After that, the mainly group-theoretic isomorphism algorithm of Theorem~\ref{thm:tww-isomorphism-intro} is presented in Section~\ref{sec:alg}.
Theorem~\ref{thm:wl-tournament-tww} is proved in Section~\ref{sec:wl}.
Finally, in Section~\ref{sec:widths} we compare twin width to other width measures for directed graphs.

\section{Preliminaries}
\label{sec:preliminaries}

\subsection{Graphs}

Graphs in this paper are usually directed.
We often emphasize this by calling them ``digraphs''.
However, when we make general statements about graphs, this refers to directed graphs and includes undirected graphs a special case (directed graphs with a symmetric edge relation).
We denote the vertex set of a graph $G$ by $V(G)$ and the edge relation by $E(G)$.
The vertex set $V(G)$ is always finite and non-empty.
The edge relation is always anti-reflexive, that is, graphs are loop-free, and there are no parallel edges.
For a digraph $G$ and a vertex $v \in V(G)$, we denote the set of out-neighbors and in-neighbors of $v$ by $N_+(v)$ and $N_-(v)$, respectively.
Also, the \emph{out-degree} and \emph{in-degree} of $v$ are denoted by $\deg_+(v) \coloneqq |N_+(v)|$ and $\deg_-(v) \coloneqq |N_-(v)|$, respectively.
Furthermore, $E_+(v)$ and $E_-(v)$ denote the set of outgoing and incoming edges into $v$, respectively.
For $X\subseteq V(G)$, we write $G[X]$ to denote the subgraph of $G$ induced on $X$.
For two sets $X,Y \subseteq V(G)$ we write $E_G(X,Y) \coloneqq \{(v,w) \in E(G) \mid v \in X, w \in Y\}$ to denote the set of directed edges from $X$ to $Y$.

Let $G$ be an undirected graph.
A directed graph $\vec{G}$ is an \emph{orientation} of $G$ if, for every undirected edge $\{v,w\} \in E(G)$,
exactly one of $(v,w)$ and $(w,v)$ is an edge of $\vec{G}$, and there are no other edges present in $\vec{G}$.
A \emph{tournament} is an orientation of a complete graph.

A tournament $T$ is \emph{regular} if $\deg_+(v) = \deg_+(w)$ for all $v,w \in V(T)$.
In this case, $\deg_+(v) = \deg_-(v) = \frac{|V(G)|-1}{2}$ for all $v \in V(T)$.
This implies that every regular tournament has an odd number of vertices.

Let $G_1,G_2$ be two graphs.
An \emph{isomorphism} from $G_1$ to $G_2$ is a bijection $\varphi\colon V(G_1) \to V(G_2)$ such that $(v,w) \in E(G_1)$ if and only if $(\varphi(v),\varphi(w)) \in E(G_2)$ for all $v,w \in V(G_1)$.
We write $\varphi\colon G_1) \cong G_2$ to denote that $\varphi$ is an isomorphism from $G_1$ to $G_2$.
Also, $\Iso(G_1,G_2)$ denotes the set of all isomorphisms from $G_1$ to $G_2$.
The graphs $G_1$ and $G_2$ are \emph{isomorphic} if $\Iso(G_1,G_2) \neq \emptyset$.
The \emph{automorphism group} of $G_1$ is $\Aut(G_1) \coloneqq \Iso(G_1,G_1)$.

An \emph{arc coloring} of a digraph $G$ is a mapping $\lambda\colon (E(G) \cup \{(v,v)\mid v\in V(G)\}) \to C$ for some set $C$ of ``colors''.
An \emph{arc-colored graph} is a triple $G=(V,E,\lambda)$, where $(V,E)$ is a graph an $\lambda$ an arc coloring of $(V,E)$.
Isomorphisms between arc-colored graphs are required to preserve the coloring.

\subsection{Partitions and Colorings}

Let $S$ be a finite set.
A \emph{partition} of $S$ is a set $\CP\subseteq 2^S$ whose elements we refer to as \emph{parts},
such that any two parts are mutually disjoint, and the union of all parts is $S$.
A partition $\CP$ \emph{refines} another partition $\CQ$, denoted by $\CP \preceq \CQ$, if for all $P \in \CP$ there is some $Q \in \CQ$ such that $P \subseteq Q$.
We say a partition $\CP$ is \emph{trivial} if $|\CP| = 1$, which means that the only part is $S$, and it is \emph{discrete} if $|P| = 1$ for all $P \in \CP$.

Every mapping $\chi\colon S\to C$, for some set $C$, induces a partition $\CP_\chi$ of $S$ into the sets $\chi^{-1}(c)$ for all $c$ in the range of $\chi$.
In this context, we think of $\chi$ as a ``coloring'' of $S$, the elements $c\in C$ as ``colors'', and the parts $\chi^{-1}(c)$ of the partition as ``color classes''.
If $\chi'\colon S\to C'$ is another coloring, then we say that $\chi$ \emph{refines} $\chi'$, denoted by $\chi \preceq \chi'$, if $\CP_\chi \preceq \CP_{\chi'}$.
The colorings are \emph{equivalent} (we write $\chi \equiv \chi'$) if $\chi \preceq \chi'$ and $\chi \preceq \chi'$, i.e., $\CP_\chi = \CP_{\chi'}$.

\subsection{Twin Width}
\label{sec:tww}

Twin width \cite{BonnetKTW22} is defined for binary relational structures, which in this paper are mostly directed graphs.
We need one distinguished binary relation symbol $\Ered$ that plays a special role in the definition of twin width.
Following \cite{BonnetKTW22}, we refer to elements of $\Ered$ as \emph{red edges}.
For every structure $A$, we assume the relation $\Ered(A)$ to be symmetric and anti-reflexive, that is, the edge relation of an undirected graph, and we refer to the maximum degree of this graph as the \emph{red degree} of $A$.
If $\Ered(A)$ is not explicitly defined, we assume $\Ered(A)=\emptyset$ (and the red degree of $A$ is $0$).

Let $A = (V(A),R_1(A),\dots,R_k(A))$ be a binary relational structure, where $V(A)$ is a non-empty finite vertex set and $R_i(A) \subseteq (V(A))^2$ are binary relations on $V(A)$ (possibly, $R_i=\Ered$ for some $i \in [k]$).
We call a pair $(X,Y)$ of disjoint subsets of $V(A)$ \emph{homogeneous} if for all $x,x'\in X$, and all $y,y'\in Y$ it holds that
\begin{enumerate}[label = (\roman*)]
 \item $(x,y)\in R_i(A) \Leftrightarrow (x',y') \in R_i(A)$ and $(y,x) \in R_i(A) \Leftrightarrow (y',x') \in R_i(A)$ for all $i \in [k]$, and
 \item $(x,y) \notin \Ered(A)$ and $(y,x) \notin \Ered(A)$.
\end{enumerate}
For a partition $\CP$ of $V(A)$, we define $A/\CP$ to be the structure with vertex set $V(A/\CP) \coloneqq \CP$ and relations
\[R_i(A/\CP) \coloneqq \big\{(X,Y) \in \CP^2 \bigmid (X,Y) \text{ is homogeneous and } X \times Y\subseteq R_i(A)\big\}\]
for all $R_i \neq \Ered$, and
\[\Ered(A/\CP) \coloneqq \big\{(X,Y) \in \CP^2 \bigmid (X,Y) \text{ is not homogeneous and } X \neq Y\big\}.\]
A \emph{contraction sequence for $A$} is a sequence of partitions
$\CP_1,\dots,\CP_n$ of $V(A)$ such that $\CP_1 = \{\{v\} \mid v \in V(A)\}$ is the discrete partition, $\CP_n = \{V(A)\}$ is the trivial partition,
and for every $i \in [n-1]$ the partition $\CP_{i+1}$ is obtained from $\CP_i$ by merging two parts, i.e.,
there are distinct $P,P' \in \CP_i$ such that $\CP_{i+1} = \{P \cup P'\} \cup (\CP_{i} \setminus \{P,P'\})$.
The \emph{width} of a contraction sequence $\CP_1,\dots,\CP_n$ of $A$ is the minimum $k$ such that for every $i \in [n]$ the structure $A/\CP_i$ has red degree at most $k$.
The \emph{twin width} of $A$, denoted by $\tww(A)$, is the minimum $k \geq 0$ such that $A$ has a contraction sequence of width $k$.

Note that red edges are introduced as we contract parts of the partitions.
However, the structure $A$ we start with may already have red edges, which then have direct impact on its twin width.
In particular, the twin width of a graph $G$ may be smaller than the twin width of the structure $G^{\red}$ obtained from $G$ by coloring all edges red.
This fact is used later.

We also remark that for our isomorphism algorithms, we never have to compute a contraction sequence of minimum width or the twin width.

We state two simple lemmas on basic properties of twin width.

\begin{lemma}[\cite{BonnetKTW22}]
 \label{lem:tww-hereditary}
 Let ${A}$ be a binary relational structure and $X \subseteq V({A})$.
 Then $\tww({A}[X]) \leq \tww({A})$.
\end{lemma}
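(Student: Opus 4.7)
The plan is to take an optimal contraction sequence for $A$ and ``project'' it onto $X$, then verify that projection does not increase the red degree.

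Concretely, let $\CP_1,\dots,\CP_n$ be a contraction sequence for $A$ of width $k = \tww(A)$. Define
\[\CQ_i \coloneqq \{P \cap X \mid P \in \CP_i,\; P \cap X \neq \emptyset\}\]
for every $i \in [n]$. Clearly $\CQ_1$ is the discrete partition of $X$ and $\CQ_n = \{X\}$. Moreover, when we pass from $\CP_i$ to $\CP_{i+1}$ by merging two parts $P,P' \in \CP_i$, the projection $\CQ_{i+1}$ is either equal to $\CQ_i$ (if at least one of $P,P'$ does not meet $X$) or is obtained from $\CQ_i$ by merging the two parts $P \cap X$ and $P' \cap X$. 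After removing consecutive duplicates, we obtain a legitimate contraction sequence $\CQ_{i_1},\dots,\CQ_{i_m}$ for $A[X]$.

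The main step is to bound the width of this projected sequence. I would prove the following: for every $i \in [n]$ and every $P \in \CP_i$ with $P \cap X \neq \emptyset$, the red degree of the vertex $P \cap X$ in $A[X]/\CQ_i$ is at most the red degree of $P$ in $A/\CP_i$. To see this, observe that the map $P' \cap X \mapsto P'$ is an injection from the parts of $\CQ_i$ (other than $P \cap X$) to the parts of $\CP_i$ (other than $P$), and I would show that whenever $(P\cap X, P'\cap X)$ fails to be homogeneous in $A[X]$, the pair $(P,P')$ fails to be homogeneous in $A$. Indeed, a failure of homogeneity in $A[X]$ is witnessed either by two pairs $(x,y),(x',y') \in (P\cap X) \times (P'\cap X)$ that disagree on some relation $R_j$, or by a red edge between $P\cap X$ and $P'\cap X$; since $A[X]$ inherits all relations (including $\Ered$) from $A$ by restriction to $X$, these same witnesses lie in $A$ and certify that $(P,P')$ is not homogeneous there.

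There is essentially no obstacle beyond being careful with the bookkeeping: one must verify that removing duplicate projections produces a valid contraction sequence (a single-merge step at each position) and that the red degree bound established pointwise for each $i$ survives this thinning. Together, these observations give $\tww(A[X]) \leq k = \tww(A)$.
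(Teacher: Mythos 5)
Your proof is correct. The paper does not give its own argument for this lemma—it is cited directly from Bonnet et al.\ \cite{BonnetKTW22}—and your projection argument (intersect each part with $X$, discard empty intersections and duplicate steps, and note that any witness of non-homogeneity between $P\cap X$ and $P'\cap X$ in $A[X]$ is also a witness for $(P,P')$ in $A$, so the injection $P\cap X\mapsto P$ bounds red degrees) is exactly the standard proof from that source.
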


\begin{lemma}
 \label{lem:twin-width-order}
 Let ${A}$ be a structure over the vocabulary $\tau$.
 Then there is a linear order $<$ on $V({A})$ such that $\tww({A},<) = \tww({A})$.
\end{lemma}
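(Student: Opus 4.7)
The plan is to take any contraction sequence $\CP_1, \dots, \CP_n$ for $A$ of optimal width $k = \tww(A)$ and construct a linear order on $V(A)$ compatible with it. Reading the sequence backwards yields a rooted binary tree $\CT$: the root corresponds to $\CP_n = \{V(A)\}$, the leaves correspond to the singleton parts in $\CP_1$, and the two children of each internal node are the two parts that were merged to form it. Every part $P$ appearing in any $\CP_i$ is then exactly the leaf set of some subtree of $\CT$.

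Given $\CT$, I would arbitrarily designate one child of each internal node as ``left'' and the other as ``right,'' and define $<$ on $V(A)$ as the resulting in-order traversal of the leaves: set $u < v$ whenever $u$ lies in the left subtree of the lowest common ancestor of $u$ and $v$. By construction, the leaf set of every subtree of $\CT$ is a contiguous $<$-interval, so every part $P \in \CP_i$ is an interval. In particular, for any two distinct parts $P, P' \in \CP_i$ one of them lies entirely below the other under $<$, and hence the pair $(P, P')$ is homogeneous with respect to the new binary relation $<$. This is the crux of the argument: adding $<$ to the signature produces no new non-homogeneous pairs in $(A,<)/\CP_i$ relative to $A/\CP_i$, so the red edges of $(A,<)/\CP_i$ coincide with those of $A/\CP_i$. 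Thus the same contraction sequence has width $k$ on $(A,<)$, giving $\tww(A, <) \leq \tww(A)$. The reverse inequality $\tww(A) \leq \tww(A, <)$ is immediate, since any contraction sequence for $(A, <)$ remains a contraction sequence for $A$, and forgetting the relation $<$ can only eliminate non-homogeneous pairs and hence red edges.

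The only delicate step, and the one I would be most careful about, is verifying homogeneity with respect to the strict asymmetric relation $<$ using the definition from Section~\ref{sec:tww}: for disjoint intervals $P, P'$ with $P$ entirely before $P'$, every pair $(x, y) \in P \times P'$ satisfies $x < y$ and no pair $(y,x) \in P' \times P$ satisfies $y < x$, which is exactly what both directions of condition~(i) require. Everything else reduces to routine bookkeeping on $\CT$, and there is no quantitative loss in the width.
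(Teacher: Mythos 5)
Your proof is correct and takes essentially the same route as the paper: the paper re-indexes the parts of each $\CP_i$ so that every merge combines consecutively indexed parts, which is just another way of realizing your merge-tree/in-order-traversal order in which every part of every $\CP_i$ is a $<$-interval and hence every pair of distinct parts is homogeneous for $<$. The trivial reverse inequality $\tww(A)\leq\tww(A,<)$ that you spell out is used implicitly in the paper, so nothing is missing on either side.
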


\begin{proof}
 Let $\CP_1,\dots,\CP_n$ be a contraction sequence for ${A}$.
 Note that for every $i \in [n]$ the partition $\CP_i$ has exactly $n+1-i$ parts $P_{i,1},\dots,P_{i,n+1-i}$.
 We may choose the indices in such a way that if $\CP_{i}$ is obtained from $\CP_{i-1}$ by merging the parts $P_{i-1,j_1}$ and $P_{i-1,j_2}$ to $P_{i,j}=P_{i-1,j_1}\cup P_{i-1,j_2}$ then $j_1=j,j_2=j+1$, $P_{i,k}=P_{i-1,k}$ for $k<j$ and $P_{i,k}=P_{i-1,k+1}$ for $k>j$.
 Let $<$ be the linear order induced by the partition $\CP_1$ (whose parts have size $1$).
 Then for all $i \in [n]$, the parts of $\CP_i$ are consecutive intervals of $<$, which means that all pairs $(P_{i,j},P_{i,k})$ of distinct parts of $\CP_i$ are homogeneous for the relation $<$.
 This implies $\tww({A},<) = \tww({A})$.
\end{proof}

\subsection{Weisfeiler-Leman}
\label{sec:wl-def}

In this section, we describe the $k$-dimensional Weisfeiler-Leman algorithm ($k$-WL).
The algorithm has been originally introduced in its $2$-dimensional form by Weisfeiler and Leman \cite{WeisfeilerL68} (see also \cite{Weisfeiler76}).
The $k$-dimensional version, coloring $k$\nobreakdash-tuples, was introduced later by Babai and Mathon (see \cite{CaiFI92}).

Fix $k \geq 2$, and let $G$ be a graph.
For $i \geq 0$, we describe the coloring $\WLit{k}{i}{G}$ of $(V(G))^k$ computed in the $i$-th iteration of $k$-WL.
For $i = 0$, each tuple is colored with the isomorphism type of the underlying ordered induced subgraph.
So if $H$ is another graph and $\bar v = (v_1,\dots,v_k) \in (V(G))^k$, $\bar w = (w_1,\dots,w_k) \in (V(H))^k$,
then $\WLit{k}{0}{G}(\bar v) = \WLit{k}{0}{H}(\bar w)$ if and only if, for all $i,j \in [k]$, it holds that $v_i = v_j \Leftrightarrow w_i = w_j$ and $(v_i,v_j) \in E(G) \Leftrightarrow (w_i,w_j) \in E(H)$.
If $G$ and $H$ are arc-colored, then the colors are also taken into account.

Now let $i \geq 0$.
For $\bar v = (v_1,\dots,v_k)$ we define
\[\WLit{k}{i+1}{G}(\bar v) \coloneqq \Big(\WLit{k}{i}{G}(\bar v), \CM_{i}(\bar v)\Big)\]
where
\[\CM_i(\bar v) \coloneqq \Big\{\!\Big\{ \big(\WLit{k}{i}{G}(\bar v[w/1]),\dots,\WLit{k}{i}{G}(\bar v[w/k])\big) \Bigmid w \in V(G) \Big\}\!\Big\}\]
and $\bar v[w/i] \coloneqq (v_1,\dots,v_{i-1},w,v_{i+1},\dots,v_k)$ is the tuple obtained from $\bar v$ b replacing the $i$-th entry by $w$ (and $\{\!\{\dots\}\!\}$ denotes a multiset).

Clearly, $\WLit{k}{i+1}{G} \preceq \WLit{k}{i}{G}$ for all $i \geq 0$.
So there is a unique minimal $i_\infty \geq 0$ such that $\WLit{k}{i_\infty+1}{G} \equiv \WLit{k}{i_\infty}{G}$ and we write $\WL{k}{G} \coloneqq \WLit{k}{i_\infty}{G}$ to denote the corresponding coloring.

The $k$-dimensional Weisfeiler-Leman algorithm takes as input a (possibly colored) graph $G$ and outputs (a coloring that is equivalent to) $\WL{k}{G}$.
This can be done in time $O(k^2n^{k+1} \log n)$ \cite{ImmermanL90}.

Let $H$ be a second graph.
The $k$-dimensional Weisfeiler-Leman algorithm \emph{distinguishes} $G$ and $H$ if there is a color $c \in C$ such that
\[\Big|\Big\{ \bar v \in (V(G))^k \Bigmid \WL{k}{G}(\bar v) = c \Big\}\Big| \neq \Big|\Big\{ \bar w \in (V(H))^k \Bigmid \WL{k}{H}(\bar w) = c \Big\}\Big|.\]
We write $G \simeq_k H$ to denote that $k$-WL does not distinguish between $G$ and $H$.

A graph $G$ is \emph{$k$-WL-homogeneous} if $\WL{k}{G}(v,\dots,v) = \WL{k}{G}(w,\dots,w)$ for all $v,w\in V(G)$.

\subsection{Group Theory}
\label{sec:groups}

For a general background on group theory we refer to \cite{Rotman99}, whereas background on permutation groups can be found in \cite{DixonM96}.
Also, basics facts on algorithms for permutation groups are given in \cite{Seress03}.

\paragraph{Basics for Permutation Groups.}

A \emph{permutation group} acting on a set $\Omega$ is a subgroup $\Gamma \leq \Sym(\Omega)$ of the symmetric group.
The size of the permutation domain $\Omega$ is called the \emph{degree} of $\Gamma$.
If $\Omega = [n] \coloneqq \{1,\dots,n\}$, then we also write $S_n$ instead of $\Sym(\Omega)$.
For $A \subseteq \Omega$ and $\gamma \in \Gamma$ let $\gamma(A) \coloneqq \{\gamma(\alpha) \mid \alpha \in A\}$.
The set $A$ is \emph{$\Gamma$-invariant} if $\gamma(A) = A$ for all $\gamma \in \Gamma$.

Let $\theta\colon \Omega \rightarrow \Omega'$ be a bijection.
We write $\Gamma\theta \coloneqq \{\gamma\theta \mid \gamma \in \Gamma\}$ for the set of bijections from $\Omega$ to $\Omega'$ obtained from concatenating a permutation from $\Gamma$ and $\theta$.
Note that $(\gamma\theta)(\alpha) = \theta(\gamma(\alpha))$ for all $\alpha \in \Omega$.

A set $S \subseteq \Gamma$ is a \emph{generating set} for $\Gamma$ if for every $\gamma \in \Gamma$ there are $\delta_1,\dots,\delta_k \in S$ such that $\gamma = \delta_1 \dots \delta_k$.
In order to perform computational tasks for permutation groups efficiently the groups are represented by generating sets of small size (i.e., polynomial in the size of the permutation domain).
Indeed, most algorithms are based on so-called strong generating sets,
which can be chosen of size quadratic in the size of the permutation domain of the group and can be computed in polynomial time given an arbitrary generating set (see, e.g., \cite{Seress03}).

\paragraph{Group-Theoretic Methods for Isomorphism Testing.}

In this work, we shall be interested in a particular subclass of permutation groups.
Let $\Gamma$ be a group and let $\gamma,\delta \in \Gamma$.
The \emph{commutator} of $\gamma$ and $\delta$ is $[\gamma,\delta] \coloneqq \gamma^{-1}\delta^{-1}\gamma\delta$.
The \emph{commutator subgroup $[\Gamma,\Gamma]$} of $\Gamma$ is the unique subgroup of $\Gamma$ generated by all commutators $[\gamma,\delta]$ for $\gamma,\delta \in \Gamma$.
Note that $[\Gamma,\Gamma]$ is a normal subgroup of $\Gamma$.
The \emph{derived series of $\Gamma$} is the sequence of subgroups $\Gamma^{(0)} \trianglerighteq \Gamma^{(1)} \trianglerighteq \Gamma^{(2)} \trianglerighteq \dots$ where $\Gamma^{(0)} \coloneqq \Gamma$ and $\Gamma^{(i+1)} \coloneqq [\Gamma^{(i)},\Gamma^{(i)}]$ for all $i \geq 0$.
A group $\Gamma$ is \emph{solvable} if there is some $i \geq 0$ such that $\Gamma^{(i)}$ is the trivial group (i.e., it only contains the identity element).

By the Feit-Thompson Theorem every group of odd order is solvable.
Also, for every tournament $T$, the automorphism group $\Aut(T)$ has odd order.
Indeed, $\Aut(T)$ cannot contain an involution (a permutation of order $2$), since every involution swaps some pair $v,w$ of distinct vertices and exactly one of $(v,w), (w,v)$ is an edge of $T$.
Together, we obtain the following:

\begin{theorem}
 \label{thm:aut-solvable}
 Let $T$ be a tournament.
 Then $\Aut(T)$ is solvable.
\end{theorem}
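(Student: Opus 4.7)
The plan is to reduce the statement to the Feit--Thompson theorem, which is explicitly invoked in the paragraph preceding the theorem. Since Feit--Thompson guarantees solvability of every group of odd order, it suffices to show that $|\Aut(T)|$ is odd for every tournament $T$. By Cauchy's theorem, this in turn reduces to showing that $\Aut(T)$ contains no element of order two.

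So the first step is to suppose, for contradiction, that $\sigma \in \Aut(T)$ satisfies $\sigma \neq \id$ and $\sigma^2 = \id$. Then $\sigma$ acts on $V(T)$ as a product of fixed points and disjoint transpositions, and since $\sigma \neq \id$ there must exist at least one nontrivial orbit $\{v,w\}$ with $\sigma(v) = w$ and $\sigma(w) = v$.

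Next I would derive the contradiction from the defining property of a tournament. Because $T$ is a tournament, exactly one of $(v,w)$ and $(w,v)$ lies in $E(T)$; assume without loss of generality that $(v,w) \in E(T)$. Applying the automorphism $\sigma$ to this edge yields $(\sigma(v),\sigma(w)) = (w,v) \in E(T)$, contradicting the fact that $T$ has no parallel or antiparallel edges. Hence $\Aut(T)$ has no involutions, so $|\Aut(T)|$ is odd by Cauchy, and Feit--Thompson~\cite{FeitT63} then yields solvability.

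There is essentially no obstacle here beyond the invocation of Feit--Thompson itself, which is a deep external result; the combinatorial content is a one-line edge-reversal argument. The only subtlety worth flagging is the reliance on Cauchy's theorem (a standard but nontrivial consequence of Sylow theory) to pass from ``no element of order two'' to ``odd order'', and the fact that the argument uses the full tournament property: it would fail for general oriented graphs, where a transposition $(v,w)$ with neither $(v,w)$ nor $(w,v)$ an edge causes no contradiction.
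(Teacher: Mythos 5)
Your proof is correct and follows the same route the paper indicates: the automorphism group of a tournament has odd order, and Feit--Thompson then gives solvability. The paper leaves the odd-order observation implicit (citing it as classical), while you fill it in with the standard no-involution argument plus Cauchy's theorem, which is exactly the intended justification.
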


Next, we state several basic group-theoretic algorithms for isomorphism testing.

\begin{theorem}[{\cite[Theorem 4.1]{BabaiL83}}]
 \label{thm:ti}
 There is an algorithm that, given two tournaments $T_1$ and $T_2$, computes $\Iso(T_1,T_2)$ in time $n^{O(\log n)}$.
\end{theorem}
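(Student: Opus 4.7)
The plan is to reduce the isomorphism problem to the computation of automorphism groups and then exploit the solvability of tournament automorphism groups (Theorem~\ref{thm:aut-solvable}) via Luks' framework. Given $T_1,T_2$, it suffices to compute $\Aut(T)$ for a single tournament $T$: by the standard identity $\Iso(T_1,T_2) = \varphi_0 \cdot \Aut(T_1)$ for any fixed $\varphi_0 \in \Iso(T_1,T_2)$, one automorphism-group computation together with one witness-isomorphism computation (which itself can be phrased as an automorphism problem on a combined, appropriately colored tournament) yields the entire coset $\Iso(T_1,T_2)$.

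For computing $\Aut(T)$ I would proceed by recursive individualization. Pick a vertex $v \in V(T)$. Its stabilizer in $\Aut(T)$ preserves both $N_+(v)$ and $N_-(v)$ setwise. Iterating over the (at most $n$) candidate images of $v$, recursively compute the isomorphism cosets between the induced sub-tournaments on $N_+(v)$ and $N_-(v)$ and the analogous sub-tournaments of the target. These two partial cosets must then be merged so that the cross-edges between $N_+(v)$ and $N_-(v)$ are respected. This merging is precisely a coset-intersection task, and by Theorem~\ref{thm:aut-solvable} all groups arising in this recursion are sub-quotients of tournament automorphism groups and hence solvable. Luks' coset-intersection algorithm then performs each combination step in polynomial time.

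The main obstacle is the recursion analysis: a naive individualization gives a recurrence of the form $T(n) \leq n \cdot (T(n_+) + T(n_-))$ with $n_+ + n_- = n-1$, which without further control has depth $\Omega(n)$ and blows up to a factorial bound. Babai and Luks overcome this by a more refined splitting scheme that forces balanced recursion: one either isolates a $\Gamma$-invariant subset of roughly half the vertices to recurse on, or pre-processes (e.g., via canonical colour refinement based on an initial pivot) so that the two sub-instances each have size at most $\lceil n/2 \rceil$. With the balance in place, the recurrence becomes $T(n) \leq n^{O(1)} \cdot T(\lceil n/2 \rceil)$, which telescopes to $T(n) = n^{O(\log n)}$ and yields the claimed bound. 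The running-time guarantee for the coset-intersection subroutine, which is the place where solvability is essential, is what keeps the $n^{O(1)}$ factor at each level of the recursion small enough for this analysis to go through.
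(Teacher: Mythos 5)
This theorem is not proved in the paper at all: it is imported verbatim from Babai--Luks (Theorem 4.1 of their paper), so your sketch has to be judged against the known argument rather than against anything in this text. Your overall architecture is the right one: solvability of tournament automorphism groups via Feit--Thompson (Theorem~\ref{thm:aut-solvable}), individualization of a vertex $v$, recursion on the subtournaments induced on $N_+(v)$ and $N_-(v)$, and a combination step performed inside a coset of a solvable group by Luks-type machinery (this is exactly what Theorem~\ref{thm:gi-solvable-group} provides; calling it ``coset intersection'' is loose but harmless, since for solvable groups both problems are polynomial and interreducible in this setting).

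The genuine gap is the recursion analysis, which you yourself identify as the main obstacle and then resolve with a claim that is not true. There is in general no way to ``isolate a $\Gamma$-invariant subset of roughly half the vertices'', and canonical colour refinement does not yield parts of size at most $\lceil n/2\rceil$: a canonical partition can perfectly well have one class of size $n-O(1)$. The actual Babai--Luks dichotomy is different and exploits a tournament-specific fact. If the tournament is regular (equivalently, if refinement produces a single vertex colour), then for \emph{every} vertex $v$ one has $|N_+(v)|=|N_-(v)|=(n-1)/2$, so individualizing $v$ \emph{automatically} gives a balanced split; this is the only place where the factor $n$ for guessing the image of $v$ is paid, and it always comes with a halving of the instance. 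If the tournament is irregular, one recurses on the canonical colour classes, which may be arbitrarily unbalanced, but \emph{without} individualizing, so no factor $n$ is incurred; the cross-class edges are then handled in polynomial time by the solvable-group subroutine. The resulting recurrence is therefore not $T(n)\le n^{O(1)}\cdot T(\lceil n/2\rceil)$, but a recursion tree in which every root-to-leaf path pays a factor $O(n)$ at most $O(\log n)$ times (once per halving step) while all other nodes contribute $\mathrm{poly}(n)$ plus a sum over parts; it is this accounting that telescopes to $n^{O(\log n)}$. As written, your balanced-splitting claim is unsupported, your recurrence does not cover the unbalanced canonical-partition case, and hence the stated time bound does not follow from your argument.
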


Note that $\Iso(T_1,T_2)$ may be of size exponential in the number of vertices of $T_1$ and $T_2$.
However, if $T_1$ and $T_2$ are isomorphic (i.e., $\Iso(T_1,T_2) \neq \emptyset$), we have $\Iso(T_1,T_2) = \Aut(T_1) \varphi$ where $\varphi \in \Iso(T_1,T_2)$ is an arbitrary isomorphism from $T_1$ to $T_2$.
Hence, the set $\Iso(T_1,T_2)$ can be represented by a generating set for $\Aut(T_1)$ of size polynomial in $|V(T_1)|$ and a single element $\varphi \in \Iso(T_1,T_2)$.
Let us stress at this point that all isomorphism sets computed by the various algorithms discussed in this work are represented in this way.

Let $G_1$ and $G_2$ be two (colored) directed graphs.
Also let $\Gamma \leq \Sym(V(G_1))$ be a permutation group and let $\theta\colon V(G_1) \to V(G_2)$ be a bijection.
We define
\[\Iso_{\Gamma\theta}(G_1,G_2) \coloneqq \Iso(G_1,G_2) \cap \Gamma\theta = \{\varphi \in \Gamma\theta \mid \varphi\colon G_1 \cong G_2\}\]
and $\Aut_\Gamma(G_1) \coloneqq \Iso_\Gamma(G_1,G_1)$.
Note that $\Aut_\Gamma(G_1) \leq \Gamma$ and moreover, if $\Iso_{\Gamma\theta}(G_1,G_2) \neq \emptyset$, then $\Iso_{\Gamma\theta}(G_1,G_2) = \Aut_\Gamma(G_1)\varphi$ where $\varphi \in \Iso_{\Gamma\theta}(G_1,G_2)$ is an arbitrary isomorphism from $G_1$ to $G_2$.

\begin{theorem}[{\cite[Corollary 3.6]{BabaiL83}}]
 \label{thm:gi-solvable-group}
 Let $G_1 = (V_1,E_1,\lambda_1)$ and $G_1 = (V_2,E_2,\lambda_2)$ be two arc-colored directed graphs.
 Also let $\Gamma \leq \Sym(V_1)$ be a solvable group and $\theta\colon V_1 \to V_2$ a bijection.
 Then $\Iso_{\Gamma\theta}(G_1,G_2)$ can be computed in polynomial time.
\end{theorem}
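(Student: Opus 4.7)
The plan is to reduce arc-colored digraph isomorphism under $\Gamma\theta$ to a string isomorphism problem and then apply Luks's divide-and-conquer recursion on the group, using the fact that primitive solvable permutation groups have polynomially bounded order in their degree.

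First, I would use $\theta$ to identify $V_2$ with $V_1$ and encode each graph $G_i = (V_i, E_i, \lambda_i)$ as a string $x_i\colon V_1 \times V_1 \to C \cup \{*\}$, where $x_i(v,w) = \lambda_i(v,w)$ whenever $(v,w) \in E_i$ or $v = w$, and $x_i(v,w) = *$ otherwise. A bijection $\varphi \in \Gamma\theta$ lies in $\Iso(G_1,G_2)$ if and only if the diagonal permutation of $V_1 \times V_1$ induced by the $\Gamma$-component of $\varphi$ sends $x_2$ to $x_1$. The acting group $\Gamma'$ on $V_1 \times V_1$ is the image of $\Gamma$ under the diagonal embedding into $\Sym(V_1 \times V_1)$; as a homomorphic image of a solvable group it is still solvable. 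The task therefore reduces to string isomorphism under a solvable permutation group.

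Second, I would solve this string isomorphism problem by the standard Luks recursion on $\Gamma'$, maintaining every intermediate $\Iso$-set as a coset $\Delta \pi$ via a (strong) generating set for $\Delta$ together with a single witness $\pi$. In the intransitive case I would pick an orbit $O$, recursively compute the partial isomorphisms on $O$, and extend each one to the complement, combining the results through a standard double-coset computation. In the transitive case I would take a minimal $\Gamma'$-block system $\CB$ and let $N \trianglelefteq \Gamma'$ be the kernel of the induced action on $\CB$; then $\Gamma'/N$ acts primitively and is still solvable, so by Pálfy's theorem $|\Gamma'/N| \leq |\CB|^{O(1)}$. I would enumerate coset representatives $\sigma$ of $N$ in $\Gamma'$ and, for each one, solve the subproblem for $N\sigma$ recursively; since $N$ fixes every block setwise, this further decomposes into independent per-block string isomorphism problems, on which I recurse. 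Undoing the reduction at the end yields $\Iso_{\Gamma\theta}(G_1, G_2)$ as $\Aut_\Gamma(G_1)\varphi$ for some witness $\varphi$.

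The main obstacle is the running-time analysis rather than a single combinatorial idea: at every transitive step the number of enumerated cosets is $|\CB|^{O(1)}$ while the support shrinks by a factor $|\CB|$, and the depth of the block-refinement hierarchy is $O(\log n)$. It is precisely Pálfy's polynomial bound on primitive solvable groups that makes this schedule telescope to $n^{O(1)}$ rather than to the $n^{O(\log n)}$ bound of Theorem~\ref{thm:ti}; checking that the double-coset bookkeeping in the intransitive case and the orbit/block-system computations from strong generating sets all fit within this budget is the technical core of the argument.
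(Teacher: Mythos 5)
The paper does not prove this statement at all: it is imported verbatim as \cite[Corollary 3.6]{BabaiL83}, so there is no internal proof to compare against. Your sketch is, in outline, the standard proof of that cited result, and it is essentially correct: reduce to string isomorphism on $V_1\times V_1$ under the induced (still solvable) action of $\Gamma$, run Luks's orbit/block recursion representing every intermediate isomorphism set as a coset given by generators plus one witness, and use P\'alfy's polynomial bound on primitive solvable groups so that the transitive case only branches over $|\CB|^{O(1)}$ cosets of the block kernel $N$; the recurrence then telescopes to polynomial time, which is exactly where solvability (as opposed to the $n^{O(\log n)}$ bound of Theorem~\ref{thm:ti}) enters. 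One imprecision worth fixing: after passing to a coset $N\sigma$, the per-block subproblems are \emph{not} independent in general, since $N$ need not split as a direct product of its restrictions to the blocks; the orbits of $N$ (which lie inside blocks) must be processed sequentially by the intransitive case, carrying the shrinking coset along, rather than solved separately and combined. This does not change the recursion tree or the running-time analysis, but stating it as independence would make the correctness argument for the returned coset incomplete. Also note that, as in the paper's conventions, $(\gamma\theta)(\alpha)=\theta(\gamma(\alpha))$, so one should be careful (as a pure bookkeeping matter) about which of the two strings the $\Gamma$-part is required to map onto the other after identifying $V_2$ with $V_1$ via $\theta$.
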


A \emph{hypergraph} is a pair $\CH = (V,\CE)$ where $V$ is a finite, non-empty set of vertices and $\CE \subseteq 2^V$ is a subset of the powerset of $V$.
Two hypergraphs $\CH_1 = (V_1,\CE_1)$ and $\CH_2 = (V_2,\CE_2)$ are \emph{isomorphic} if there is a bijection $\varphi\colon V_1 \to V_2$ such that $E \in \CE_1$ if and only if $\varphi(E) \in \CE_2$ for all subsets $E \subseteq V$.
We write $\varphi\colon \CH_1 \cong \CH_2$ to denote that $\varphi$ is an \emph{isomorphism} from $\CH_1$ to $\CH_2$.
As before, we write $\Iso(\CH_1,\CH_2)$ to denote the set of all isomorphisms from $\CH_1$ to $\CH_2$.
Also, for a permutation group $\Gamma \leq \Sym(V_1)$ and a bijection $\theta\colon V(G_1) \to V(G_2)$,
we define
\[\Iso_{\Gamma\theta}(\CH_1,\CH_2) \coloneqq \Iso(\CH_1,\CH_2) \cap \Gamma\theta = \{\varphi \in \Gamma\theta \mid \varphi\colon \CH_1 \cong \CH_2\}\]
and $\Aut_\Gamma(\CH_1) = \Iso_\Gamma(\CH_1,\CH_1)$.

\begin{theorem}[\cite{Miller83}]
 \label{thm:hi-solvable-group}
 Let $\CH_1 = (V_1,\CE_1)$ and $\CH_2 = (V_2,\CE_2)$ be two hypergraphs.
 Also let $\Gamma \leq \Sym(V_1)$ be a solvable group and $\theta\colon V_1 \to V_2$ a bijection.
 Then $\Iso_{\Gamma\theta}(\CH_1,\CH_2)$ can be computed in polynomial time.
\end{theorem}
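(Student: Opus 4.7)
The plan is to adapt Luks' framework for bounded-valence graph isomorphism to hypergraphs, exploiting solvability to control the recursion. First, standard reductions let me focus on the automorphism version of the problem for a single hypergraph: pulling $\CE_2$ back along $\theta^{-1}$ places both edge sets on the common vertex set $V_1$, and a standard disjoint-union construction then turns the $\Gamma$-isomorphism problem on two hypergraphs into a $\Gamma$-automorphism problem on a single hypergraph of twice the size.

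Next, I would recurse on the pair $(|\Gamma|,|V_1|)$ using the orbit and block decompositions of $\Gamma$. If $\Gamma$ is intransitive, label each hyperedge $E \in \CE_1$ by the tuple $(E\cap\Omega_1,\dots,E\cap\Omega_r)$ of its intersections with the $\Gamma$-orbits and process the orbits one by one, combining the partial solutions by coset intersection. If $\Gamma$ is transitive, choose a minimal block system $\CB$ for $\Gamma$ and let $N \trianglelefteq \Gamma$ be the kernel of the action of $\Gamma$ on $\CB$. Since $N$ is intransitive on $V_1$ (its orbits are precisely the blocks in $\CB$), I can recursively compute $\Aut_N(\CH)$ and then lift to $\Gamma$ by iterating over representatives of $\Gamma/N$. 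The key point is that $\Gamma/N \leq \Sym(\CB)$ acts primitively and is solvable, so by a classical structure theorem (for instance P\'alfy's bound) its order is polynomial in $|\CB|$ and its elements can be enumerated explicitly.

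The main obstacle is the hypergraph-specific step: unlike in Luks' original graph setting where edges are pairs, a single hyperedge may meet arbitrarily many blocks of $\CB$, so one cannot simply quotient $\CH$ by $\CB$. I would overcome this by classifying hyperedges according to their \emph{profile} on $\CB$, namely the family of nonempty intersections $E \cap B$ for $B \in \CB$ tagged with the subset of $\CB$ they occupy, and observing that two hyperedges with incompatible profiles cannot be mapped to each other under any element of $\Gamma$. After fixing a candidate permutation in $\Gamma/N$ (polynomially many choices) and the induced matching of hyperedges to blockwise traces, the remaining task decouples into block-local instances handled by the inner recursion on $N$.

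Finally, I would measure the recursion to get the polynomial bound: along the $N$-branch the group order strictly decreases, along the block-restriction branch the vertex count strictly decreases, and the branching at each primitive-solvable layer is polynomial in the current block count. Since solvability passes to subgroups and quotients, every recursive call stays within the hypothesis of the theorem, which closes the induction and yields the claimed polynomial-time bound for computing $\Iso_{\Gamma\theta}(\CH_1,\CH_2)$.
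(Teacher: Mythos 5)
The paper does not prove this statement at all; it is imported as a black box from Miller's 1983 work, so the only question is whether your sketch would stand on its own. Its high-level skeleton is indeed the standard one (Luks-style recursion over orbits and blocks, with solvability entering via P\'alfy's polynomial bound on primitive solvable groups, so that the top action can be enumerated), but the hypergraph-specific step --- which is exactly the part that makes Miller's theorem more than a corollary of Luks --- is handled incorrectly. The set $\Iso_{\Gamma\theta}(\CH_1,\CH_2)$ consists of those $\varphi$ for which the \emph{single induced matching} $E \mapsto \varphi(E)$ maps $\CE_1$ onto $\CE_2$; this matching couples the orbits (and the blocks) to each other. Requiring separately, for each orbit $\Omega$, that $\varphi$ map the multiset of traces $\{E \cap \Omega\}$ correctly, and then intersecting the resulting cosets, is strictly weaker. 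Concretely, take $V_1 = V_2 = \{a,b,c,d\}$, $\Gamma = \langle (a\,b)(c\,d)\rangle$ with orbits $\{a,b\}$ and $\{c,d\}$, $\theta = \id$, $\CE_1 = \{\{a,c\},\{b,d\}\}$ and $\CE_2 = \{\{a,d\},\{b,c\}\}$: both elements of $\Gamma$ map the trace multisets on each orbit correctly, yet $\Iso_\Gamma(\CH_1,\CH_2) = \emptyset$. The same phenomenon defeats the imprimitive step: after you fix an element of the (small, solvable, primitive) quotient $\Gamma/N$ on the blocks, the hyperedge matching is \emph{not} thereby fixed, and the residual problem does not ``decouple into block-local instances,'' because a hyperedge meeting several blocks must be sent to one and the same hyperedge in every block it meets. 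Two hyperedges may agree on one block and differ on another, so the blockwise subproblems share constraints.

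Repairing this is precisely the content of Miller's argument (and of later expositions such as Neuen's treatment of hypergraph isomorphism for groups with restricted composition factors): one has to carry along, through the recursion, a coloring/partition of the hyperedge traces on the already-processed part and recurse on a \emph{generalized} problem in which these colors are themselves permuted consistently, rather than solving independent hypergraph instances and combining them by coset intersection. Your orbit-labelling idea points in the right direction, but as written the ``combine by coset intersection'' and ``decouples into block-local instances'' steps are where the proof actually lives, and in their current form they are false. (The preliminary reductions --- pulling $\CE_2$ back along $\theta$, and the disjoint-union trick inside a solvable wreath product to pass to an automorphism problem --- are fine.)
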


\paragraph{Wreath Products.}

Finally, we describe wreath products of groups which repeatedly appear within the main algorithm of this paper.
For simplicity, we focus on the application cases appearing in this work.

Let $G_1,\dots,G_\ell$ be directed graphs with pairwise disjoint vertex sets $V_1,\dots,V_\ell$.
Also assume that $G_i \cong G_j$ for all $i,j \in [\ell]$.
Then
\[\Aut(G_j) = \varphi^{-1}\Aut(G_i)\varphi \coloneqq \{\varphi^{-1}\gamma\varphi \mid \gamma \in \Aut(G_i)\}\]
for all $\varphi \in \Iso(G_i,G_j)$.
Now let $\Delta \leq S_\ell$ be a permutation group with domain $\{1,\dots,\ell\}$.
Also let $V \coloneqq V_1 \cup \dots \cup V_\ell$.
We define $\Gamma \leq \Sym(V)$ to be the permutation group containing all elements $\gamma \in \Sym(V)$ such that there are $\delta \in \Delta$ and, for each $i \in [\ell]$, an isomorphism $\varphi_i \in \Iso(G_i,G_{\delta(i)})$ such that for all $v \in V$ it holds that
\[\gamma(v) = \varphi_i(v)\]
where $i \in [\ell]$ is the unique index such that $v \in V_i$.
The group $\Gamma$ is the \emph{wreath product} of $\Aut(G_1)$ by $\Delta$.
Observe that $\Gamma$ is solvable if and only if $\Delta$ and $\Aut(G_1)$ are solvable.

Now let $S_1,\dots,S_\ell$ be generating sets for $\Aut(G_1),\dots,\Aut(G_\ell)$, respectively.
Also let $S_\Delta$ be a generating set for $\Delta$.
Then we obtain a generating set $S_\Gamma$ as follows.
For every $\gamma_i \in S_i$ we define $\gamma_i^*$ to be the extension of $\gamma_i$ to $V$ which fixes all elements outside of $V_i$.
More formally, $\gamma_i^*(v) = \gamma_i(v)$ for every $v \in V_i$, and $\gamma_i^*(v) = v$ for every $v \in V \setminus V_i$.
Let $S_i^* \coloneqq \{\gamma_i^* \mid \gamma_i \in S_i\}$.

Additionally, for every $\delta \in S_\Delta$, we fix an element $\varphi_i \in \Iso(G_i,G_{\delta(i)})$ for every $i \in [\ell]$.
We set $\delta^*(v) \coloneqq \varphi_i(v)$ for every $v \in V$ where $i \in [\ell]$ is the unique index such that $v \in V_i$.
Let $S_\Delta^* \coloneqq \{\delta^* \mid \delta \in S_\Delta\}$.
Then
\[S_\Gamma \coloneqq S_1^* \cup \dots \cup S_\ell^* \cup S_\Delta^*\]
is a generating of $\Gamma$.
Note that $S_\Gamma$ can be computed in polynomial time given the generating sets $S_1,\dots,S_\ell$ and $S_\Delta$.

\section{Small Degree Partition Sequences}
\label{sec:partitions}

In the following, we design an isomorphism test for tournaments of twin width $k$ which runs in time $k^{O(\log k)}n^{O(1)}$.
On a high level, the algorithm essentially proceeds in three phases.
First, we use well-established group-theoretic methods going back to \cite{BabaiL83,Luks82} to reduce to the case where both input tournaments are $2$-WL-homogeneous (without increasing the twin width).
In the second step, we identify a substructure of an input tournament $T$ (that is $2$-WL-homogeneous) that has some kind of bounded-degree property.
More concretely, we apply the $2$-dimensional Weisfeiler-Leman algorithm and compute a sequence of colors $c_1,\dots,c_\ell$ in the image of the $2$-WL coloring $\WL{2}{T}$ so that the subgraph induced by all edges with a color from $c_1,\dots,c_\ell$ has a certain type of bounded-degree property.
After that, we rely on the computed bounded-degree structure to determine isomorphisms based on the group-theoretic graph isomorphism machinery.
Similar tools have also been used in \cite{ArvindPR25} to solve isomorphism of $k$-spanning tournaments.
However, as we shall see below, the bounded-degree property guaranteed by the second step is weaker than the notion of $k$-spanning tournaments, which requires us to further extend the methods from~\cite{ArvindPR25}.

In this section, we implement the second phase and prove the key combinatorial lemma (Lemma~\ref{lem:partition-sequence}) underlying our isomorphism algorithm.
Our arguments rely on the notion of \emph{mixed neighbors} for a pair of vertices.
For a pair $v,w\in V(T)$ of vertices we let
\begin{equation}
 M(v,w) \coloneqq \Big(N_-(v)\cap N_+(w)\Big)\cup \Big(N_+(v)\cap N_-(w)\Big).
\end{equation}
We call the elements of $M(v,w)$ the \emph{mixed neighbors} of $(v,w)$,
and we call $\md(v,w) \coloneqq |M(v,w)|$ the \emph{mixed degree} of $(v,w)$.
The following simple observation links the mixed degree to twin width.

\begin{observation}
 \label{obs:edge-small-mixed-degree}
 There is an edge $(v,w) \in E(T)$ such that $\md(v,w) \leq \tww(T)$.
\end{observation}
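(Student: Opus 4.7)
The plan is to exploit the very first contraction step of an optimal contraction sequence for $T$. Let $\CP_1,\dots,\CP_n$ be a contraction sequence for $T$ of width $k = \tww(T)$. Since $\CP_1$ is the discrete partition, the first contraction produces $\CP_2$ by merging two singletons $\{v\}$ and $\{w\}$ into a single part $P=\{v,w\}$, while all other parts remain singletons.

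Next I would compute the red degree of $P$ in $T/\CP_2$. For any third vertex $u$, the pair $(P,\{u\})$ contributes a red edge precisely when it is not homogeneous. Unfolding the homogeneity condition (using that the only binary relation besides $\Ered$ is $E(T)$), non-homogeneity is equivalent to at least one of the equivalences $(v,u)\in E(T)\Leftrightarrow(w,u)\in E(T)$ or $(u,v)\in E(T)\Leftrightarrow(u,w)\in E(T)$ failing. Since $T$ is a tournament, exactly one direction is present between $u$ and each of $v,w$, so these two conditions are in fact equivalent and boil down to $u$ lying in $N_-(v)\cap N_+(w)$ or in $N_+(v)\cap N_-(w)$, i.e.\ $u\in M(v,w)$. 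Hence the red degree of $P$ in $T/\CP_2$ equals $\md(v,w)$.

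Because the sequence has width $k$, every vertex of $T/\CP_2$ has red degree at most $k$, so $\md(v,w)\le k=\tww(T)$. Finally, since $T$ is a tournament and $v\neq w$, exactly one of $(v,w)$ and $(w,v)$ lies in $E(T)$; by the symmetry $M(v,w)=M(w,v)$ of the mixed-neighbor set, we may name the endpoints so that $(v,w)\in E(T)$, which gives the desired edge. The argument is essentially immediate, so there is no genuine obstacle—the only small care needed is to verify the equivalence of the two homogeneity conditions on tournaments, which is what makes $\md(v,w)$ (rather than a quantity twice as large) equal to the red degree at $P$.
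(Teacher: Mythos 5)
Your proof is correct and follows essentially the same route as the paper: take the two vertices merged in the first contraction step, observe that the red degree of the merged part in $T/\CP_2$ is exactly $\md(v,w)$ (your unfolding of the homogeneity condition for tournaments is the right justification), and use $\md(v,w)=\md(w,v)$ to orient the pair as an edge. The paper's proof is just a terser statement of the same argument.
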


\begin{proof}
 Let $k \coloneqq \tww(T)$ and let $\CP_1,\dots,\CP_n$ be a contraction sequence of $T$ of width $k$.
 Let $\{v,w\}$ be the unique $2$-element part in $\CP_2$.
 Then $\md(v,w)=\md(w,v) \leq k$, and either $(v,w)\in E(T)$ or $(w,v)\in E(T)$.
\end{proof}

In the following, let $G_T$ be the directed graph with vertex set $V(G_T) \coloneqq V(T)$ and edge set $E(G_T) \coloneqq \{(v,w) \in E(T) \mid \md(v,w) \leq \tww(T)\}$.
The next lemma implies that $G_T$ has maximum out-degree at most $2\cdot\tww(T) + 1$.

\begin{lemma}
 \label{lem:mixed-degree-few-neighbors}
 Suppose $k \geq 1$.
 Let $T$ be a tournament and let $v \in V(T)$.
 Also let
 \[W \coloneqq \{w \in N_+(v) \mid \md(v,w) \leq k\}.\]
 Then $|W| \leq 2k + 1$.
\end{lemma}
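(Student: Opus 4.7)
The plan is to use a double-counting argument on the sub-tournament $T[W]$ induced by $W$. The key observation is that the mixed-degree constraint forces every $w \in W$ to have small in-degree inside $W$, from which a bound on $|W|$ follows by comparing the total edge count with a per-vertex lower bound on out-degrees.

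Concretely, first I would fix $w \in W$ and examine which elements of $W \setminus \{w\}$ can be in-neighbors of $w$. If $w' \in W \setminus \{w\}$ satisfies $(w',w) \in E(T)$, then by definition of $W$ we have $w' \in N_+(v)$, and by assumption $w' \in N_-(w)$. Hence $w' \in N_+(v) \cap N_-(w) \subseteq M(v,w)$. Since $\md(v,w) \leq k$, this yields
\[|N_-(w) \cap W| \;\leq\; \md(v,w) \;\leq\; k,\]
and consequently $|N_+(w) \cap (W \setminus \{w\})| \geq |W| - 1 - k$.

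Next I would sum this lower bound over all $w \in W$. The left-hand side counts (directed) edges inside $T[W]$, which equals $\binom{|W|}{2}$ since $T[W]$ is itself a tournament. This gives
\[\frac{|W|(|W|-1)}{2} \;=\; \sum_{w \in W} |N_+(w) \cap (W \setminus \{w\})| \;\geq\; |W|\bigl(|W| - 1 - k\bigr).\]
Assuming $|W| \geq 1$ (otherwise the bound is trivial), dividing by $|W|$ and rearranging yields $|W| - 1 \leq 2k$, i.e.\ $|W| \leq 2k+1$.

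There is no real obstacle here; the only subtlety is noticing that ``in-neighbor in $W$'' is precisely the condition that certifies membership in $M(v,w)$ (via the second disjunct $N_+(v) \cap N_-(w)$ in the definition of $M(v,w)$), and that the out-neighbors of $w$ inside $W$ do \emph{not} contribute to $M(v,w)$. Once this dichotomy is identified, the averaging argument over the tournament $T[W]$ gives the bound immediately.
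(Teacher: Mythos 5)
Your proof is correct and follows essentially the same route as the paper: the shared key observation is that every in-neighbor of $w$ inside $W$ lies in $N_+(v)\cap N_-(w)\subseteq M(v,w)$, so the in-degree of $w$ within $T[W]$ is at most $k$. The only difference is cosmetic: the paper applies this bound to a single vertex of in-degree at least $(|W|-1)/2$ in $T[W]$, whereas you sum the equivalent out-degree bound over all of $W$ and compare with $\binom{|W|}{2}$ --- the same averaging argument in aggregate form.
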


\begin{proof}
 Let $\ell \coloneqq |W|$. The induced subtournament $T[W]$ has
 a vertex $w$ of in-degree at least $(\ell-1)/2$.
 Since $\md(v,w) \leq k$ and $(v,w') \in E(T)$ for all $w' \in W$, we have
 \[|\{w' \in W \mid (w',w) \in E(T)\}| \leq k.\]
 Thus $\frac{\ell - 1}{2} \leq k$, which implies that $|W| = \ell \leq 2k + 1$.
\end{proof}

So $G_T$ is a subgraph of $T$ of maximum out-degree $d \coloneqq 2\cdot\tww(T) + 1$.
We remark that similar arguments also show that $G_T$ has maximum in-degree at most $d$ (technically, this property is not required by our algorithm, but it is helpful for the following explanations).
Now, first suppose that $G_T$ is strongly connected.
Then the edges of $G_T$ define a (strongly) connected spanning subgraph of maximum degree $2d$ (in-degree plus out-degree).
In this situation, we can directly use the algorithm from \cite{ArvindPR25} to test isomorphism in time $d^{O(\log d)}n^{O(1)}$.

So suppose $G_T$ is not strongly connected.
If $T$ is $2$-WL-homogeneous, then $G_T$ is also not weakly connected (i.e., the undirected version of $G_T$ is not connected); see Lemma \ref{lem:weakly-to-strongly-cc}.
In this case, the basic idea is to identify further edges to be added to decrease the number of connected components while keeping some kind of bounded-degree property.

In the following, let $\CQ$ be a partition of $V(T)$ that is non-trivial, that is, has at least two parts.
The reader is encouraged to think of $\CQ$ as the partition into the (weakly) connected components of $G_T$, but the following results hold for any non-trivial partition $\CQ$.
We call an edge $(v,v') \in E(T)$ \emph{cross-cluster with respect to $\CQ$} if it connects distinct $Q,Q' \in \CQ$.
For a cross-cluster edge $(v,v')$ with $Q \owns v, Q'\owns v'$, we let
\[\CM_{\CQ}(v,v') \coloneqq \big\{ Q'' \in \CQ \setminus \{Q,Q'\} \bigmid Q'' \cap M(v,v') \neq \emptyset \big\}\]
and $\md_\CQ(v,v') \coloneqq |\CM_\CQ(v,v')|$.

The next two lemmas generalize Observation \ref{obs:edge-small-mixed-degree} and Lemma \ref{lem:mixed-degree-few-neighbors}.

\begin{lemma}
 \label{lem:edge-small-mixed-degree}
 Let $T$ be a tournament and suppose $\CQ$ is a non-trivial partition of $V(T)$.
 Then there is a cross-cluster edge $(v,w)\in E(T)$ such that $\md_\CQ(v,w) \leq \tww(T)$.
\end{lemma}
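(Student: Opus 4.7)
The plan is to generalize the argument used in Observation~\ref{obs:edge-small-mixed-degree}: rather than just reading off the first merge of a contraction sequence, I would look at the first merge that ``crosses'' the partition $\CQ$. Concretely, let $\CP_1,\dots,\CP_n$ be a contraction sequence of $T$ of width $k = \tww(T)$. Since $\CP_1$ is discrete and each singleton lies in some $\CQ$-part, while $\CP_n = \{V(T)\}$ obviously hits every $\CQ$-part (and $\CQ$ is non-trivial), there is a smallest index $i \geq 2$ at which some part of $\CP_i$ meets at least two distinct parts of $\CQ$. By minimality, every part of $\CP_{i-1}$ is contained in a single $\CQ$-part.

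The next step is to identify the edge. Write $\CP_i$ as the merge of two parts $P_1,P_2 \in \CP_{i-1}$; by minimality of $i$, the merged part $P \coloneqq P_1 \cup P_2$ must meet two different $\CQ$-parts, so $P_1 \subseteq Q_1$ and $P_2 \subseteq Q_2$ for distinct $Q_1,Q_2 \in \CQ$. Pick any $v \in P_1$ and $w \in P_2$; they are in different $\CQ$-parts, and since $T$ is a tournament, exactly one of $(v,w), (w,v)$ is an edge of $T$. Either choice gives a cross-cluster edge, and by symmetry of $\md_\CQ$ it suffices to bound $\md_\CQ(v,w)$.

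The heart of the argument is the bound $\md_\CQ(v,w) \leq k$. The key observation is that if $P' \in \CP_i$ is any part different from $P$ and the pair $(P, P')$ is homogeneous in $T/\CP_i$, then for every $u \in P'$ the edges between $u$ and $v$ on the one hand, and $u$ and $w$ on the other, point in the same direction; consequently $u \notin M(v,w)$. Hence any mixed neighbor of $(v,w)$ lying outside $Q_1 \cup Q_2$ must sit in a part $P' \in \CP_i$ for which $(P,P')$ is \emph{not} homogeneous. Each such $P'$ lies in a single $\CQ$-part (because parts of $\CP_{i-1}$, hence all parts of $\CP_i$ other than $P$, refine $\CQ$), so distinct $Q'' \in \CM_\CQ(v,w)$ contribute disjoint families of such $P'$. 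Therefore $|\CM_\CQ(v,w)|$ is at most the number of parts $P' \neq P$ with $(P,P')$ not homogeneous, which is exactly the red degree of $P$ in $T/\CP_i$, and this is at most $k$.

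The only step that requires genuine care is the homogeneity argument in the last paragraph — making sure I really can exclude $Q_1,Q_2$ from the count and that the ``refines $\CQ$'' property is used correctly so that distinct $\CQ$-parts outside $\{Q_1,Q_2\}$ are witnessed by distinct non-homogeneous $P'$. Everything else is straightforward bookkeeping about where the first cross-cluster merge happens in the contraction sequence.
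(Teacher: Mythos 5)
Your proposal is correct and takes essentially the same route as the paper: locate the first step of the contraction sequence at which the partition ceases to refine $\CQ$, pick $v$ and $w$ from the two merged parts, and note that every $\CQ$-part contributing to $\CM_\CQ(v,w)$ must contain a part that is non-homogeneous to the merged part, so $\md_\CQ(v,w)$ is bounded by the red degree of the merged part and hence by $\tww(T)$. The only (immaterial) difference is a shift in how the critical index is named.
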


\begin{proof}
 Let $k \coloneqq \tww(T)$ and let $\CP_1,\dots,\CP_n$ be a contraction sequence of $T$ of width $k$.
 Note that $\CP_1$ refines $\CQ$ and $\CP_n$ does not refine $\CQ$, because $\CQ$ is nontrivial.
 Let $i \geq 1$ be minimal such that $\CP_{i+1}$ does not refine $\CQ$.

 Let $P,P' \in\CP_i$ denote the parts merged in the step from $\CP_i$ to $\CP_{i+1}$.
 Since $\CP_i$ refines $\CQ$, there are $Q,Q' \in \CQ$ such that $P \subseteq Q$ and $P'\subseteq Q'$.
 Moreover, $Q \neq Q'$, because $\CP_{i+1}$ does not refine $\CQ$.
 We pick arbitrary elements $v \in P$ and $w \in P'$ such that $(v,w) \in E(T)$ (if $(w,v)\in E(T)$, we swap the roles of $P$ and $P'$).
 Then $(v,w)$ is a cross-cluster edge with respect to $\CQ$.

 Let $P_1,\dots,P_{k'}$ be a list of all $P'' \in \CP_{i+1} \setminus \{P\cup P'\}$ such that the pair $(P\cup P',P'')$ is not homogeneous.
 Then $k' \leq k$ by the definition of twin width.
 Since $\CP_i \setminus \{P,P'\} = \CP_{i+1} \setminus \{P\cup P'\}$ and $\CP_i$ refines $\CQ$, there are $Q_1,\dots,Q_{k'} \in \CQ$ such that $P_i \subseteq Q_i$ for all $i \in [k']$.

 Now let $Q'' \in \CQ \setminus \{Q,Q',Q_1,\dots,Q_{k'}\}$.
 Suppose for contradiction that $Q'' \cap M(v,w) \neq \emptyset$, and pick an element $w' \in Q'' \cap M(v,w)$.
 Then there is a $P'' \in \CP_i \setminus \{P,P',P_1,\dots,P_{k'}\} = \CP_{i+1} \setminus \{P\cup P',P_1,\dots,P_{k'}\}$ such that $w' \in P''$.
 But then the pair $(P\cup P',P'')$ is not homogeneous, which is a contradiction.
 So $Q'' \cap M(v,w) = \emptyset$.
 This implies that $\CM_\CQ(v,w) \subseteq \{Q_1,\dots,Q_{k'}\}$.
 In particular, $\md_\CQ(v,w) \leq k' \leq k$.
\end{proof}

\begin{lemma}
 \label{lem:mixed-degree-few-neighbors-partition}
 Suppose $k \geq 1$.
 Let $T$ be a tournament and let $\CQ$ be a non-trivial partition of $V(T)$.
 Also let $Q \in \CQ$ and $v \in Q$.
 Let
 \[\CW \coloneqq \{Q' \in \CQ \setminus \{Q\} \mid \exists w \in Q'\colon (v,w) \in E(T) \wedge \md_\CQ(v,w) \leq k\}.\]
 Then $|\CW| \leq 2k + 1$.
\end{lemma}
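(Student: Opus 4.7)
My plan is to mimic the proof of Lemma~\ref{lem:mixed-degree-few-neighbors}, replacing the role of individual out-neighbors with one representative per cluster. Concretely, for each $Q' \in \CW$ I would fix a witness $w_{Q'} \in Q'$ with $(v,w_{Q'}) \in E(T)$ and $\md_\CQ(v,w_{Q'}) \leq k$, and form $W \coloneqq \{w_{Q'} \mid Q' \in \CW\}$. Because the witnesses lie in pairwise distinct parts, $|W| = |\CW|$, and the goal becomes bounding $|W|$ by $2k+1$.

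Next I would look at the induced subtournament $T[W]$ and pick a vertex $w = w_{Q^*} \in W$ whose in-degree in $T[W]$ is at least $(|W|-1)/2$. The key step is to show that each such in-neighbor contributes a distinct cluster to $\CM_\CQ(v,w)$. Given $w' = w_{Q''} \in W$ with $(w',w) \in E(T)$ and $w' \neq w$, the edge $(v,w')$ from the defining property of $W$ combined with $(w',w)$ places $w'$ in $N_+(v) \cap N_-(w) \subseteq M(v,w)$; hence $Q'' \cap M(v,w) \neq \emptyset$. Moreover $Q'' \notin \{Q,Q^*\}$: the part $Q$ is excluded because $w' \in \CW$-member $Q''$, and $Q^*$ is excluded because the witnesses live in distinct parts. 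Therefore $Q'' \in \CM_\CQ(v,w)$, and distinct in-neighbors $w'$ yield distinct clusters $Q''$.

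Combining the two estimates then gives
\[\frac{|W|-1}{2} \;\leq\; \md_\CQ(v,w) \;\leq\; k,\]
so $|\CW| = |W| \leq 2k+1$, as required.

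The only potentially delicate point is the verification that $Q'' \notin \{Q,Q^*\}$ when I identify the in-neighbors of $w$ with clusters in $\CM_\CQ(v,w)$; this is where the proof uses that the witnesses $w_{Q'}$ were chosen one per cluster and that all clusters in $\CW$ are distinct from $Q$. Everything else reduces to the same averaging argument as in Lemma~\ref{lem:mixed-degree-few-neighbors}, so I expect no further obstacles.
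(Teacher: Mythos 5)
Your proposal is correct and follows essentially the same route as the paper's proof: fix one witness per cluster, apply the averaging argument that $T[W]$ has a vertex of in-degree at least $(|W|-1)/2$, and bound that in-degree by $k$ via $\md_\CQ(v,w)\le k$. The only difference is that you spell out explicitly the (correct) verification that each in-neighbor of $w$ in $W$ yields a distinct cluster of $\CM_\CQ(v,w)$ not equal to $Q$ or $Q^*$, which the paper leaves implicit.
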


The proof is very similar to the proof of Lemma~\ref{lem:mixed-degree-few-neighbors}.

\begin{proof}
 Let $\ell \coloneqq |\CW|$ and suppose $\CW=\{Q_1,\dots,Q_\ell\}$.
 For every $i \in [\ell]$ pick an element $w_i \in Q_i$ such that $(v,w_i) \in E(T)$ and $\md_\CQ(v,w_i) \leq k$.
 We define $W \coloneqq \{w_1,\dots,w_\ell\}$.
 Then there is some $w \in W$ such that
 \[|\{w' \in W \mid (w',w) \in E(T)\}| \geq \frac{\ell - 1}{2},\]
 because the induced subtournament $T[W]$ has a vertex of in-degree at least $(\ell-1)/2$.

 Since $\md_\CQ(v,w) \leq k$ and $(v,w') \in E(T)$ for all $w' \in W$, it follows that
 \[|\{w' \in W \mid (w',w) \in E(T)\}| \leq k.\]
 Thus $\frac{\ell - 1}{2} \leq k$, which implies that $|\CW| = \ell \leq 2k + 1$.
\end{proof}

\begin{figure}
 \centering
 \begin{tikzpicture}[scale=0.9]
  \draw[gray!60, fill=gray!60] (0,0) circle (1.4cm);
  \foreach \j in {0,...,5}{
   \node[vertex] (v6-\j) at (60*\j:1.2) {};
  }
  \foreach \v/\w in {0/1,1/2,2/3,3/4,4/5,5/0}{
   \draw[->, line width = 1.5pt, blue] (v6-\v) to (v6-\w);
  }

  \foreach \i in {0,...,5}{
   \draw[gray!60, fill=gray!60] (60*\i:3.2) circle (1.4cm);
   \foreach \j in {0,...,5}{
    \node[vertex] (v\i-\j) at ($(60*\i:3.2) + (30 + 60*\j:1.2)$) {};
   }
   \foreach \v/\w in {0/1,1/2,2/3,3/4,4/5,5/0}{
    \draw[->, line width = 1.5pt, blue] (v\i-\v) to (v\i-\w);
   }
  }

  \draw[->, line width = 1.5pt, Green] (v6-0) to (v0-0);
  \draw[->, line width = 1.5pt, Green, bend left] (v6-0) to (v0-1);
  \draw[->, line width = 1.5pt, Green] (v6-0) to (v0-2);
  \draw[->, line width = 1.5pt, Green] (v6-0) to (v0-3);
  \draw[->, line width = 1.5pt, Green, bend right] (v6-0) to (v0-4);
  \draw[->, line width = 1.5pt, Green] (v6-0) to (v0-5);

  \draw[->, line width = 1.5pt, Green] (v6-1) to (v1-0);
  \draw[->, line width = 1.5pt, Green] (v6-1) to (v1-1);
  \draw[->, line width = 1.5pt, Green, bend left] (v6-1) to (v1-2);
  \draw[->, line width = 1.5pt, Green] (v6-1) to (v1-3);
  \draw[->, line width = 1.5pt, Green] (v6-1) to (v1-4);
  \draw[->, line width = 1.5pt, Green, bend right] (v6-1) to (v1-5);

  \draw[->, line width = 1.5pt, Green, bend right] (v6-2) to (v2-0);
  \draw[->, line width = 1.5pt, Green] (v6-2) to (v2-1);
  \draw[->, line width = 1.5pt, Green] (v6-2) to (v2-2);
  \draw[->, line width = 1.5pt, Green, bend left] (v6-2) to (v2-3);
  \draw[->, line width = 1.5pt, Green] (v6-2) to (v2-4);
  \draw[->, line width = 1.5pt, Green] (v6-2) to (v2-5);

  \draw[->, line width = 1.5pt, Green] (v6-3) to (v3-0);
  \draw[->, line width = 1.5pt, Green, bend right] (v6-3) to (v3-1);
  \draw[->, line width = 1.5pt, Green] (v6-3) to (v3-2);
  \draw[->, line width = 1.5pt, Green] (v6-3) to (v3-3);
  \draw[->, line width = 1.5pt, Green, bend left] (v6-3) to (v3-4);
  \draw[->, line width = 1.5pt, Green] (v6-3) to (v3-5);

  \draw[->, line width = 1.5pt, Green] (v6-4) to (v4-0);
  \draw[->, line width = 1.5pt, Green] (v6-4) to (v4-1);
  \draw[->, line width = 1.5pt, Green, bend right] (v6-4) to (v4-2);
  \draw[->, line width = 1.5pt, Green] (v6-4) to (v4-3);
  \draw[->, line width = 1.5pt, Green] (v6-4) to (v4-4);
  \draw[->, line width = 1.5pt, Green, bend left] (v6-4) to (v4-5);

  \draw[->, line width = 1.5pt, Green, bend left] (v6-5) to (v5-0);
  \draw[->, line width = 1.5pt, Green] (v6-5) to (v5-1);
  \draw[->, line width = 1.5pt, Green] (v6-5) to (v5-2);
  \draw[->, line width = 1.5pt, Green, bend right] (v6-5) to (v5-3);
  \draw[->, line width = 1.5pt, Green] (v6-5) to (v5-4);
  \draw[->, line width = 1.5pt, Green] (v6-5) to (v5-5);

  \foreach \j in {0,...,5}{
   \node at (30 + 60*\j:4.4) {$\dots$};
  }
 \end{tikzpicture}
 \caption{The figure shows part of a tournament $T$. The colors $c_1$ and $c_2$ are shown in blue and green, respectively.
  Also, the parts of the partition $\CQ_1$ are highlighted in gray.
  Note that only green edges, which are outgoing from the middle part, are shown.}
 \label{fig:partition-sequence}
\end{figure}
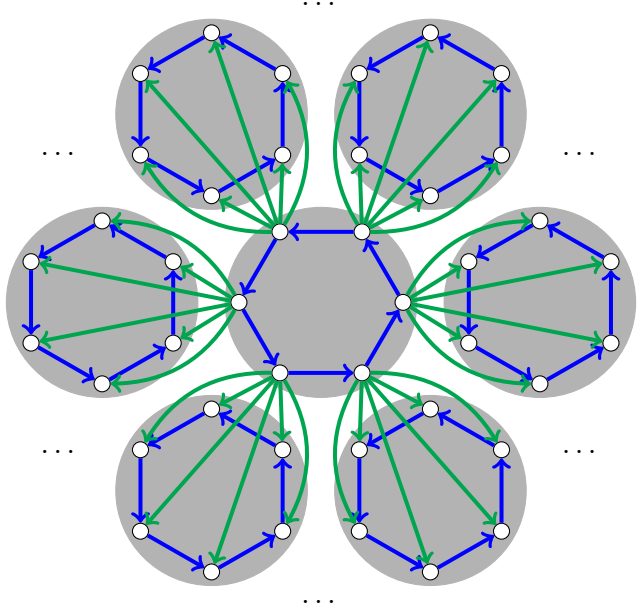

Now, suppose we color all edges $(v,w)$ of $T$ with $\md(v,w) \leq \tww(T)$ using the color $c_1 = \blue$ (see Figure \ref{fig:partition-sequence}).
Let $\CQ_1$ be the partition into the (weakly) connected components of the graph induced by the blue edges and suppose that $\CQ_1$ is non-trivial.
We can compute isomorphisms between the different parts of $\CQ_1$ using the algorithm from \cite{ArvindPR25}.
Next, let us color all cross-cluster edges $(v,w)\in E(T)$ with $\md_\CQ(v,w) \leq \tww(T)$ using the color $c_2 = \green$.
Then every vertex has outgoing green edges to at most $2\tww(T) + 1$ other parts of $\CQ_1$ (see Lemma \ref{lem:mixed-degree-few-neighbors-partition}).
However, since a vertex may have an unbounded number of green neighbors in a single part, the out-degree of the graph induced by the green edges may be unbounded.
So it is not possible to use the algorithm from \cite{ArvindPR25} as a black-box on the components induced by blue and green edges.
Luckily, the methods used in \cite{ArvindPR25} can be extended to work even in this more general setting (see Section \ref{sec:alg}).
So if the graph induced by the blue and green edges is connected, then we are again done.
Otherwise, we let $\CQ_2$ denote the partition into (weakly) connected components of the graph induced by the blue and green edges.
Now, we can continue in the same fashion identifying colors $c_3,c_4,\dots$ and corresponding partitions $\CQ_3,\CQ_4,\dots$ until the graph induced by all edges of colors $c_1,\dots,c_\ell$ is eventually connected.

Below, we provide a lemma that computes the corresponding sequence of partitions and edge colors using $2$-WL.
To state the lemma in its cleanest form, we restrict our attention to tournaments that are $2$-WL-homogeneous.
Recall that a tournament $T$ is $2$-WL-homogeneous if for all $v,w\in V(T)$ is holds that $\WL{2}{T}(v,v) = \WL{2}{T}(w,w)$.

We also require another piece of notation.
For a directed graph $G$ and a set of colors $C \subseteq \{\WL{2}{G}(v,w) \mid (v,w) \in E(G)\}$ we write $G[C]$ for the directed graph with vertex set $V(G[C]) \coloneqq V(G)$ and edge set
\[E(G[C]) \coloneqq \{(v,w) \in E(G) \mid \WL{2}{G} \in C\}.\]
To prove Lemma~\ref{lem:partition-sequence} we need the following lemma about the connected components of the graphs $G[C]$.

\begin{lemma}
 \label{lem:weakly-to-strongly-cc}
 Let $G$ be a $2$-WL-homogeneous graph, and let $C$ be a set of colors in the range of $\WL{2}{G}$.
 Then the weakly connected components of $G[C]$ equal the strongly connected components of~$G[C]$.
\end{lemma}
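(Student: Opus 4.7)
The plan is to reduce the claim to the classical fact that a weakly connected digraph in which every vertex has equal in- and out-degree admits an Eulerian circuit, and is therefore strongly connected. Concretely, I will show that in $G[C]$ every vertex has the same out-degree as in-degree, and then argue that this balance is inherited by each weakly connected component (since no edge of $G[C]$ can cross between weakly connected components). From there, the existence of an Eulerian circuit in each component witnesses the required strong connectivity.

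The substantive step is establishing the balance property at every vertex of $G[C]$, and this is where 2-WL-homogeneity enters. For each color $c$ appearing in the range of $\WL{2}{G}$, I will consider the out-valency $f(c) \coloneqq |\{w \in V(G) \mid \WL{2}{G}(v,w) = c\}|$ and the in-valency $g(c) \coloneqq |\{w \in V(G) \mid \WL{2}{G}(w,v) = c\}|$. Using the stability property of $\WL{2}{G}$ — specifically, that $\WL{2}{G}(v,v)$ encodes the multiset $\{\!\{(\WL{2}{G}(v,w),\WL{2}{G}(w,v)) \mid w \in V(G)\}\!\}$ — together with the hypothesis that all diagonals $(v,v)$ carry the same color, both $f(c)$ and $g(c)$ are independent of $v$.

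Once these valencies are shown to be vertex-independent, equality $f(c) = g(c)$ follows from a one-line double count of the color class $R_c \coloneqq \{(v,w) \mid \WL{2}{G}(v,w) = c\}$: summing out-valencies yields $|R_c| = n \cdot f(c)$ while summing in-valencies yields $|R_c| = n \cdot g(c)$. Therefore, summing over $c \in C$, every vertex of $G[C]$ has out-degree $\sum_{c \in C} f(c)$ equal to its in-degree $\sum_{c \in C} g(c)$, as desired.

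The step I expect to require the most care is the per-color balance $f(c) = g(c)$, since $C$ is an arbitrary subset of colors and need not be closed under taking the ``converse'' color $c \mapsto c^*$; the point is that we do not need such closure, because equality of valencies already holds colorwise. Apart from that, the argument is routine: restrict to a weakly connected component $W$, observe that no $G[C]$-edge can leave $W$ so the induced digraph on $W$ remains balanced and weakly connected, then invoke the Eulerian circuit theorem to conclude that $W$ is strongly connected in $G[C]$.
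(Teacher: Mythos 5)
Your proof is correct, but it takes a genuinely different route from the paper. The paper argues by contradiction: it picks a weakly connected component that is not strongly connected, takes a source strong component $B$ and a vertex $w$ reachable from $B$ but not conversely, and considers the set $C^{\rightarrow}$ of colors of pairs $(x,y)$ with one-way reachability in $G[C]$; since $2$-WL colors capture this reachability information, a vertex of $B$ and the vertex $w$ would receive different diagonal colors, contradicting $2$-WL-homogeneity. You instead prove a quantitative regularity statement: stability of $\WL{2}{G}$ means the diagonal color of $v$ determines the multiset $\{\!\{(\WL{2}{G}(w,v),\WL{2}{G}(v,w)) \mid w \in V(G)\}\!\}$, so under homogeneity each color has vertex-independent in- and out-valencies, and double counting the color class gives $f(c)=g(c)$; hence $G[C]$ is a balanced digraph, each weakly connected component is balanced, and the directed Euler circuit theorem yields strong connectivity. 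This is essentially the classical coherent-configuration argument, and it buys a reusable structural fact (per-color valency regularity and balancedness of every color-induced subgraph), at the cost of invoking the Euler theorem; the paper's argument is purely qualitative and needs no counting. Two harmless points to tidy up: the sum $\sum_{c\in C}f(c)$ equals the out-degree in $G[C]$ only for colors realized on edges of $G$ (which is how $G[C]$ is defined, and colors not realized on edges contribute $0$ to both sides anyway, since the stable coloring refines the edge/non-edge distinction), and a single-vertex component should be noted as trivially strongly connected before applying the Euler argument.
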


\begin{proof}
 Let $A$ be a weakly connected component of $G[C]$ and suppose towards a contradiction that $A$ is not strongly connected in $G[C]$.
 Let $B_1,\dots,B_\ell$ be the strongly connected components of $G[C]$ within $A$.
 We may assume that $B_1,\dots,B_\ell$ are ordered topologically, i.e., there is an edge in $G[C]$ from a vertex in $B_i$ to a vertex in $B_j$ only if $i < j$.
 Now consider $B \coloneqq B_1$, i.e., $B \subseteq A$ is a strongly connected component of $G[C]$ such that no edges of a color from $C$ are incoming into $B$.
 Also, let $w \in A \setminus B$ denote a vertex such that there is a directed path from $v$ to $w$ for every $v \in B$.
 Also, fix a vertex $v \in B$.

 Now, let $C^{\rightarrow}$ denote the set of all colors $\WL{2}{G}(x,y)$ where $x,y \in V(G)$ are distinct vertices such that there is a path from $x$ to $y$ in $G[C]$, but no path from $y$ to $x$.
 By the properties of $2$-WL, for every $x,y \in V(G)$ such that $\WL{2}{G}(x,y) \in C^{\rightarrow}$, there is a path from $x$ to $y$ in $G[C]$, but no path from $y$ to $x$.

 By definition, we have that $\WL{2}{G}(v,w) \in C^{\rightarrow}$, but there is no $u \in V(G)$ such that $\WL{2}{G}(u,v) \in C^{\rightarrow}$ (no edges of a color from $C$ are incoming into $B$).
 It follows that $\WL{2}{G}(v,v) \neq \WL{2}{G}(w,w)$ which is a contradiction.
\end{proof}

\begin{lemma}
 \label{lem:partition-sequence}
 Let $T$ be a $2$-WL-homogeneous tournament of twin width $\tww(T) \leq k$.
 Then there is a sequence of partitions $\{\{v\} \mid v \in V(T)\} = \CQ_0,\dots,\CQ_\ell = \{V(T)\}$ of $V(T)$ where $\CQ_{i-1}$ refines $\CQ_i$ for all $i$,
 and a sequence of colors $c_1,\dots,c_\ell$ in the range of $\WL{2}{T}$ such that
 \begin{enumerate}
  \item\label{item:partition-sequence-1} $\CQ_i$ is the partition into the strongly connected components of $T[\{c_1,\ldots,c_i\}]$ for every $i \in [\ell]$, and
  \item\label{item:partition-sequence-2} for every $i \in [\ell]$ and every $v \in V(T)$ it holds that
  \[\big|\big\{Q \in \CQ_{i-1} \bigmid \exists w \in Q \colon \WL{2}{T}(v,w) = c_i\big\}\big| \leq 2k+1.\]
 \end{enumerate}
 Moreover, there is a polynomial-time algorithm that, given a tournament $T$ and an integer $k \geq 1$, computes the desired sequences $\CQ_0,\dots,\CQ_\ell$ and $c_1,\dots,c_\ell$ or concludes that $\tww(T) > k$.
\end{lemma}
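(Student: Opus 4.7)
The plan is to construct the sequences by induction on $i$, beginning with $\CQ_0 \coloneqq \{\{v\}:v\in V(T)\}$. At step $i \geq 1$, assuming $\CQ_{i-1}$ is non-trivial, I apply Lemma~\ref{lem:edge-small-mixed-degree} to produce a cross-cluster edge $(v_0,w_0)$ with $\md_{\CQ_{i-1}}(v_0,w_0) \leq k$, set $c_i \coloneqq \WL{2}{T}(v_0,w_0)$, and define $\CQ_i$ to be the partition into strongly connected components of $T[\{c_1,\ldots,c_i\}]$. Item~(1) then holds by construction. The refinement $\CQ_{i-1} \preceq \CQ_i$ follows since adding edges can only merge SCCs, and the refinement is strict: because $(v_0,w_0)$ is a $c_i$-colored edge between distinct $\CQ_{i-1}$-parts, $v_0$ and $w_0$ lie in the same weakly connected component of $T[\{c_1,\ldots,c_i\}]$, and by Lemma~\ref{lem:weakly-to-strongly-cc} hence in the same strongly connected component. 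This ensures termination in at most $n-1$ steps.

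For item~(2), fix $v \in V(T)$ and set $N_{c_i}(v) \coloneqq \{w : \WL{2}{T}(v,w) = c_i\}$. A first key observation is that $|\{Q \in \CQ_{i-1} : Q \cap N_{c_i}(v) \neq \emptyset\}|$ depends only on $c_i$ and not on $v$: since $\CQ_{i-1}$ is a $\WL{2}{T}$-invariant partition (its ``same SCC'' relation is a union of 2-WL color classes) and $T$ is 2-WL-homogeneous, a short counting argument via the coherence structure constants $p^{c_i}_{c_i,b'}$ (summed over colors $b'$ witnessing ``same SCC'') shows that every $\CQ_{i-1}$-part intersecting $N_{c_i}(v)$ does so in the same constant number of vertices. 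Denote the resulting constant by $\ell_{c_i}$. To bound $\ell_{c_i} \leq 2k+1$, I pick representatives $w_1,\ldots,w_{\ell_{c_i}}$ of the distinct $\CQ_{i-1}$-parts meeting $N_{c_i}(v)$; the in-degree argument from the proof of Lemma~\ref{lem:mixed-degree-few-neighbors-partition} then exhibits some $w^* \in \{w_1,\ldots,w_{\ell_{c_i}}\}$ with $\md_{\CQ_{i-1}}(v,w^*) \geq (\ell_{c_i}-1)/2$. The main obstacle is to argue that the bound $\md_{\CQ_{i-1}}(v_0,w_0) \leq k$ from Lemma~\ref{lem:edge-small-mixed-degree} extends uniformly to every pair of 2-WL color $c_i$ — so that $(v, w^*)$, also of color $c_i$, inherits $\md_{\CQ_{i-1}}(v, w^*) \leq k$, yielding $\ell_{c_i} \leq 2k+1$. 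I would address this by further exploiting the coherent configuration structure of stable $\WL{2}{T}$ together with 2-WL-homogeneity of $T$ to promote the relevant mixed-neighbor and SCC-intersection quantities to genuine functions of the pair color.

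For the polynomial-time algorithm, compute $\WL{2}{T}$ once and then iterate: at each step, maintain $\CQ_{i-1}$ (encoded via the 2-WL coloring), enumerate the 2-WL colors $c$, and for each candidate verify in polynomial time that $c$ is an edge color, is cross-cluster with respect to $\CQ_{i-1}$, and satisfies item~(2) for every $v$. If some candidate passes, adopt it as $c_i$, recompute $\CQ_i$, and continue; otherwise, Lemma~\ref{lem:edge-small-mixed-degree} guarantees contrapositively that $\tww(T) > k$, and the algorithm reports this.
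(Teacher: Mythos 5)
Your construction is the paper's: pick a cross-cluster edge of small partitioned mixed degree via Lemma~\ref{lem:edge-small-mixed-degree}, take its $2$-WL color as $c_i$, let $\CQ_i$ be the (strongly $=$ weakly, by Lemma~\ref{lem:weakly-to-strongly-cc}) connected components of $T[\{c_1,\dots,c_i\}]$; your treatment of item~(1), strict refinement, termination, and the algorithmic part matches the paper's proof. The problem is item~(2). You correctly reduce it, via the in-degree argument of Lemma~\ref{lem:mixed-degree-few-neighbors-partition}, to the statement that \emph{every} pair $(v,w)$ with $\WL{2}{T}(v,w)=c_i$ satisfies $\md_{\CQ_{i-1}}(v,w)\leq k$, not just the single edge $(v_0,w_0)$ produced by Lemma~\ref{lem:edge-small-mixed-degree} — and then you explicitly leave this step open (``I would address this by further exploiting the coherent configuration structure\dots''). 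That is not a peripheral detail: it is the only non-routine point of item~(2), since the representative $w^*$ of large in-degree produced by the counting argument need not be the witnessed head $w_0$, so without the transfer the argument cannot be closed. Note also that it is not a generic consequence of $2$-WL stability in the way your $\ell_{c_i}$-constancy claim is: deciding whether two mixed neighbors of $(v,w)$ lie in the same part of $\CQ_{i-1}$ is a condition on four vertices, which the intersection numbers of the stable coloring do not control directly, so an actual argument (using that ``same part of $\CQ_{i-1}$'' is a union of stable pair colors, as in Lemma~\ref{lem:weakly-to-strongly-cc}, and the homogeneity of $T$) is required rather than a gesture.

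For comparison, the paper dispatches item~(2) by observing that every pair colored $c_i$ is a cross-cluster edge with respect to $\CQ_{i-1}$ (because membership in the same strongly connected component of $T[\{c_1,\dots,c_{i-1}\}]$ is determined by the stable pair color) and then invoking Lemma~\ref{lem:mixed-degree-few-neighbors-partition}; this invocation presupposes exactly the uniformity you flag, i.e.\ that the defining property of $c_i$ propagates from the one witnessed edge to its whole color class. So you have located the right crux, but your write-up asserts it instead of proving it, and the same unproven transfer is also what underlies your claim that failure of all candidate colors in the algorithm certifies $\tww(T)>k$. Two smaller remarks: your stronger side claim that every part meets $N_{c_i}(v)$ in the \emph{same number} of vertices is neither proved nor needed (only the number of parts matters), and the constancy of $\ell_{c_i}$ over $v$ is likewise unnecessary for the stated bound — what is needed is precisely the color-class uniformity of the mixed-degree bound.
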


\begin{proof}
 We set $\CQ_0 = \{\{v\} \mid v \in V(T)\}$ and inductively define a sequence of partitions and colors as follows.
 Let $i \geq 0$ and suppose we already defined partitions $\CQ_0 \prec \dots \prec \CQ_i$ and colors $c_1,\dots,c_i$.
 If $\CQ_i = \{V(T)\}$, we set $\ell \coloneqq i$ and complete both sequences.
 Otherwise, there is a cross-cluster edge $(v_{i+1},w_{i+1})$ with respect to $\CQ_i$ such that $\md_{\CQ_i}(v_{i+1},w_{i+1}) \leq k$ by Lemma~\ref{lem:edge-small-mixed-degree}.
 We set $c_{i+1} \coloneqq \WL{2}{T}(v_{i+1},w_{i+1})$ and define $\CQ_{i+1}$ to be the set of weakly connected components of $T[c_1,\dots,c_{i+1}]$.
 By Lemma~\ref{lem:weakly-to-strongly-cc}, these are also the strongly connected components.

 First observe that $\CQ_i \prec \CQ_{i+1}$ since $(v_{i+1},w_{i+1})$ is a cross-cluster edge with respect to $\CQ_i$, and $v_{i+1},w_{i+1}$ are contained in the same part of $\CQ_{i+1}$ by Lemma \ref{lem:weakly-to-strongly-cc}.
 Also, Property \ref{item:partition-sequence-1} is satisfied by definition.
 For Property \ref{item:partition-sequence-2} note that every edge $(v,w) \in E(T)$ such that $\WL{2}{G}(v,w) = c_{i+1}$ is a cross-cluster edge with respect to $\CQ_i$.
 So Property \ref{item:partition-sequence-2} follows directly from Lemma \ref{lem:mixed-degree-few-neighbors-partition}.

 Finally, it is clear from the description above that $\CQ_0 \prec \dots \prec \CQ_\ell$ and $c_1,\dots,c_\ell$ can be computed in polynomial time, or we conclude that $\tww(T) > k$.
\end{proof}

\section{The Isomorphism Algorithm}
\label{sec:alg}

Based on the structural insights summarized in Lemma \ref{lem:partition-sequence}, we now design an isomorphism test for tournaments of small twin width.

The strategy of our algorithm is the following.
We are given two tournaments $T_1$ and $T_2$, and we want to compute $\Iso(T_1,T_2)$.
First, we reduce to the case where both $T_1$ and $T_2$ are $2$-WL-homogeneous.

Towards this end, we start by applying $2$-WL and, for $j=1,2$, compute the coloring $\WL{2}{T_j}$.
If $2$-WL distinguishes the two tournaments, we can immediately conclude that they are non-isomorphic and return $\Iso(T_1,T_2) = \emptyset$.

So suppose that $2$-WL does not distinguish the tournaments.
Then $T_1$ is $2$-WL-homogeneous if and only if $T_2$ is $2$-WL-homogeneous.

If $T_1$ and $T_2$ are not $2$-WL-homogeneous, we rely on the following standard argument.
Let $c_1,\ldots,c_p$ be the vertex colors.
For $i \in [p]$ and $j=1,2$, let $P_{j,i}$ be the set of all $v\in V(T_j)$ such that $\WL{2}{T_j}(v,v) = c_i$.
We recursively compute the sets $\Lambda_i \coloneqq \Iso(T_1[P_{1,i}],T_2[P_{2,i}])$ for all $i \in [p]$.
If there is some $i \in [p]$ such that $\Lambda_i = \emptyset$, then $T_1$ and $T_2$ are non-isomorphic, and we return $\Iso(T_1,T_2)=\emptyset$.
Otherwise, the set $\Lambda_i$ is a coset of $\Gamma_i \coloneqq \Aut(T_1[P_{1,i}])$ for all $i \in [p]$, i.e., $\Lambda_i = \Gamma_i\theta_i$ for some bijection $\theta_i\colon P_{1,i} \to P_{2,i}$.
As the automorphism group of a tournament, $\Gamma_i$ is solvable (see Theorem \ref{thm:aut-solvable}).
Moreover, since the color classes $P_{1,i}$ are invariant under automorphisms of $T_1$, the automorphism group $\Gamma \coloneqq \Aut(T_1)$ is a subgroup of the direct product $\prod_i \Gamma_{i}$, which is also a solvable group.
Also, $\Iso(T_1,T_2) \subseteq \Gamma\theta$ where $\theta\colon V(T_1) \to V(T_2)$ is the unique bijection defined via $\theta(v) \coloneqq \theta_i(v)$ for all $v \in V(T_1)$, where $i \in [p]$ is the unique index such that $v \in P_{1,i}$.
So $\Iso(T_1,T_2) = \Iso_{\Gamma\theta}(T_1,T_2)$ can be computed in polynomial time using Theorem \ref{thm:gi-solvable-group}.

So we may assume that $T_1$ and $T_2$ are $2$-WL-homogeneous.
In this case, we apply Lemma~\ref{lem:partition-sequence} and obtain colors $c_1,\ldots,c_\ell$ and, for $j=1,2$, a partition sequence $\{\{v\} \mid v \in V(T_j)\} = \CQ_{j,0}, \dots, \CQ_{j,\ell} = \{V(T_j)\}$ where $\CQ_{j,i-1}$ refines $\CQ_{j,i}$ for all $i \in [\ell]$.

Now, we iteratively compute for $i=0,\dots,\ell$ the sets $\Iso(T_j[Q],T_{j'}[Q'])$ for all $j,j' \in \{1,2\}$ and all $Q \in \CQ_{j,i}$ and $Q' \in \CQ_{j',i}$.
For $i = 0$ this is trivial since all parts have size $1$.
The next lemma describes the key subroutine of the main algorithm which allows us to compute the isomorphism sets for level $i \in [\ell]$ given all the sets for level $i-1$.
Note that, on the last level $\ell$, we compute the set $\Iso(T_1,T_2)$ since $\CQ_{j,\ell} = \{V(T_j)\}$ for both $j \in \{1,2\}$.

To state the lemma, we need additional terminology.
Let $T = (V,E,\lambda)$ be an arc-colored tournament.
A partition $\CQ$ of $V$ is \emph{$\lambda$-definable} if there is a set of colors $C \subseteq \{\lambda(v,w) \mid v= w \vee (v,w) \in E\}$ such that
\[v \sim_\CQ w \quad\iff\quad \lambda(v,w) \in C\]
for all $v,w \in V$ such that $v = w$ or $(v,w) \in E$.
We also say that \emph{$\CQ$ is $\lambda$-defined by $C$}.
If $\CQ$ is $\lambda$-defined by $C$ then we can partition the colors in the range of $\lambda$ into the colors in $C$, which we call \emph{intra-cluster} colors, and the remaining colors, which we call \emph{cross-cluster} colors.
Note that if a color $c$ is intra-cluster, then for all $(v,w) \in E$ with $\lambda(v,w) = c$ it holds that $v,w \in Q$ for some $Q \in \CQ$,
and if $c$ is cross-cluster, then for all $(v,w) \in E$ with $\lambda(v,w) = c$ it holds that $(v,w)$ is a cross-cluster edge, that is, $v \in Q$ and $w \in Q'$ for distinct $Q,Q' \in \CQ$.

\begin{lemma}
 \label{lem:lift-isomorphisms}
 There is an algorithm that, given
 \begin{enumerate}[label = (\Alph*)]
  \item\label{item:lift-isomorphisms-1} an integer $d \geq 1$;
  \item\label{item:lift-isomorphisms-2} two arc-colored tournaments $T_1=(V_1,E_1,\lambda_1)$ and $T_2=(V_2,E_2,\lambda_2)$;
  \item\label{item:lift-isomorphisms-3} a set of colors $C$ and for $j=1,2$ a partition $\CQ_j$ of $V_j$ that is $\lambda_j$-defined by $C$;
  \item\label{item:lift-isomorphisms-4} a color $c^*$ that is cross-cluster with respect to $\CQ_j$ for $j=1,2$ and
   \begin{itemize}
    \item for every $v \in V_j$ it holds that
     \[\big|\big\{Q \in \CQ_j \bigmid \exists w \in Q\colon (v,w) \in E_j \wedge \lambda_j(v,w) = c^*\big\}\big| \leq d;\]
    \item for
     \[F_j \coloneqq \big\{(Q,Q')\in\CQ_j^2 \bigmid Q\neq Q', \exists w \in Q,w' \in Q'\colon (w,w') \in E_j \wedge \lambda_j(w,w') = c^*\big\}\]
     the directed graph $G_j = (\CQ_j,F_j)$ is strongly connected;
   \end{itemize}
  \item\label{item:lift-isomorphisms-5} $\Iso\big(T_j[Q],T_{j'}[Q']\big)$ for every $j,j' \in \{1,2\}$ and every $Q \in \CQ_j$, $Q' \in \CQ_{j'}$,
 \end{enumerate}
 computes $\Iso(T_1,T_2)$ in time $d^{O(\log d)} \cdot n^{O(1)}$.
\end{lemma}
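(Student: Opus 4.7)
The plan is to generalize the group-theoretic Luks-style framework of Arvind--Ponomarenko--Ryabov \cite{ArvindPR22} from the vertex-level to the part-level setting. As a first observation, any $\varphi \in \Iso(T_1, T_2)$ must respect the partitions $\CQ_j$ since they are $\lambda_j$-definable: intra-cluster and cross-cluster edge colors rigidly distinguish whether endpoints lie in the same or different parts. Thus $\varphi$ decomposes into a bijection $\bar\varphi\colon \CQ_1 \to \CQ_2$ together with restrictions $\varphi|_Q \in \Iso(T_1[Q], T_2[\bar\varphi(Q)])$. Input \ref{item:lift-isomorphisms-5} supplies each such set as a coset $\Aut(T_1[Q])\,\theta_{Q,Q'}$, and by Theorem \ref{thm:aut-solvable} each $\Aut(T_1[Q])$ is solvable.

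Next, I would exploit the strong connectivity of $G_j$ under color $c^*$ to organize the recursion. Fix a part $Q_0 \in \CQ_1$ and loop over candidate images $Q_0' \in \CQ_2$. A BFS of $G_1$ rooted at $Q_0$ using only $c^*$-edges yields a layered traversal of all parts. Property \ref{item:lift-isomorphisms-4} guarantees that each already-processed vertex has at most $d$ candidate $c^*$-neighbor parts, which bounds the branching used to extend the current partial mapping layer by layer. Following the Luks-style divide-and-conquer analysis refined in the bounded-degree isomorphism literature (e.g.\ \cite{GroheNS18,Neuen22}), one splits the at-most-$d$ branching into halves and recurses, yielding $d^{O(\log d)}$ total branches rather than a naive $d^{O(d)}$ or $n^{O(d)}$ bound. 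At every recursion node, the current partial-isomorphism set is carried as a coset of a solvable group whose generators are maintained via the wreath-product construction of Section \ref{asec:groups}.

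To handle the non-$c^*$ cross-cluster colors, which may have unbounded multiplicity, at each recursion node I would reduce the consistency check between the current partial mapping and all cross-cluster arcs (of every color) incident to the frontier parts to a single call to the solvable-group isomorphism algorithm of Theorem \ref{thm:gi-solvable-group}, which runs in polynomial time. Combining $d^{O(\log d)}$ recursion nodes with $n^{O(1)}$ work per node gives the overall bound $d^{O(\log d)} n^{O(1)}$. The strong connectivity of $G_j$ under $c^*$ ensures the BFS reaches every part, so the final coset output is exactly $\Iso(T_1, T_2)$.

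I expect the main obstacle to be making the divide-and-conquer split on $c^*$-neighbors compatible with the intra-part coset structure from input \ref{item:lift-isomorphisms-5}: the split must be chosen so that each recursive subproblem is again of the same type---a solvable coset of candidate mappings for a sub-collection of parts, together with a bounded-degree $c^*$-skeleton on the remaining frontier---so that the recursion depth is genuinely $O(\log d)$ and the wreath-product generating sets stay polynomially small. Once this template is established, absorbing the remaining cross-cluster arc-colors into the solvable-group isomorphism call at each node is essentially routine.
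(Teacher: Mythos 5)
There is a genuine gap at the heart of your proposal: the source of the $d^{O(\log d)}$ factor. You assert that one can handle the at-most-$d$ branching over candidate $c^*$-neighbor parts by a ``Luks-style divide-and-conquer'' that ``splits the branching into halves and recurses, yielding $d^{O(\log d)}$ total branches,'' but no such analysis is available here: the BFS over parts has depth up to $n$, not $O(\log d)$, and splitting a set of $d$ candidate extensions in half does not by itself reduce the number of alternatives explored; nothing in your description prevents the branching from multiplying across the (polynomially many) extension steps, i.e.\ a bound like $d^{\Omega(n)}$ rather than $d^{O(\log d)}\cdot n^{O(1)}$. Moreover, Luks-type recursion for bounded-degree graphs operates on the automorphism groups of vertex-level objects of bounded valence, whereas here each of the $\le d$ ``neighbors'' is an entire part of unbounded size, so the bounded-degree machinery of \cite{GroheNS18,Neuen22} does not apply as a black box. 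The actual mechanism in the paper is different and is the key idea you are missing: for each boundary vertex $u$, the $\le d$ parts it reaches via $c^*$ are turned into a \emph{tournament on at most $d$ vertices} by canonically choosing, inside each part $Q$, an orbit $A_Q$ of $\Aut(T_j[Q])$ compatibly across isomorphic parts (possible because the given isomorphism cosets conjugate these groups), and orienting $(Q,Q')$ by majority of edges between $A_Q$ and $A_{Q'}$ --- there is never a tie because $|A_Q|$ is odd. One then runs the Babai--Luks $n^{O(\log n)}$ tournament-isomorphism algorithm (Theorem~\ref{thm:ti}) on this quotient tournament of size $\le d$, which is where $d^{O(\log d)}$ comes from, and lifts the result through a solvable wreath product using Theorem~\ref{thm:gi-solvable-group}. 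Without a device of this kind, your claimed running time is unsubstantiated.

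A second, smaller omission: when gluing the newly reached parts onto the already-processed window, different boundary vertices $u,u'$ may reach overlapping collections of parts, so one cannot simply take a product of the local cosets; the paper resolves this by passing to the pair set $\{(u,w): w\in L^u\}$ and invoking hypergraph isomorphism in solvable cosets (Miller, Theorem~\ref{thm:hi-solvable-group}) before recombining with Theorem~\ref{thm:gi-solvable-group}. Your single ``consistency check'' call to Theorem~\ref{thm:gi-solvable-group} per node does not address how the coset of partial isomorphisms on the enlarged window is represented as (a subcoset of) a solvable coset in the first place, which is exactly what the hypergraph step provides. Your opening observation (isomorphisms respect the partitions) and the overall window-extension scheme are consistent with the paper, but these two ingredients --- the odd-orbit majority tournament on $\le d$ parts and the hypergraph-based gluing --- are what make the argument go through.
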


\begin{proof}
 The algorithm fixes an arbitrary vertex $r_1 \in V_1$.
 For every $r_2 \in V_2$ we compute the set $\Iso((T_1,r_1),(T_2,r_2))$ of all isomorphisms $\varphi \in \Iso(T_1,T_2)$ such that $\varphi(r_1) = r_2$.
 Observe that
 \[\Iso(T_1,T_2) = \bigcup_{r_2 \in V_2} \Iso((T_1,r_1),(T_2,r_2)).\]
 So for the remainder of the proof, let us also fix some $r_2 \in V_2$.
 Let $R_1 \in \CQ_1$ be the unique set such that $r_1 \in R_1$, and similarly let $R_2 \in \CQ_2$ be the unique set such that $r_2 \in R_2$.

 We compute sets $\Iso((T_1[W_{1,i}],r_1),(T_2[W_{2,i}],r_2))$ for increasingly larger subsets $R_1 \subseteq W_{1,i} \subseteq V_1$ and $R_2 \subseteq W_{2,i} \subseteq V_2$.
 In other words, in each iteration $i \geq 0$, we build increasingly larger ``windows'' $W_{i,1}$ and $W_{i,2}$ and compute all isomorphisms between the sub-tournaments induced by $W_{i,1}$ and $W_{i,2}$.
 To obtain $W_{i+1,1}$ and $W_{i+1,2}$ in iteration $i+1$, we add a positive number of vertices to $W_{i,1}$ and $W_{i,2}$ and update the isomorphism set computed in the previous round to take the newly added vertices into account.
 Throughout, we maintain the property that
 \begin{enumerate}[label = (I.\arabic*),leftmargin=*]
  \item\label{item:invariant-1} there is some $\{R_j\} \subseteq \CW_{j,i} \subseteq \CQ_j$ such that $W_{j,i} = \bigcup \CW_{j,i}$, and
  \item\label{item:invariant-2} $\varphi(W_{1,i}) = W_{2,i}$ for every $\varphi \in \Iso((T_1,r_1),(T_2,r_2))$
 \end{enumerate}
 for every $i \geq 0$.

 Eventually, the algorithm either concludes that $\Iso((T_1,r_1),(T_2,r_2)) = \emptyset$ and terminates, or it reaches a point where $W_{i,1} = V_1$ and $W_{i,2} = V_2$ which gives the desired isomorphism set $\Iso((T_1,r_1),(T_2,r_2))$.

 For the base case $i = 0$ we initialize $\CW_{j,0} \coloneqq \{R_j\}$ for both $j \in \{1,2\}$.
 Also, we set $W_{j,0} \coloneqq R_j$.
 This means we are given the set $\Gamma_0'\theta_0' \coloneqq \Iso(T_1[W_{1,0}],T_2[W_{2,0}])$ as part of the input.
 Observe that $\Gamma_0'$ is solvable by Theorem \ref{thm:aut-solvable}.
 So
 \[\Gamma_0\theta_0 \coloneqq \Iso((T_1[W_{1,0}],r_1),(T_2[W_{2,0}],r_2))\]
 can be computed in polynomial time using Theorem \ref{thm:gi-solvable-group}.

 So assume $i \geq 0$ and let $W_{1,i} \subsetneq V_1$ and $W_{2,i} \subsetneq V_2$ be two sets satisfying \ref{item:invariant-1} and \ref{item:invariant-2}.
 Also suppose we have already computed the set
 \begin{equation}
  \label{eq:def-gamma-i}
  \Gamma_i\theta_i \coloneqq \Iso((T_1[W_{1,i}],r_1),(T_2[W_{2,i}],r_2)).
 \end{equation}
 We may assume the isomorphism set is non-empty; otherwise $\Iso((T_1,r_1),(T_2,r_2)) = \emptyset$ by Invariant \ref{item:invariant-2} and we terminate.
 For $j \in \{1,2\}$ let
 \[U_{j,i} \coloneqq \{u \in W_{j,i} \mid \exists w \in V_j \setminus W_{j,i} \colon (u,w) \in E_j \wedge \lambda_j(u,w) = c^*\}.\]
 Note that $U_{j,i} \neq \emptyset$ for both $j \in \{1,2\}$ by Property \ref{item:lift-isomorphisms-4}.
 Also, for every $j \in \{1,2\}$ and $u \in U_{j,i}$, we define
 \[\CL_{j,i+1}^{u} \coloneqq \{Q \in \CQ_j \setminus \CW_{j,i} \mid \exists w \in Q \colon (u,w) \in E_j \wedge \lambda_j(u,w) = c^*\}.\]
 We have
 \[|\CL_{j,i+1}^{u}| \leq d\]
 by Property \ref{item:lift-isomorphisms-4}.
 We define
 \[L_{j,i+1}^{u} \coloneqq \bigcup \CL_{j,i+1}^{u}\]
 and the tournament
 \[T_j^u \coloneqq T_j[L_{j,i+1}^{u}]\]
 for every $j \in \{1,2\}$ and $u \in U_{j,i}$.
 A visualization is also given in Figure \ref{fig:extend-isomorphisms}.

 \begin{figure}
  \centering
  \scalebox{1.1}{
  \begin{tikzpicture}
   \draw[fill = orange!20, orange!20] ($(6,5.5)!0.15!(-0.5,2.5)$) {[rounded corners] -- (-0.5,2.5)} -- ($(6,-1)!0.15!(-0.5,2.5)$) -- cycle;

   \draw[fill = blue!40, blue!40] (4.3,2.5) ellipse (0.5cm and 2cm);

   \draw[thick, rounded corners] (6,5.5) -- (-0.5,2.5) -- (6,-1);

   \node[smallvertex,label={left:{\small $r_1$}}] at (0.5,2.5) {};

   \node[smallvertex,label={left:{\small $u$}}] (u1) at (4.5,3.5) {};
   \node[smallvertex,label={left:{\small $u'$}}] (u2) at (4.5,1.5) {};

   \draw[gray!60, fill=gray!60] (6,4.5) ellipse (0.2cm and 0.4cm);
   \draw[gray!60, fill=gray!60] (6,3.5) ellipse (0.2cm and 0.4cm);
   \draw[gray!60, fill=gray!60] (6,2.5) ellipse (0.2cm and 0.4cm);
   \draw[gray!60, fill=gray!60] (6,1) ellipse (0.2cm and 0.4cm);
   \draw[gray!60, fill=gray!60] (6,0) ellipse (0.2cm and 0.4cm);

   \node[smallvertex] (x1) at (6,4.5) {};
   \node[smallvertex] (x2) at (6,3.5) {};
   \node[smallvertex] (x3) at (6,2.7) {};
   \node[smallvertex] (x4) at (6,2.3) {};
   \node[smallvertex] (x5) at (6,1) {};
   \node[smallvertex] (x6) at (6,0) {};

   \draw[->, thick, Green] (u1) to (x1);
   \draw[->, thick, Green] (u1) to (x2);
   \draw[->, thick, Green] (u1) to (x3);
   \draw[->, thick, Green] (u2) to (x4);
   \draw[->, thick, Green] (u2) to (x5);
   \draw[->, thick, Green] (u2) to (x6);

   \draw[decorate,decoration={brace,amplitude=10pt,mirror,raise=2pt},yshift=0pt,xshift=-4pt] (6.3,2.1) -- (6.3,4.9) node [black,midway,xshift=28pt] {\footnotesize $L_{1,i+1}^{u}$};
   \draw[decorate,decoration={brace,amplitude=10pt,mirror,raise=2pt},yshift=0pt,xshift=12pt] (6.3,-0.4) -- (6.3,2.9) node [black,midway,xshift=28pt] {\footnotesize $L_{1,i+1}^{u'}$};

   \node at (6,1.85) {$\vdots$};
   \node at (4.5,2.6) {$\vdots$};
  \end{tikzpicture}
  }
  \caption{The figure shows the sets $W_{1,i}$ (orange), $U_{1,i}$ (blue) and $L_{1,i+1}^{u}$ computed in the proof of Lemma~\ref{lem:lift-isomorphisms}.
   The color $c^*$ is shown in green and gray regions depict parts of the partition $\CQ_1$.}
  \label{fig:extend-isomorphisms}
 \end{figure}
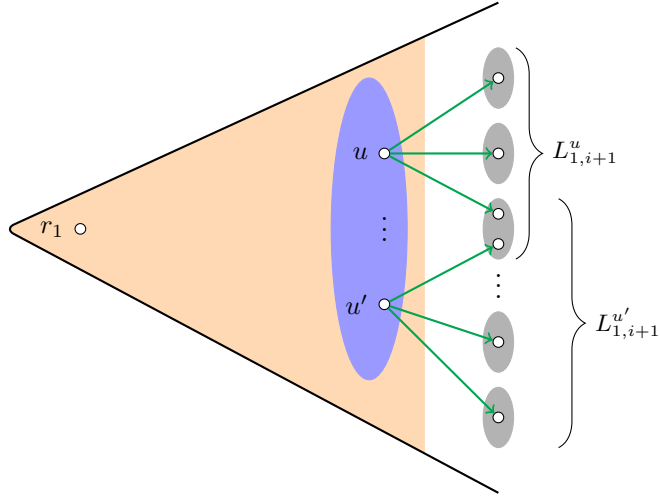

 \begin{claim}
  \label{cl:compute-isomorphism-set-local}
  For every $j,j' \in \{1,2\}$ and every $u \in U_{j,i}$, $u' \in U_{j',i}$ the set $\Iso(T_j^u,T_{j'}^{u'})$ can be computed in time $d^{O(\log d)} \cdot n^{O(1)}$.
 \end{claim}

 \begin{subproof}
  Fix $j,j' \in \{1,2\}$ and $u \in U_{j,i}$, $u' \in U_{j',i}$.
  We define $\CL \coloneqq \CL_{j,i+1}^{u} \uplus \CL_{j',i+1}^{u'}$ (i.e., we take the disjoint union of the two sets).

  For each $Q \in \CL$ we fix an orbit $A_Q \subseteq Q$ of the group $\Aut(T_b[Q])$, where $b \in \{j,j'\}$ is the unique index such that $Q \in \CQ_b$, subject to the following compatibility condition:
  If $b,b' \in \{j,j'\}$, $Q \in \CL \cap \CQ_b$, $Q' \in \CL \cap \CQ_{b'}$, and $\varphi \in \Iso(T_b[Q],T_{b'}[Q'])$ then $\varphi(A_Q) = A_{Q'}$.
  To see that this is possible, recall that if $\varphi \in \Iso(T_b[Q],T_{b'}[Q'])$ then $\varphi^{-1}\Aut(T_j[Q])\varphi = \Aut(T_{b'}[Q'])$.
  Observe that $|A_Q|$ is odd for every $Q \in \CL$.

  We define the tournament $\widetilde{T}$ with vertex set $V(\widetilde{T}) \coloneqq \CL_{j,i+1}^{u}$ and edge set
  \[E(\widetilde{T}) \coloneqq \Big\{(Q,Q') \Bigmid |E_{T_j}(A_Q,A_{Q'})| > |E_{T_j}(A_{Q'},A_Q)|\Big\}.\]
  Similarly, we define the tournament $\widetilde{T}'$ with vertex set
  $V(\widetilde{T}') \coloneqq \CL_{j',i+1}^{u'}$ and edge set
  \[E(\widetilde{T}') \coloneqq \Big\{(Q,Q') \Bigmid |E_{T_{j'}}(A_Q,A_{Q'})| > |E_{T_{j'}}(A_{Q'},A_Q)|\Big\}.\]
  Also, we color the vertices of both tournaments in such a way that vertices $Q,Q' \in \CL$ receive the same color if and only if $\Iso(T_b[Q],T_{b'}[Q']) \neq \emptyset$, where $b,b' \in \{j,j'\}$ are the unique indices such that $Q \in \CQ_b$ and $Q' \in \CQ_{b'}$.

  Observe that $|V(\widetilde{T})| \leq d$ and $|V(\widetilde{T}')| \leq d$.
  So we can compute the set $\Iso(\widetilde{T},\widetilde{T}')$ in time $d^{O(\log d)}$ by Theorem \ref{thm:ti}.
  If $\Iso(\widetilde{T},\widetilde{T}') = \emptyset$ then also $\Iso(T_j^u,T_{j'}^{u'}) = \emptyset$ and we are done.
  So suppose that $\Iso(\widetilde{T},\widetilde{T}') \neq \emptyset$, and let us write
  \[\widetilde{\Delta}\widetilde{\theta} \coloneqq \Iso(\widetilde{T},\widetilde{T}'),\]
  i.e., $\widetilde{\Delta} = \Aut(\widetilde{T})$ and $\widetilde{\theta} \in \Iso(\widetilde{T},\widetilde{T}')$.

  We define $\Delta \leq \Sym(L_{j,i+1}^{u})$ to be the permutation group containing all elements $\delta \in \Sym(L_{j,i+1}^{u})$ such that there are $\widetilde{\delta} \in \widetilde{\Delta}$ and, for every $Q \in \CL_{j,i+1}^{u}$, some $\varphi_Q \in \Iso(T_j[Q],T_j[\widetilde{\delta}(Q)])$ such that
  \[\delta(v) = \varphi_Q(v)\]
  for all $v \in L_{j,i+1}^{u}$ where $Q \in \CL_{j,i+1}^{u}$ is the unique element such that $v \in Q$.
  Observe that $\Delta$ is solvable since it is a wreath product of solvable groups (see Theorem \ref{thm:aut-solvable}).
  Moreover, a generating set for $\Delta$ can be computed in polynomial time (see Section~\ref{sec:groups} for details).

  Next, we define a bijection $\theta\colon L_{j,i+1}^{u} \to L_{j',i+1}^{u'}$ as follows.
  For each $Q \in \CL_{j,i+1}^{u}$ pick an arbitrary isomorphism $\varphi_Q \in \Iso(T_j[Q],T_{j'}[\widetilde{\theta}(Q)])$.
  Then, set
  \[\theta(v) \coloneqq \varphi_Q(v)\]
  for all $v \in L_{j,i+1}^{u}$ where $Q \in \CL_{j,i+1}^{u}$ is the unique element such that $v \in Q$.
  We have that
  \[\Iso(T_j^u,T_{j'}^{u'}) \subseteq \Delta\theta.\]
  So
  \[\Iso(T_j^u,T_{j'}^{u'}) = \Iso_{\Delta\theta}(T_j^u,T_{j'}^{u'})\]
  can be computed in polynomial time using Theorem \ref{thm:gi-solvable-group}.
 \end{subproof}

 Let $U_{i} \coloneqq U_{1,i} \cup U_{2,i}$.
 We define an equivalence relation $\sim$ on $U_i$ by setting $u \sim u'$ if $\Iso(T_j^u,T_{j'}^{u'}) \neq \emptyset$ where $j,j' \in \{1,2\}$ are the unique indices such that $u \in U_{j,i}$ and $u' \in U_{j',i}$.
 We pick an arbitrary equivalence class $U_i^* \subseteq U_i$ of the relation $\sim$.
 Let $U_{j,i}^* \coloneqq U_{j,i} \cap U_i^*$ for every $j \in \{1,2\}$.
 Also, let
 \[L_{j,i+1} \coloneqq \bigcup_{u \in U_{j,i}^*} L_{j,i+1}^u\]
 for every $j \in \{1,2\}$.

 \begin{claim}
  \label{cl:invariant-extension}
  $\varphi(U_{1,i}^*) = U_{2,i}^*$ and $\varphi(L_{1,i+1}) = L_{2,i+1}$ for every $\varphi \in \Iso((T_1,r_1),(T_2,r_2))$.
 \end{claim}

 \begin{subproof}
  Let $\varphi \in \Iso((T_1,r_1),(T_2,r_2))$.
  By \ref{item:invariant-2} we get that $\varphi(W_{1,i}) = W_{2,i}$.
  This implies that $\varphi(U_{1,i}) = U_{2,i}$.

  Let $u \in U_{1,i}^*$.
  Then $\varphi$ restricts to an isomorphism from $T_1^u$ to $T_2^{\varphi(u)}$.
  In particular, $u \sim \varphi(u)$ which implies that $\varphi(u) \in U_{2,i}^*$.
  So $\varphi(U_{1,i}^*) \subseteq U_{2,i}^*$.
  Also note that $\varphi(L_{1,i+1}^u) = L_{2,i+1}^{\varphi(u)}$.
  So $\varphi(L_{1,i+1}) \subseteq L_{2,i+1}$.

  In the other direction, let $u' \in U_{2,i}^*$.
  Then $\varphi^{-1}$ restricts to an isomorphism from $T_1^{\varphi^{-1}(u')}$ to $T_2^{u}$.
  In particular, $\varphi^{-1}(u') \sim u'$ which implies that $\varphi^{-1}(u') \in U_{1,i}^*$.
  So $U_{2,i}^* \subseteq \varphi(U_{1,i}^*)$.
  Also note that $\varphi^{-1}(L_{2,i+1}^{u'}) = L_{1,i+1}^{\varphi^{-1}(u')}$.
  So $L_{2,i+1} \subseteq \varphi(L_{1,i+1})$.
 \end{subproof}

 If $|U_{1,i}^*| \neq |U_{2,i}^*|$ or $|L_{1,i+1}| \neq |L_{2,i+1}|$ then $\Iso((T_1,r_1),(T_2,r_2)) = \emptyset$ and the algorithm terminates.
 So suppose that $|U_{1,i}^*| = |U_{2,i}^*|$ and $|L_{1,i+1}| = |L_{2,i+1}|$.
 We set
 \[W_{j,i+1} \coloneqq W_{j,i} \cup L_{j,i+1}\]
 for both $j \in \{1,2\}$.
 Clearly, \ref{item:invariant-1} and \ref{item:invariant-2} remain satisfied using Claim \ref{cl:invariant-extension}.
 It remains to compute
 \[\Iso((T_1[W_{1,i+1}],r_1),(T_2[W_{2,i+1}],r_2)).\]
 Recall that $\Gamma_i\theta_i = \Iso((T_1[W_{1,i}],r_1),(T_2[W_{2,i}],r_2))$ denotes the isomorphism set computed in the previous iteration; see Equation~\eqref{eq:def-gamma-i}.
 We start by computing the set
 \[\Gamma_i'\theta_i' \coloneqq \Iso_{\Gamma_i\theta_i}((T[W_{1,i}],r_1,U_{1,i}^*),(T[W_{2,i}],r_2,U_{2,i}^*))
 \]
 of all bijections from $\Gamma_i\theta_i$ mapping $U_{1,i}^*$ to $U_{2,i}^*$ in polynomial time using Theorem \ref{thm:hi-solvable-group}
 (recall that $\Gamma_i = \Aut(T[W_{1,i}],r_1)$ is solvable by Theorem \ref{thm:aut-solvable}).
 As before, we may assume the isomorphism set is non-empty; otherwise $\Iso((T_1,r_1),(T_2,r_2)) = \emptyset$ using Invariant \ref{item:invariant-2} and we terminate.

 Now, we consider the sets
 \[A_{j,i+1} \coloneqq \{(u,w) \mid u \in U_{j,i}^*, w \in L_{j,i+1}^{u}\}\]
 for both $j \in \{1,2\}$.
 Intuitively speaking, the reader is encouraged to view $A_{j,i+1}$ as a disjoint union over all sets $L_{j,i+1}^{u}$, $u \in U_{j,i}^*$.
 The basic idea is to first compute suitable ``isomorphisms'' between the sets $A_{1,i+1}$ and $A_{2,i+1}$.
 Afterwards, we ``merge'' elements in the sets $A_{j,i+1}$ with the same second coordinate to obtain the isomorphism set between $L_{1,i+1}$ and $L_{2,i+1}$.

 \begin{claim}
  \[\varphi(A_{1,i+1}) \coloneqq \{(\varphi(u),\varphi(w)) \mid (u,w) \in A_{1,i+1}\} = A_{2,i+1}\]
  for every $\varphi \in \Iso((T_1,r_1),(T_2,r_2))$.
 \end{claim}

 \begin{subproof}
  Let $\varphi \in \Iso((T_1,r_1),(T_2,r_2))$.
  Also let $(u,w) \in A_{1,i+1}$.
  Then $\varphi(u) \in U_{2,i}^*$ by Claim \ref{cl:invariant-extension}.
  Also, $\varphi(L_{1,i+1}^u) = L_{2,i+1}^{\varphi(u)}$.
  In particular, $\varphi(w) \in L_{2,i+1}^{\varphi(u)}$.
  So we get that $(\varphi(u),\varphi(w)) \in A_{2,i+1}$.

  In the other direction, let $(u',w') \in A_{2,i+1}$.
  Then $\varphi^{-1}(u') \in U_{1,i}^*$ by Claim \ref{cl:invariant-extension}.
  Also, $\varphi^{-1}(L_{2,i+1}^{u'}) = L_{1,i+1}^{\varphi^{-1}(u')}$.
  In particular, $\varphi^{-1}(w') \in L_{1,i+1}^{\varphi^{-1}(u')}$.
  So we get that $(\varphi^{-1}(u'),\varphi^{-1}(w')) \in A_{1,i+1}$.
 \end{subproof}

 We now define $\Delta_{i+1} \leq \Sym(A_{1,i+1})$ to be the permutation group containing all elements $\delta \in \Sym(A_{1,i+1})$ such that there is a $\gamma \in \Gamma_i'$ and, for each $u \in U_{1,i}^*$, a permutation $\varphi_u \in \Iso(T_1^u,T_1^{\gamma(u)})$ such that
 \[\delta(u,w) = (\gamma(u),\varphi_u(w))\]
 for all $(u,w) \in A_{1,i+1}$.
 Observe that $\Delta_{i+1}$ is solvable since it is a wreath product of solvable groups.
 Also, a generating set for $\Delta_{i+1}$ can be computed in polynomial time (see Section~\ref{sec:groups} for details).

 We define a bijection $\delta_{i+1}\colon A_{1,i+1} \to A_{2,i+1}$ as follows.
 For each $u \in U_{1,i}^*$ pick an arbitrary isomorphism $\varphi_u \in \Iso(T_1^u,T_2^{\theta_i'(u)})$.
 Then, set
 \[\delta_{i+1}(u,w) \coloneqq (\theta_i'(u),\varphi_u(w))\]
 for all $(u,w) \in A_{1,i+1}$.

 For $j \in \{1,2\}$ we define a hypergraph $\CH_{j,i+1} = (A_{j,i+1},\CE_{j,i+1})$ where
 \[\CE_{j,i+1} \coloneqq \{\{(u,w') \in A_{j,i+1} \mid w' = w\} \mid w \in L_{j,i+1}\}.\]
 We compute
 \[\Delta_{i+1}'\delta_{i+1}' \coloneqq \Iso_{\Delta_{i+1}\delta_{i+1}}(\CH_{1,i+1},\CH_{2,i+1})\]
 and the natural action $\Delta_{i+1}^*\delta_{i+1}^*$ of $\Delta_{i+1}'\delta_{i+1}'$ on $L_{1,i+1}$ (via the correspondence between $L_{j,i+1}$ and $\CE_{j,{i+1}}$).

 We get that
 \[\Iso((T_1[W_{1,i+1}],r_1),(T_2[W_{2,i+1}],r_2)) \subseteq \Gamma_i'\theta_i' \times \Delta_{i+1}^*\delta_{i+1}^*.\]
 Indeed, let $\varphi\in \Iso((T_1[W_{1,i+1}],r_1),(T_2[W_{2,i+1}],r_2))$.
 Then $\varphi(U_{1,i}^*) = U_{2,i}^*$ (see Claim \ref{cl:invariant-extension}) and thus $\varphi|_{W_{1,i}} \in \Gamma_i'\theta_i'$.
 Moreover, the action of $\varphi$ on $A_{1,i+1} = U_{1,i}^* \times L_{1,i+1}$ is in $\Delta_{i+1}'\delta_{i+1}'$ and thus $\varphi|_{L_{1,i+1}} \in \Delta_{i+1}^*\delta_{i+1}^*$.
 In combination, the mapping $\varphi\mapsto (\varphi|_{W_{1,i}},\varphi|_{L_{1,i+1}})$ embeds the isomorphism set $\Iso((T_1[W_{1,i+1}],r_1),(T_2[W_{2,i+1}],r_2))$ into $\Gamma_i'\theta_i' \times \Delta_{i+1}^*\delta_{i+1}^*$.

 Since $\Gamma_i' \times \Delta_{i+1}^*$ is solvable, we can compute $\Iso((T_1[W_{1,i+1}],r_1),(T_2[W_{2,i+2}],r_2))$ in polynomial time using Theorem \ref{thm:gi-solvable-group}.

 This entire procedure is repeated until we reach the stage where $W_{1,\ell} = V_1$ and $W_{2,\ell} = V_2$ (or the algorithm concludes that $\Iso((T_1,r_1),(T_2,r_2)) = \emptyset$).
 At this point the algorithm has computed $\Iso((T_1,r_1),(T_2,r_2))$ as desired.
 Note that $\ell \leq |V_1|$ since in each iteration at least one element is added, i.e., $W_{1,i} \subsetneq W_{1,i+1}$ for all $i \in [\ell-1]$.

 Also observe that all steps can be performed in polynomial time except for the subroutine from Claim \ref{cl:compute-isomorphism-set-local} which runs in time $d^{O(\log d)} \cdot n^{O(1)}$.
 Overall, this gives the desired bound on the running time of the algorithm.
\end{proof}

Building on the subroutine from Lemma \ref{lem:lift-isomorphisms}, we can now design an isomorphism test for $2$-WL-homogeneous tournaments of bounded twin width.

\begin{lemma}
 \label{lem:tww-isomorphism-unicolored}
 There is an algorithm that, given $2$-WL-homogeneous tournaments $T_1$ and $T_2$ and an integer $k \geq 1$, either concludes that $\tww(T_1) > k$ or computes $\Iso(T_1,T_2)$ in time $k^{O(\log k)} \cdot n^{O(1)}$.
\end{lemma}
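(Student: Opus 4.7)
The plan is to preprocess with 2-WL, invoke Lemma~\ref{lem:partition-sequence} to get the partition sequence, and then lift isomorphisms level by level using Lemma~\ref{lem:lift-isomorphisms}.

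Concretely, I would first run 2-WL on $T_1$ and $T_2$, and use the coloring $\WL{2}{T_j}$ to turn each $T_j$ into an arc-colored tournament; if 2-WL distinguishes the two inputs, return $\Iso(T_1,T_2)=\emptyset$. Otherwise, since both tournaments are 2-WL-homogeneous and agree under 2-WL, Lemma~\ref{lem:partition-sequence} produces — for each $j\in\{1,2\}$ — a refining sequence $\{\{v\}\mid v\in V(T_j)\}=\CQ_{j,0}\prec\cdots\prec\CQ_{j,\ell}=\{V(T_j)\}$ together with a common sequence of colors $c_1,\dots,c_\ell$ in the range of $\WL{2}{T_j}$ satisfying properties (1) and (2) of that lemma. (Failure of the algorithm from Lemma~\ref{lem:partition-sequence} certifies $\tww(T_1)>k$.)

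The core of the algorithm is an induction on $i=0,1,\dots,\ell$ that maintains, for every $j,j'\in\{1,2\}$ and every $Q\in\CQ_{j,i}$, $Q'\in\CQ_{j',i}$, the set $\Iso(T_j[Q],T_{j'}[Q'])$ in the compact form of a coset of a solvable group (which is legitimate because tournament automorphism groups are solvable by Theorem~\ref{thm:aut-solvable}, and Theorem~\ref{thm:gi-solvable-group} lets us manipulate such cosets efficiently). The base case $i=0$ is trivial since every part is a singleton. For the inductive step, I would apply Lemma~\ref{lem:lift-isomorphisms} to each pair $(T_j[Q],T_{j'}[Q'])$ with parameter $d:=2k+1$, inner partitions $\CQ_{j,i-1}\cap Q$ and $\CQ_{j',i-1}\cap Q'$ (which are $\lambda_j$-defined by the intra-cluster color set $C:=\{c_1,\dots,c_{i-1}\}$), and cross-cluster color $c^*:=c_i$. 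The inductively stored isomorphism sets between sub-parts of $Q$ and $Q'$ supply hypothesis~\ref{item:lift-isomorphisms-5}, property (2) of Lemma~\ref{lem:partition-sequence} supplies the per-vertex degree bound in hypothesis~\ref{item:lift-isomorphisms-4}, and Lemma~\ref{lem:lift-isomorphisms} returns $\Iso(T_j[Q],T_{j'}[Q'])$ in time $d^{O(\log d)}\cdot n^{O(1)}=k^{O(\log k)}\cdot n^{O(1)}$. After level $\ell$ we read off $\Iso(T_1,T_2)$ from the unique pair $(V(T_1),V(T_2))$.

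The main obstacle is discharging the strong-connectivity requirement on the graphs $G_j$ in hypothesis~\ref{item:lift-isomorphisms-4}: we need the $c_i$-edges alone — not the accumulated colors $c_1,\dots,c_i$ — to strongly connect the sub-parts of $\CQ_{j,i-1}$ that sit inside each part $Q$ of $\CQ_{j,i}$. The starting point is property (1) of Lemma~\ref{lem:partition-sequence}, which says $Q$ is strongly connected in $T_j[\{c_1,\dots,c_i\}]$ while each of its sub-parts is already strongly connected in $T_j[\{c_1,\dots,c_{i-1}\}]$, so $c_i$-edges must supply every missing cyclic link between sub-parts; combining this with 2-WL homogeneity (which is inherited inside $Q$ through the color structure of $\WL{2}{T_j}$) and applying Lemma~\ref{lem:weakly-to-strongly-cc} to the appropriate quotient rules out asymmetric DAG-with-one-back-edge configurations and yields strong connectivity of $G_j$. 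Once this prerequisite is in place, the induction runs cleanly; summing the $k^{O(\log k)}\cdot n^{O(1)}$ cost over the $\ell\leq n$ levels and the $O(n^2)$ relevant pairs $(Q,Q')$ gives the claimed running time.
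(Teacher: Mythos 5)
Your proposal follows essentially the same route as the paper's proof: 2-WL preprocessing, the partition/color sequence from Lemma~\ref{lem:partition-sequence}, and level-by-level lifting via Lemma~\ref{lem:lift-isomorphisms} with $d=2k+1$ and $c^*=c_i$; your concern about the strong-connectivity hypothesis is legitimate but resolves even more directly than you sketch, since any directed path between two vertices of a strongly connected part $Q$ of $T_j[\{c_1,\dots,c_i\}]$ stays inside $Q$ and all its edges of colors $c_1,\dots,c_{i-1}$ stay inside sub-parts, so its projection witnesses strong connectivity of the quotient $G_j$. The one small repair: Lemma~\ref{lem:partition-sequence} should be applied to $T_1$ alone, with $\CQ_{2,i}$ then defined as the strongly connected components of $T_2[c_1,\dots,c_i]$ and the degree bound (2) transferred to $T_2$ via $T_1\simeq_2 T_2$ — running the lemma independently on $T_2$ need not return the same color sequence, and a failure there would only certify $\tww(T_2)>k$, which is not an allowed output.
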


\begin{proof}
 We run $2$-WL on $T_1$ and $T_2$.
 If $2$-WL distinguishes between $T_1$ and $T_2$, then we return that $\Iso(T_1,T_2) = \emptyset$.
 So suppose that $T_1 \simeq_2 T_2$.

 For $j \in \{1,2\}$ we define an arc-coloring via $\lambda_j(v,w) \coloneqq \WL{2}{T_j}(v,w)$ for every $(v,w) \in E(T) \cup \{(v,v) \mid v \in V(T)\}$.
 We write $\widehat{T}_j = (V(T_j),E(T_j),\lambda_j)$ for the corresponding arc-colored version of $T_j$.

 We apply Lemma \ref{lem:partition-sequence} to the graph $T_1$ and obtain a sequence of partitions $\{\{v\} \mid v \in V(T_1)\} = \CQ_{1,0} \prec \dots \prec \CQ_{1,\ell} = \{V(T_1)\}$ of $V(T_1)$ and a sequence of colors $c_1,\dots,c_\ell \in \{\WL{2}{T_1}(v,w) \mid (v,w) \in E(T_1)\}$ satisfying Properties \ref{item:partition-sequence-1} and \ref{item:partition-sequence-2} (or $\tww(T_1) > k$).
 Also, we define $\CQ_{2,i}$ to be the partition into the strongly connected components of $T_2[c_1,\dots,c_i]$.
 By the properties of the $2$-dimensional Weisfeiler-Leman algorithm we conclude that for every $i \in [\ell]$ and every $v \in V(T_2)$ it holds that
 \[|\{Q \in \CQ_{2,i-1} \mid \exists w \in Q \colon \WL{2}{T_2}(v,w) = c_i\}| \leq 2k+1,\]
 i.e., Property \ref{item:partition-sequence-2} is also satisfied for the sequence $\CQ_{2,0} \prec \dots \prec \CQ_{2,\ell}$.

 Now, for every $i \in \{0,\dots,\ell\}$, every $j,j' \in \{1,2\}$, and every $Q \in \CQ_{j,i}$, $Q' \in \CQ_{j',i}$ we inductively compute the set $\Iso(\widehat{T}_j[Q],\widehat{T}_{j'}[Q'])$.
 For $i = 0$ this trivial since $|Q| = |Q'| = 1$.

 So suppose $i \in \{0,\dots,\ell-1\}$, $j,j' \in \{1,2\}$, and $Q \in \CQ_{j,i+1}$, $Q' \in \CQ_{j',i+1}$.
 Let $d \coloneqq 2k+1$.
 We set $\widetilde{T}_1 \coloneqq \widehat{T}_j[Q]$ and $\widetilde{T}_2 \coloneqq \widehat{T}_{j'}[Q']$.
 Also, we set
 \[\widetilde{\CQ}_1 \coloneqq \{\widetilde{Q} \in \CQ_{j,i} \mid \widetilde{Q} \subseteq Q\}\]
 and
 \[\widetilde{\CQ}_2 \coloneqq \{\widetilde{Q} \in \CQ_{j',i} \mid \widetilde{Q} \subseteq Q'\}.\]
 Note that $\widetilde{\CQ}_1$ is a partition of $V(\widetilde{T}_1)$ and $\widetilde{\CQ}_2$ is a partition of $V(\widetilde{T}_2)$.
 Finally, we set $c^* \coloneqq c_{i+1}$ and apply Lemma \ref{lem:lift-isomorphisms} to compute the set
 \[\Iso(\widehat{T}_j[Q],\widehat{T}_{j'}[Q']) = \Iso(\widetilde{T}_1,\widetilde{T}_2)\]
 in time $d^{O(\log d)} \cdot n^{O(1)} = k^{O(\log k)} \cdot n^{O(1)}$.
 Observe that all isomorphism sets required in Lemma \ref{lem:lift-isomorphisms}\ref{item:lift-isomorphisms-5} have already been computed in the previous iteration.

 Finally, observe that $\CQ_{1,\ell} = \{V(T_1)\}$ and $\CQ_{2,\ell} = \{V(T_2)\}$.
 So in iteration $i = \ell$ the algorithm computes
 \[\Iso(\widehat{T}_1,\widehat{T}_2) = \Iso(T_1,T_2).\]

 To bound the running algorithm, one can observe that all steps can be performed in polynomial time except for the subroutine in Lemma \ref{lem:lift-isomorphisms}.
 However, this subroutine is only called at most $\ell \cdot (|V(T_1)| + |V(T_2)|)^2$ many times, which is polynomial in the input size, and runs in time $k^{O(\log k)} \cdot n^{O(1)}$.
 Overall, this gives the desired running time.
\end{proof}

Finally, relying on well-established arguments, we can remove the restriction that all vertices need to receive the same color under $2$-WL to obtain Theorem \ref{thm:tww-isomorphism-intro}.
We reformulate our main algorithmic result in a slightly more precise manner.

\begin{theorem}
 \label{thm:tww-isomorphism}
 There is an algorithm that, given two tournaments $T_1$ and $T_2$ and an integer $k \geq 1$, either concludes that $\tww(T_1) > k$ or computes $\Iso(T_1,T_2)$ in time $k^{O(\log k)} \cdot n^{O(1)}$.
\end{theorem}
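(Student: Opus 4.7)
The plan is to reduce the general case to the $2$-WL-homogeneous case handled by Lemma~\ref{lem:tww-isomorphism-unicolored} via a recursive refinement on vertex colors. Given $T_1$ and $T_2$, first run $2$-WL on both. If the algorithm distinguishes them, return $\Iso(T_1,T_2) = \emptyset$. Otherwise $T_1 \simeq_2 T_2$, so the stable $2$-WL vertex colorings use the same set of colors $c_1,\dots,c_p$ and for $j \in \{1,2\}$ and $i \in [p]$ the classes $P_{j,i} \coloneqq \{v \in V(T_j) \mid \WL{2}{T_j}(v,v) = c_i\}$ have matching sizes. If $p = 1$, then both $T_j$ are $2$-WL-homogeneous and Lemma~\ref{lem:tww-isomorphism-unicolored} directly computes $\Iso(T_1,T_2)$ in time $k^{O(\log k)} n^{O(1)}$ (or correctly concludes $\tww(T_1) > k$).

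If $p \geq 2$, then every $|P_{j,i}| < |V(T_j)|$, so I would recursively invoke the algorithm on each pair $(T_1[P_{1,i}], T_2[P_{2,i}])$ to obtain $\Lambda_i \coloneqq \Iso(T_1[P_{1,i}], T_2[P_{2,i}])$. This recursion is valid because Lemma~\ref{lem:tww-hereditary} ensures each $T_j[P_{j,i}]$ still has twin width at most $k$, and it is well-founded because the vertex count strictly decreases. If any $\Lambda_i$ is empty, return $\emptyset$; otherwise write $\Lambda_i = \Gamma_i \theta_i$ with $\Gamma_i \coloneqq \Aut(T_1[P_{1,i}])$ solvable by Theorem~\ref{thm:aut-solvable}, and let $\theta\colon V(T_1) \to V(T_2)$ be the bijection obtained by concatenating the $\theta_i$ over the color classes.

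To combine the pieces, I would observe that the $2$-WL vertex coloring is an isomorphism invariant and $T_1 \simeq_2 T_2$, so every $\varphi \in \Iso(T_1,T_2)$ satisfies $\varphi(P_{1,i}) = P_{2,i}$ for all $i$. Hence $\Iso(T_1,T_2) \subseteq \Gamma \theta$ where $\Gamma \coloneqq \prod_i \Gamma_i$ is solvable as a direct product of solvable groups. Viewing $T_1$ and $T_2$ as arc-colored tournaments by their stable $\WL{2}{}$-colorings (which does not change the set $\Iso(T_1,T_2)$ since $\WL{2}{}$ is isomorphism-invariant), Theorem~\ref{thm:gi-solvable-group} computes $\Iso(T_1,T_2) = \Iso_{\Gamma\theta}(T_1,T_2)$ in polynomial time given the generators of the $\Gamma_i$.

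For the running time, each recursive call strictly reduces the vertex count, so the recursion tree has at most $n$ leaves and depth at most $n$; the total number of calls is polynomial. The only superpolynomial cost appears at the leaves, where Lemma~\ref{lem:tww-isomorphism-unicolored} is invoked and contributes $k^{O(\log k)} n^{O(1)}$; summed over the recursion this is still $k^{O(\log k)} n^{O(1)}$. Rather than a genuine obstacle, the main subtlety is the bookkeeping: ensuring that the cross-tournament alignment of color classes is induced by $2$-WL, that the arc-colored reformulation correctly feeds into Theorem~\ref{thm:gi-solvable-group}, and that solvability propagates through the direct product. All of these follow immediately from Theorems~\ref{thm:aut-solvable} and~\ref{thm:gi-solvable-group} together with the canonicity of $\WL{2}{}$.
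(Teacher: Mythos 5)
Your proposal is correct and takes essentially the same route as the paper: decompose along the $2$-WL vertex-color classes, obtain the isomorphism cosets of the pieces from Lemma~\ref{lem:tww-isomorphism-unicolored} (propagating a possible ``$\tww > k$'' answer via Lemma~\ref{lem:tww-hereditary}), and combine them inside the solvable direct product using Theorems~\ref{thm:aut-solvable} and~\ref{thm:gi-solvable-group}. The only deviation is that you recurse on color classes until the pieces are $2$-WL-homogeneous, whereas the paper's proof of Theorem~\ref{thm:tww-isomorphism} invokes Lemma~\ref{lem:tww-isomorphism-unicolored} directly on each class in a single level; your recursive variant is harmless (and even slightly more careful about the homogeneity hypothesis) but does not change the substance of the argument.
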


\begin{proof}
 We run $2$-WL on $T_1$ and $T_2$.
 If $2$-WL distinguishes between $T_1$ and $T_2$, then we return that $\Iso(T_1,T_2) = \emptyset$.
 So suppose that $T_1 \simeq_2 T_2$.

 Let $C_V \coloneqq \{\WL{2}{T_1}(v,v) \mid v \in V(T_1)\}$ denote the set of \emph{vertex colors} of the coloring $\WL{2}{T_1}$.
 For $c \in C_V$ and $j \in \{1,2\}$ we define
 \[X_{j,c} \coloneqq \{v \in V(T_j) \mid \WL{2}{T_j}(v,v) = c\}.\]
 For every $c \in C_V$ we use the algorithm from Lemma \ref{lem:tww-isomorphism-unicolored} to compute the set
 \[\Gamma_c\theta_c \coloneqq \Iso(T_1[X_{1,c}],T_2[X_{2,c}]).\]
 If there is some $c \in C_V$ for which the algorithm from Lemma \ref{lem:tww-isomorphism-unicolored} concludes that $\tww(T_1[X_{1,c}]) > k$, then $\tww(T_1) > k$ by Lemma \ref{lem:tww-hereditary}.

 Moreover, if there is a color $c \in C_V$ such that $\Iso(T_1[X_{1,c}],T_2[X_{2,c}]) = \emptyset$, then $\Iso(T_1,T_2) = \emptyset$.
 Otherwise, let
 \[\Gamma \coloneqq \bigtimes_{c \in C_V} \Gamma_c \leq \Sym(V(T_1))\]
 and $\theta\colon V(T_1) \to V(T_2)$ be the bijection defined via $\theta(v) \coloneqq \theta_c(v)$ for the unique $c \in C_V$ such that $v \in X_{1,c}$.
 Then
 \[\Iso(T_1,T_2) \subseteq \Gamma\theta.\]
 Moreover, the group $\Gamma_c$ is solvable for every $c \in C_V$ by Theorem \ref{thm:aut-solvable}.
 This implies that $\Gamma$ is solvable.
 So
 \[\Iso(T_1,T_2) = \Iso_{\Gamma\theta}(T_1,T_2)\]
 can be computed in polynomial time using Theorem \ref{thm:gi-solvable-group}.
\end{proof}

\section{The WL-Dimension of Tournaments of Bounded Twin Width}
\label{sec:wl}

In this section, we prove Theorem \ref{thm:wl-tournament-tww}.
The proof relies on well-known game-characterizations of the Weisfeiler-Leman algorithm as well as the graph parameter tree width.

\subsection{Games}

First, we cover the cops-and-robber game that characterizes tree width of graphs.
Fix some integer $k \geq 1$.
For an undirected graph $G$, we define the cops-and-robber game $\CopRob_k(G)$ as follows:

\begin{itemize}
 \item The game has two players called \emph{Cops} and \emph{Robber}.
 \item The game proceeds in rounds, each of which is associated with a pair of positions
 $(\bar v,u)$ with~$\bar v \in \big(V(G)\big)^k$ and~$u \in V(G)$.
 \item To determine the initial position, the Cops first choose a tuple $\bar v = (v_1,\dots,v_k) \in \big(V(G)\big)^k$ and then the Robber chooses some vertex $u \in V(G) \setminus \{v_1,\dots,v_k\}$ (if no such $u$ exists, the Cops win the play).
  The initial position of the game is then set to $(\bar v,u)$.
 \item Each round consists of the following steps.
  Suppose the current position of the game is $(\bar v,u) = ((v_1,\dots,v_k),u)$.
  \begin{itemize}
   \item[(C)] The Cops choose some $i \in [k]$ and $v' \in V(G)$.
   \item[(R)] The Robber chooses a vertex $u' \in V(G)$ such that there exists a path from $u$ to $u'$ in the graph $G \setminus \{v_1,\dots,v_{i-1},v_{i+1},\dots,v_k\}$.
    After that, the game moves to the new position $\big((v_1,\dots,v_{i-1},v',v_{i+1},\dots,v_k), u'\big)$.
  \end{itemize}

  If $u \in \{v_1,\dots,v_k\}$ the Cops win.
  If there is no position of the play such that the Cops win, then the Robber wins.
\end{itemize}

We say that the Cops (and the Robber, respectively) \emph{win $\CopRob_k(G)$} if the Cops (and the Robber, respectively) have a winning strategy for the game.
We also say that \emph{$k$ cops can catch a robber on $G$} if the Cops have a winning strategy in this game.

\begin{theorem}[\cite{SeymourT93}]
 \label{thm:cops-and-robbers-characterization-of-treewidth}
 An undirected graph $G$ has tree width at most $k$ if and only if $k+1$ cops can catch a robber on $G$.
\end{theorem}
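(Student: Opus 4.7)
The plan is to prove both directions of this classical theorem due to Seymour and Thomas, using the structure of tree decompositions on one side and a monotonicity argument for the game on the other.

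For the $(\Rightarrow)$ direction, I would start from a tree decomposition $(T, (B_t)_{t \in V(T)})$ of $G$ of width at most $k$, so $|B_t| \leq k+1$ for all $t$, and describe a winning strategy for $k+1$ cops. After rooting $T$ arbitrarily at a node $t_0$, the strategy maintains the invariant that the cops occupy some bag $B_t$ while the robber lies in $V_t \setminus B_t$, where $V_t \coloneqq \bigcup_{s \in T_t} B_s$ is the union of bags in the subtree rooted at $t$. Initially, the cops place themselves on $B_{t_0}$; the robber's choice then uniquely determines a child $t_1$ of $t_0$ with the robber in $V_{t_1}$, by the standard separation property of tree decompositions. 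Inductively, to move from $B_t$ to $B_{t'}$ for the relevant child $t'$, the cops swap one at a time: a cop at some vertex of $B_t \setminus B_{t'}$ is moved to an uncovered vertex of $B_{t'} \setminus B_t$. Throughout each such swap, $B_t \cap B_{t'}$ remains fully occupied by the other cops, so the robber is confined to the component of $G \setminus (B_t \cap B_{t'})$ containing him, which lies inside $V_{t'}$. After finitely many swaps the cops occupy $B_{t'}$, and the robber is one level deeper in $T$; iteration forces the robber to a leaf where he is caught.

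For the $(\Leftarrow)$ direction, I would start from a winning strategy for $k+1$ cops, first pass to a \emph{monotone} winning strategy—one in which the set of vertices still accessible to the robber never grows over the course of the play—and then read off a tree decomposition from the tree of positions under that strategy. Each node of this position tree is labeled by the bag of current cop positions, which has size at most $k+1$; the tree structure records how the accessible region of the robber is split by successive cop moves. The tree-decomposition axioms—that every edge appears inside some bag, and that the bags containing any fixed vertex form a connected subtree—then follow from monotonicity together with the game semantics and the separation imposed at every position.

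The main obstacle is the passage from an arbitrary winning strategy to a monotone one, which is the genuine content of the Seymour-Thomas theorem. I expect to prove it by a submodularity argument on the family of possible robber-reachable vertex sets (equivalently, by arguing via havens or brambles dual to tree decompositions), showing that any non-monotone winning strategy can be tightened without increasing the number of cops. The forward direction, by comparison, is a direct exploitation of the separator property of tree decompositions and should go through as sketched above without surprises.
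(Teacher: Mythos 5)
This theorem is not proved in the paper at all: it is imported verbatim from Seymour and Thomas \cite{SeymourT93}, so there is no internal argument to compare yours against. Your forward direction is the standard one and is essentially fine: root the decomposition, keep the cops on a bag, and shuttle them bag-to-bag while the intersection $B_t \cap B_{t'}$ stays occupied, which is a separator between the two sides of the tree edge; the only details to add are the bookkeeping for the helicopter semantics of $\CopRob_{k+1}$ as defined in Section 5.1 (all $k+1$ cops are always placed, possibly stacked, so bags of size less than $k+1$ cause no trouble) and the observation that each component of $G \setminus B_t$ lies in a single child subtree.

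The genuine gap is the converse. You reduce it to the claim that any winning strategy for $k+1$ cops can be replaced by a monotone one (equivalently, to the haven/bramble--tree-width duality), and then say you ``expect'' to prove this by a submodularity argument. That step is not a technical afterthought; it is the entire substance of the Seymour--Thomas theorem, and nothing in your proposal actually carries it out. Note also that this is precisely the half the present paper relies on: Lemma 5.5 uses ``$\tw(G) \geq k+1$ implies the Robber wins $\CopRob_{k+1}(G)$,'' which is the contrapositive of ``cops win $\Rightarrow$ $\tw \leq k$,'' so one cannot get away with only the easy direction. A complete argument would either (a) formalize the monotonization: define the border $\partial X$ of a robber-accessible region $X$, prove submodularity of $X \mapsto |\partial X|$, and use it to show that a cop move that enlarges the robber space can be replaced without increasing the number of cops, then verify that the position tree of a monotone strategy satisfies all three tree-decomposition axioms (in particular connectivity of the set of bags containing a fixed vertex, which you only assert ``follows from monotonicity''); or (b) go through the dual object directly: show that if the Robber loses, $G$ has no haven (or bramble) of order $k+2$, and prove the min-max theorem that absence of such a haven forces $\tw(G) \leq k$. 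Either route is a multi-page induction/exchange argument, and as written your proposal names the destination but does not supply the proof.
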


Next, we discuss a game-theoretic characterization of the Weisfeiler-Leman algorithm.

Let $k \geq 1$.
For graphs $G$ and $H$ on the same number of vertices, we define the \emph{bijective $k$-pebble game} $\BP_{k}(G,H)$ as follows:
\begin{itemize}
 \item The game has two players called \emph{Spoiler} and \emph{Duplicator}.
 \item The game proceeds in rounds, each of which is associated with a pair of positions
 $(\bar v,\bar w)$ with~$\bar v \in \big(V(G)\big)^k$ and~$\bar w \in \big(V(H)\big)^k$.
 \item To determine the initial position, Duplicator plays a bijection $f\colon \big(V(G)\big)^k \rightarrow \big(V(H)\big)^k$ and Spoiler chooses some $\bar v \in \big(V(G)\big)^k$.
  The initial position of the game is then set to $(\bar v,f(\bar v))$.
 \item Each round consists of the following steps.
  Suppose the current position of the game is $(\bar v,\bar w) = ((v_1,\ldots,v_k),(w_1,\ldots,w_k))$.
  \begin{itemize}
   \item[(S)] Spoiler chooses some $i \in [k]$.
   \item[(D)] Duplicator picks a bijection $f\colon V(G) \rightarrow V(H)$.
   \item[(S)] Spoiler chooses $v \in V(G)$ and sets $w \coloneqq f(v)$.
    Then the game moves to position $\big(\bar v[i/v], \bar w[i/w]\big)$ where $\bar v[i/v] \coloneqq (v_1,\dots,v_{i-1},v,v_{i+1},\dots,v_k)$ is the tuple obtained from $\bar v$ by replacing the $i$-th entry by $v$.
  \end{itemize}

  If mapping each $v_i$ to $w_i$ does not define an isomorphism of the induced subgraphs of $G$ and $H$, Spoiler wins the play.
  More precisely, Spoiler wins if there are~$i,j\in [k]$ such that~$v_i = v_j \nLeftrightarrow w_i =w_j$ or~$v_iv_j \in E(G) \nLeftrightarrow w_iw_j \in E(H)$.
  If there is no position of the play such that Spoiler wins, then Duplicator wins.
\end{itemize}

We say that Spoiler (and Duplicator, respectively) \emph{wins $\BP_k(G,H)$} if Spoiler (and Duplicator, respectively) has a winning strategy for the game.
Also, for a position $(\bar v,\bar w)$ with $\bar v \in \big(V(G)\big)^k$ and $\bar w \in \big(V(H)\big)^k$, we say that Spoiler (and Duplicator, respectively) \emph{wins $\BP_k(G,H)$ from position $(\bar v,\bar w)$} if Spoiler (and Duplicator, respectively) has a winning strategy for the game started at position $(\bar v,\bar w)$.

\begin{theorem}[\cite{CaiFI92,Hella96}]
 \label{thm:eq-wl-pebble}
 Suppose $k \geq 2$ and let $G$ and $H$ be two directed graphs.
 Then $G \simeq_k H$ if and only if Duplicator wins the game $\BP_{k+1}(G,H)$.
\end{theorem}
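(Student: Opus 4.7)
The plan is to prove both directions by induction on iterations/rounds, following the classical Cai--F\"urer--Immerman / Hella argument. The bridge between the two notions is that Duplicator's ability to survive one round of $\BP_{k+1}$ from a given $(k+1)$-tuple position corresponds exactly to one refinement step of $k$-WL on all $k$-subtuples obtained by forgetting one coordinate. The ``$+1$'' in the pebble count reflects the fact that in a single round Spoiler holds $k$ coordinates fixed while modifying one, mirroring the WL refinement rule that replaces a single entry of a $k$-tuple.

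For the direction $(\Leftarrow)$, assume Duplicator wins $\BP_{k+1}(G,H)$. I would prove by induction on $i$ that whenever Duplicator wins the $i$-round subgame from a position $(\bar v,\bar w)\in V(G)^{k+1}\times V(H)^{k+1}$, then $\WLit{k}{i}{G}(\bar v_{-j})=\WLit{k}{i}{H}(\bar w_{-j})$ for every coordinate $j$, where $\bar v_{-j}$ denotes the $k$-tuple obtained by deleting the $j$-th entry. The base case $i=0$ follows from the atomic-type condition in the game's losing rule. For the inductive step, fix $j$ and let Spoiler pick coordinate $j$; Duplicator's winning bijection $f\colon V(G)\to V(H)$ then induces a witnessing correspondence showing that the multisets $\CM_i(\bar v_{-j})$ and $\CM_i(\bar w_{-j})$ appearing in the refinement of $\WLit{k}{i+1}{}$ coincide. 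To finish, one uses Duplicator's initial bijection on $V^k$, which is color-preserving for $\WL{k}{}$ (by the fixed-point property), to deduce that every color class of $\WL{k}{G}$ has the same size as the corresponding class in $\WL{k}{H}$, i.e.\ $G\simeq_k H$.

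For the direction $(\Rightarrow)$, assume $G\simeq_k H$. I would describe Duplicator's strategy maintaining the invariant that at every reachable position $(\bar v,\bar w)$, all corresponding $k$-subtuples satisfy $\WL{k}{G}(\bar v_{-j}) = \WL{k}{H}(\bar w_{-j})$. For the initial move, equality of color-class sizes allows Duplicator to choose a color-preserving bijection $f\colon V(G)^k\to V(H)^k$; extending this to $(k+1)$-tuples (by any padding convention) gives a starting position satisfying the invariant. In each round, when Spoiler picks coordinate $i$, the fact that $\WL{k}{}$ is its own refinement fixed point means that the $\CM$-multisets of the relevant projected $k$-tuples agree across $G$ and $H$; pairing up equal-color single-coordinate extensions yields the required bijection $f\colon V(G)\to V(H)$ on that round, and Spoiler's subsequent choice of $v$ keeps the invariant intact by construction.

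The main obstacle is accounting precisely for the dimension shift: the game manipulates $(k+1)$-tuples while WL colors $k$-tuples, so the correspondence is not an identification but a family of projections indexed by the ``inactive'' coordinate. Keeping this bookkeeping consistent through both the initial-move step (a bijection on $V^k$) and the per-round steps (bijections on $V$) is where care is needed. Once the invariant is formulated correctly in terms of \emph{all} one-coordinate projections, the induction goes through by repeated appeal to the definition of $\CM_i$ and to the existence of color-preserving bijections guaranteed by $G\simeq_k H$.
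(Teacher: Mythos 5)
The paper itself gives no proof of this statement: it is imported verbatim from Cai--F\"urer--Immerman and Hella, so there is no in-paper argument to compare against. Your outline is the standard direct proof of that classical equivalence, and its skeleton (induction on WL iterations for one direction, an invariant on all one-coordinate projections of the $(k+1)$-tuple position for the other) is the right one. However, as written it skips exactly the step where this argument is delicate.

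In the direction $(\Rightarrow)$, when Spoiler picks up coordinate $i$, Duplicator must supply a \emph{single} bijection $f\colon V(G)\to V(H)$ such that for every $v$ the new position satisfies the invariant for \emph{all} $k+1$ projections simultaneously. Saying that ``the $\CM$-multisets of the relevant projected $k$-tuples agree'' and ``pairing up equal-color extensions yields the required bijection'' hides the problem: each projection $\bar v_{-j}$ with $j\neq i$ yields its own matching, and these matchings need not cohere. The standard fix is to take $f$ from the single multiset equality $\CM(\bar v_{-i})=\CM(\bar w_{-i})$ for the projection deleting the re-pebbled coordinate --- its elements are full $k$-tuples of colors, one entry per substitution position, so one $f$ handles all positions at once --- and then to transfer the conclusion to the other projections, using that $(\bar v[i/v])_{-j}$ and $\bar v_{-i}[v/m_j]$ differ only by a permutation of coordinates. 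That transfer needs the (easy but nowhere stated) lemma that the colorings $\WLit{k}{r}{G}$ are invariant under simultaneous coordinate permutations; the same lemma is needed in your $(\Leftarrow)$ induction to identify the game positions $(\bar v[j/v])_{-j'}$ with the substituted tuples occurring in $\CM_i(\bar v_{-j})$. Two further points are off as stated: in $\BP_{k+1}$ the initial bijection is on $(k+1)$-tuples, not $k$-tuples, so ``extending by any padding convention'' does not work --- an arbitrary padding violates the invariant, and the extension must be assembled from per-tuple bijections exactly as in the round move; conversely, in $(\Leftarrow)$ the initial bijection gives equal color-class sizes for $k$-tuples only after a small counting step (each $k$-tuple has exactly $|V(G)|$ extensions to a $(k+1)$-tuple). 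Finally, you never argue that the invariant implies the position is not losing; this is where $k\geq 2$ enters, since any two pebbled coordinates must survive in some common projection. None of this derails the approach --- it is the classical one and completable --- but these are precisely the steps your sketch leaves unproved.
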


\subsection{CFI Graphs and Tournaments}

To prove that the WL algorithm on its own is unable to determine isomorphisms between tournaments of bounded twin width, we adapt a construction of Cai, F{\"{u}}rer and Immerman \cite{CaiFI92}.
Towards this end, we first describe a construction of directed graphs with large WL dimension, and then argue how to translate those graphs into tournaments while preserving their WL dimension.

More concretely, the Cai-F{\"{u}}rer-Immerman (CFI) construction \cite{CaiFI92} takes an undirected base graph $G$, and replaces every vertex with a certain gadget, and those gadgets are connected along the edges of $G$.
To obtain a lower bound on the WL dimension that is linear in the number of vertices, one typically uses $3$-regular expander graphs as a base graph (more precisely, we require a bounded-degree graph $G$ with tree width that is linear in the number of vertices).
In order to translate this construction to tournaments of small twin-width, we need to tackle some hurdles.
First of all, in the standard CFI construction, each gadget encodes a linear equation over $\ZZ_2$ via its automorphism group.
However, the automorphism group of a tournament always has odd order (see the comments before Theorem \ref{thm:aut-solvable}).
For this reason, we rely on a simple variant of the gadgets that encode linear equations over $\ZZ_3$.
At this point, we can employ a simple idea to translate a CFI graph into a tournament:
We fix a linear order $<$ on the vertex set of $G$ (to be specified later), and represent edges of the CFI graph by a ``forward'' edge in the tournament (i.e., an edge that points to the larger element in the linear order) whereas non-edges are represented by ``backward'' edges.

To bound the twin width of the resulting tournament, the basic idea to build the contraction sequence is to first contract each of the CFI gadgets to a single vertex.
At this point, intuitively speaking, we obtain (the tournament representation of) the trigraph $G^{\red}$ obtained from $G$ by replacing every edge with a red edge.
So, in order to bound the twin width, we choose $G$ to be a $3$-regular expander graph such that $G^{\red}$ has small twin width; the existence of such graphs is shown in \cite{BonnetGKTW22}.
Finally, for the contraction sequence of $G^{\red}$ to be compatible with the tournament representation described above, we choose the linear order $<$ so that the twin width of $(G^{\red},<)$ is small; this is possible by Lemma~\ref{lem:twin-width-order}.

In the following, we formally describe the construction of tournaments with large WL dimension.
Let $G$ be a connected, $3$-regular undirected graph.
Let $G^{\red}$ denote the structure obtained from $G$ be replacing every edge with a red edge and let $t \coloneqq \tww(G^{\red})$.
By Lemma \ref{lem:twin-width-order} there is some linear order $<$ on $V(G)$ such that
\begin{equation}
 \label{eq:order-twin-width}
 \tww(G^\red,<) = t.
\end{equation}
Also, let $\vec{G}$ be an arbitrary orientation of $G$.

We start by providing the basic CFI construction (where gadgets encode linear equations over $\ZZ_3$).
We stress that the specific construction described below is heavily tailored towards enabling the later translation into tournaments.

Recall that for every $v \in V(G)$ we denote by $E_+(v)$ the set of outgoing edges and $E_-(v)$ the set of incoming edges in $\vec{G}$.
Also, we write $E(v)$ to denote the set of incident (undirected) edges in $G$.
For $a \in \ZZ_3$ we define
\[M_a(v) \coloneqq \Big\{f\colon E(v) \to \ZZ_3 \Bigmid \sum_{(v,w) \in E_+(v)} f(\{v,w\}) - \sum_{(w,v) \in E_-(v)} f(\{v,w\}) = a \pmod{3}\Big\}\]
We also define $F_a(v)$ to contain all pairs $(f,g) \in (M_a(v))^2$ such that, for the minimal element $w \in N_G(v)$ (with respect to $<$) such that $f(vw) \neq g(vw)$, it holds that
\[f(vw) + 1 = g(vw) \pmod{3}.\]
Observe that, for every distinct $f,g \in M_a(v)$, either $(f,g) \in F_a(v)$ or $(g,f) \in F_a(v)$.

Let $\alpha\colon V(G) \to \ZZ_3$ be a function.
We define the graph $\CFI_3(\vec{G},<,\alpha)$ with vertex set
\[V(\CFI_3(\vec{G},<,\alpha)) \coloneqq \bigcup_{v \in V(G)} \{v\} \times M_{\alpha(v)}(v)\]
and edge set
\begin{align*}
 E(\CFI_3(\vec{G},<,\alpha)) \coloneqq \quad  &\Big\{\{(v,f),(w,g)\} \Bigmid vw \in E(G) \wedge f(vw) = g(vw)\Big\}\\
                                       \cup\; &\Big\{((v,f),(v,g)) \Bigmid (f,g) \in F_{\alpha(v)}(v)\Big\}.
\end{align*}
Observe that $\CFI_3(\vec{G},<,\alpha)$ is a \emph{mixed graph}, i.e., it contains both directed and undirected edges.
Also, we color the vertices of $\CFI_3(\vec{G},<,\alpha)$ using the coloring $\lambda\colon V(\CFI_3(\vec{G},<,\alpha)) \to V(G)$ defined via $\lambda(v,f) \coloneqq v$ for all $(v,f) \in V(\CFI_3(\vec{G},<,\alpha))$, i.e., each set $\{v\} \times M_{\alpha(v)}(v)$ forms a color class under $\lambda$.

Let us analyze the relevant properties of the graphs $\CFI_3(\vec{G},<,\alpha)$.
We start by investigating isomorphisms between the graphs $\CFI_3(\vec{G},<,\alpha)$ for different mappings $\alpha\colon V(G) \to \ZZ_3$.

\begin{lemma}
 \label{lem:twist-along-path}
 Let $\alpha\colon V(G) \to \ZZ_3$ be a function and let $v,w \in V(G)$ be distinct vertices.
 Also let $P$ be a path from $v$ to $w$ in $G$.
 Let $\beta\colon V(G) \to \ZZ_3$ denote the function defined via
 \[\beta(u) \coloneqq \begin{cases}
                       \alpha(u) + 1 \pmod{3} &\text{if } u = v\\
                       \alpha(u) - 1 \pmod{3} &\text{if } u = w\\
                       \alpha(u)              &\text{otherwise}
                      \end{cases}.\]
 Then there is an isomorphism $\varphi\colon \CFI_3(\vec{G},<,\alpha) \cong \CFI_3(\vec{G},<,\beta)$ such that
 \[\varphi(u,f) = (u,f)\]
 for all $u \in V(G) \setminus V(P)$ and $f \in M_{\alpha(u)}(u)$.
\end{lemma}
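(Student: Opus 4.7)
The plan is to define $\varphi$ to be the identity on every gadget $\{u\}\times M_{\alpha(u)}(u)$ with $u\notin V(P)$, and to twist the gadgets along $P$ by a carefully chosen shift on each path edge. Write $P=u_0u_1\cdots u_k$ with $u_0=v$, $u_k=w$, and set $e_i\coloneqq u_{i-1}u_i$ for $i\in[k]$. For each $i$ let $\epsilon_i\in\{+1,-1\}$ record the orientation of $e_i$ in $\vec{G}$, with $\epsilon_i=+1$ iff $(u_{i-1},u_i)\in E(\vec{G})$, and let $c_i\in\ZZ_3$ equal $+1$ or $-1$ according as $\epsilon_i=+1$ or $\epsilon_i=-1$. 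For $u_j\in V(P)$ and $f\in M_{\alpha(u_j)}(u_j)$, define $f'\colon E(u_j)\to\ZZ_3$ by $f'(e_i)\coloneqq f(e_i)+c_i$ for the path edge(s) $e_i$ incident to $u_j$ and $f'(e)\coloneqq f(e)$ on every other incident edge, and set $\varphi(u_j,f)\coloneqq (u_j,f')$.

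The first task is to verify that $f'\in M_{\beta(u_j)}(u_j)$. The contribution of $f(e_i)$ to $\sum_{\text{out}}f-\sum_{\text{in}}f$ at $u_{i-1}$ has sign $+\epsilon_i$ and at $u_i$ has sign $-\epsilon_i$, so shifting by $c_i=\epsilon_i$ alters the signed sum by $+1$ at $u_{i-1}$ and by $-1$ at $u_i$. At $v=u_0$ only $e_1$ is incident and the net change is $+1=\beta(v)-\alpha(v)$; at $w=u_k$ only $e_k$ is incident and the change is $-1=\beta(w)-\alpha(w)$; at an interior vertex $u_j$ the two path edges contribute $-1+1=0=\beta(u_j)-\alpha(u_j)$. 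Hence each $f\mapsto f'$ is a bijection $M_{\alpha(u_j)}(u_j)\to M_{\beta(u_j)}(u_j)$, and $\varphi$ respects the colouring $\lambda$ since it does not move any vertex between gadgets.

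It remains to check that $\varphi$ preserves the edge relation. For an undirected cross-gadget edge $\{(u,f),(u',g)\}$, arising from $uu'\in E(G)$ with $f(uu')=g(uu')$: if $uu'=e_i$ is a path edge then both endpoints are shifted by the same constant $c_i$ and the equality persists, while if $uu'$ is not a path edge then neither side is touched. For a directed intra-gadget edge given by $(f,g)\in F_{\alpha(u)}(u)$: because both $f$ and $g$ are shifted by the same constant on each path edge incident to $u$, the pointwise differences satisfy $f'(uw)-g'(uw)=f(uw)-g(uw)$ for every neighbour $w$. Hence the $<$-minimum neighbour on which $f$ and $g$ disagree coincides with the one for $(f',g')$, and the defining identity $g(uw)-f(uw)\equiv 1\pmod{3}$ is preserved. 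Consequently $(f,g)\in F_{\alpha(u)}(u)$ iff $(f',g')\in F_{\beta(u)}(u)$, so $\varphi$ is an isomorphism.

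The only technical subtlety is handling the two possible orientations of each path edge; the uniform choice $c_i\equiv\epsilon_i\pmod{3}$ is what makes the endpoint contributions and the interior cancellations fall out without any case split, and the telescoping along $P$ is what concentrates the net effect on the two designated vertices $v$ and $w$.
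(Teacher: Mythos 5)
Your proposal is correct and follows essentially the same route as the paper: the paper also twists each gadget on the path by adding a shift function that equals $+1$ or $-1$ ($\equiv 2$) on the incident path edges according to whether the $\vec{G}$-orientation agrees with the path direction, and is $0$ elsewhere, which is exactly your $c_i=\epsilon_i$. Your explicit verification of the signed-sum telescoping and of the preservation of the cross-gadget and $F_{\alpha(u)}(u)$-edges matches what the paper leaves as "easy to check."
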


\begin{proof}
 Let $E(P)$ denote the set of directed edges on the path $P$ (all edges are directed so that we obtain a path from $v$ to $w$).
 For every $u \in V(G)$ we define the function $h_{u}\colon E(u) \to \ZZ_3$ via
 \[h_u(\{u,u'\}) = \begin{cases}
                    1 &\text{if } (u,u') \in E(P) \wedge (u,u') \in E(\vec{G})\\
                    2 &\text{if } (u,u') \in E(P) \wedge (u',u) \in E(\vec{G})\\
                    2 &\text{if } (u',u) \in E(P) \wedge (u,u') \in E(\vec{G})\\
                    1 &\text{if } (u',u) \in E(P) \wedge (u',u) \in E(\vec{G})\\
                    0 &\text{otherwise}
                   \end{cases}.\]

 We define $\varphi(u,f) \coloneqq (u,f + h_u)$ for all $(u,f) \in \CFI_3(\vec{G},<,\alpha)$.
 It is easy to check that $\varphi\colon \CFI_3(\vec{G},<,\alpha) \cong \CFI_3(\vec{G},<,\beta)$ and the desired properties are satisfied.
\end{proof}

\begin{corollary}
 \label{cor:cfi-isomorphism}
 Let $\alpha,\beta\colon V(G) \to \ZZ_3$ be two functions such that
 \[\sum_{v \in V(G)} \alpha(v) = \sum_{v \in V(G)} \beta(v) \pmod{3}.\]
 Then $\CFI_3(\vec{G},<,\alpha) \cong \CFI_3(\vec{G},<,\beta)$.
\end{corollary}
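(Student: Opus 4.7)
The plan is to bootstrap Lemma \ref{lem:twist-along-path} by decomposing the difference $\gamma \coloneqq \beta - \alpha$ (viewed as a function $V(G) \to \ZZ_3$) into a $\ZZ_3$-linear combination of ``elementary'' differences $\mathbf{1}_v - \mathbf{1}_w$ and then composing the corresponding twist-isomorphisms. By hypothesis, $\sum_{v \in V(G)} \gamma(v) \equiv 0 \pmod{3}$, so $\gamma$ lies in the kernel of the sum map $\ZZ_3^{V(G)} \to \ZZ_3$, which is precisely the span of these elementary differences.

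To make the decomposition concrete, I would fix an arbitrary base vertex $v_0 \in V(G)$ and verify the identity
\[
  \gamma \;=\; \sum_{v \in V(G) \setminus \{v_0\}} \gamma(v)\cdot\bigl(\mathbf{1}_v - \mathbf{1}_{v_0}\bigr),
\]
where $\mathbf{1}_u$ denotes the indicator function of $u$. Evaluated at any $v \neq v_0$ the identity is immediate, and at $v_0$ it reduces to $-\sum_{v \neq v_0} \gamma(v) \equiv \gamma(v_0) \pmod{3}$, which is exactly the sum hypothesis on $\gamma$. This writes $\gamma$ as an ordered list of at most $2(|V(G)|-1)$ elementary differences $\mathbf{1}_v - \mathbf{1}_{v_0}$ (each $v \neq v_0$ contributing $\gamma(v) \in \{0,1,2\}$ copies).

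Next, using connectivity of $G$, for each $v \in V(G) \setminus \{v_0\}$ I would fix a path $P_v$ from $v$ to $v_0$. Setting $\alpha_0 \coloneqq \alpha$, I process the decomposition term by term: at the $i$-th step, I apply Lemma \ref{lem:twist-along-path} to the current function $\alpha_i$, the pair $(v,v_0)$, and the path $P_v$, obtaining an isomorphism $\CFI_3(\vec{G},<,\alpha_i) \cong \CFI_3(\vec{G},<,\alpha_{i+1})$ where $\alpha_{i+1} = \alpha_i + (\mathbf{1}_v - \mathbf{1}_{v_0})$ modulo $3$. After exhausting the sum, the accumulated function equals $\alpha + \gamma = \beta$, so composing all the isomorphisms in order yields the required $\CFI_3(\vec{G},<,\alpha) \cong \CFI_3(\vec{G},<,\beta)$.

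I do not anticipate a serious obstacle. The only point that actually invokes the hypothesis on sums is the verification of the identity for $\gamma$ at the base vertex $v_0$; everything else is bookkeeping, and since Lemma \ref{lem:twist-along-path} is stated uniformly in the starting function $\alpha$, the iterative application and composition of twist-isomorphisms cause no difficulty.
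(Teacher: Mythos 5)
Your proof is correct and takes essentially the same route as the paper: the paper's argument is precisely ``repeatedly apply Lemma~\ref{lem:twist-along-path} to gradually align $\alpha$ and $\beta$,'' and your explicit decomposition of $\beta-\alpha$ into elementary differences $\mathbf{1}_v-\mathbf{1}_{v_0}$ (using connectivity to supply the paths and the sum hypothesis at the base vertex) simply spells out the bookkeeping that the paper leaves implicit.
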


\begin{proof}
 This follows from repeatedly applying Lemma \ref{lem:twist-along-path} to gradually align the two functions $\alpha$ and $\beta$.
\end{proof}

Now fix an arbitrary vertex $u_0 \in V(G)$.
For every $p \in \ZZ_3$ we define the mapping $\alpha_p\colon V(G) \to \ZZ_3$ via $\alpha_p(u_0) \coloneqq p$ and $\alpha_p(w) \coloneqq 0$ for all $w \in V(G) \setminus \{u_0\}$.

\begin{lemma}
 \label{lem:cfi-non-isomorphism}
 $\CFI_3(\vec{G},<,\alpha_0) \not\cong \CFI_3(\vec{G},<,\alpha_1)$.
\end{lemma}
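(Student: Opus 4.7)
The plan is to assume an isomorphism $\varphi\colon \CFI_3(\vec{G},<,\alpha_0)\cong\CFI_3(\vec{G},<,\alpha_1)$ exists, extract from it a function $s\colon E(G)\to\ZZ_3$ of edge shifts, and derive a contradiction by summing a local identity over all vertices of $G$.

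Since the vertex colouring $\lambda(v,f)=v$ must be preserved, $\varphi$ restricts for every $v\in V(G)$ to a bijection $\varphi_v\colon M_{\alpha_0(v)}(v)\to M_{\alpha_1(v)}(v)$. For each edge $e=\{v,w\}\in E(G)$, the undirected edges of $\CFI_3(\vec{G},<,\alpha_0)$ between $\{v\}\times M_{\alpha_0(v)}(v)$ and $\{w\}\times M_{\alpha_0(w)}(w)$ partition into three bicliques indexed by the common value $c=f(e)=g(e)\in\ZZ_3$. Since $\varphi$ preserves undirected edges, these bicliques map to the analogous bicliques in $\CFI_3(\vec{G},<,\alpha_1)$, which yields a single permutation $\pi_e$ of $\ZZ_3$ with $\varphi_v(f)(e)=\pi_e(f(e))$ for every $f\in M_{\alpha_0(v)}(v)$ (and the same $\pi_e$ controls $\varphi_w$ on the $e$-coordinate).

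The main obstacle, where the gadgets $F_a(v)$ do their work, is to show that each $\pi_e$ is a translation $c\mapsto c+s(e)$ of $\ZZ_3$. Fix $v$ with neighbours $w_1<w_2<w_3$ and let $e_i=\{v,w_i\}$. For $i\in\{1,2\}$ I pick any $j>i$ and choose $f,g\in M_{\alpha_0(v)}(v)$ that agree outside $\{e_i,e_j\}$ and satisfy $g(vw_i)=f(vw_i)+1$ (the value at $e_j$ is then forced by the linear constraint defining $M_{\alpha_0(v)}(v)$). Then $e_i$ is the first coordinate on which $f$ and $g$ differ, and $g$ exceeds $f$ there by $1$, so $(f,g)\in F_{\alpha_0(v)}(v)$; since $\pi_{e_j}$ is a bijection, the pair $(\varphi_v(f),\varphi_v(g))$ still first differs at $e_i$, and preservation of the directed edge forces $\pi_{e_i}(f(vw_i)+1)=\pi_{e_i}(f(vw_i))+1$. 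Letting $f(vw_i)$ vary over $\ZZ_3$ shows $\pi_{e_1}$ and $\pi_{e_2}$ are shifts. The remaining edge $e_3$ cannot be treated this way, since $w_3$ is maximal among the neighbours of $v$, but writing $\pi_{e_1}(c)=c+s_1$ and $\pi_{e_2}(c)=c+s_2$ and substituting into the single linear constraint $\varphi_v(f)\in M_{\alpha_1(v)}(v)$ (which must hold for every $f\in M_{\alpha_0(v)}(v)$) solves for $\pi_{e_3}$ uniquely and shows that it, too, is a shift.

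With $s\colon E(G)\to\ZZ_3$ now defined globally, the condition $\varphi_v(f)\in M_{\alpha_1(v)}(v)$ unpacks to
\[\sum_{(v,w)\in E_+(v)} s(\{v,w\}) \;-\; \sum_{(w,v)\in E_-(v)} s(\{v,w\}) \;\equiv\; \alpha_1(v)-\alpha_0(v)\pmod 3\]
for every $v\in V(G)$. Summing over $v\in V(G)$, each edge of $\vec{G}$ contributes $+s$ at its tail and $-s$ at its head, so the left-hand side telescopes to $0$. Hence $\sum_{v}\alpha_1(v)\equiv\sum_{v}\alpha_0(v)\pmod 3$, which contradicts $\sum_{v}\alpha_0(v)=0$ and $\sum_{v}\alpha_1(v)=1$ by the choice of $\alpha_0$ and $\alpha_1$.
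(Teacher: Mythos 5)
Your proof is correct; both of the nontrivial steps hold up (the undirected edges between two adjacent gadgets really do split into three bicliques, namely the connected components of that bipartite graph, so a single permutation $\pi_e$ of $\ZZ_3$ governs the $e$-coordinate at both endpoints; and your use of $F_{\alpha_0(v)}(v)$ with the order on the neighbours correctly forces $\pi_{e_1},\pi_{e_2}$, and then $\pi_{e_3}$ via the linear constraint, to be translations). However, your route differs from the paper's. You share the final counting step: a function on $E(G)$ whose signed local sums equal $\alpha_1(v)-\alpha_0(v)$ at every vertex telescopes to $0$ over $V(G)$, contradicting $\sum_v\alpha_1(v)-\sum_v\alpha_0(v)\equiv 1\pmod 3$. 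But the paper obtains such a function much more cheaply: it pushes the single canonical transversal $X=\{(v,\mathbf{0}_{E(v)})\mid v\in V(G)\}$ through $\varphi$; colour preservation gives one image vertex $(v,f_v)$ per gadget, preservation of the undirected inter-gadget edges gives $f_v(\{v,w\})=f_w(\{v,w\})$ for every $\{v,w\}\in E(G)$, and these values are already the desired assignment. In particular, the paper's proof of this lemma never uses the directed intra-gadget edges $F_a(v)$ or the order $<$. Your argument analyses the full coordinatewise action of $\varphi$ and proves it acts by per-edge shifts $s(e)$, which is strictly more than the lemma needs: once you have the permutations $\pi_e$, evaluating at the all-zero member of $M_{\alpha_0(v)}(v)$ and setting $f(e)\coloneqq\pi_e(0)$ would collapse your proof to the paper's, making the translation step superfluous. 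What your longer route buys is structural information about arbitrary isomorphisms between these CFI-type graphs (they act by translations on each edge coordinate), essentially the first step toward determining their automorphism groups, at the cost of an extra lemma that this particular non-isomorphism statement does not require.
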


\begin{proof}
 Let $X \coloneqq \{(v,\mathbf{0}_{E(v)}) \mid v \in V(G)\}$ where $\mathbf{0}_{E(v)}$ denotes the function that maps every element in $E(v)$ to $0$.
 Note that $(\CFI_3(\vec{G},<,\alpha_0))[X]$ is isomorphic to $G$ and contains exactly one vertex from each color class.

 Now suppose towards a contradiction that $\varphi\colon\CFI_3(\vec{G},<,\alpha_0) \cong \CFI_3(\vec{G},<,\alpha_1)$.
 Let $Y \coloneqq \varphi(X)$.
 Since $\varphi$ is an isomorphism, the set $Y$ also contains exactly one vertex from every color class, i.e., for every $v \in V(G)$ there is a unique $f_v \in M_{\alpha_1(v)}(v)$ such that $(v,f_v) \in Y$.
 Also, $f_v(\{v,w\}) = f_w(\{v,w\})$ for all edges $\{v,w\} \in E(G)$ because $\{(v,f_v),(w,f_w)\} \in E(\CFI_3(\vec{G},<,\alpha_1))$.
 This means we can define combine all the mappings $f_v$, $v \in V(G)$, into one assignment $f \colon E(G) \to \ZZ_3$ via
 \[f(\{v,w\}) \coloneqq f_v(\{v,w\}) = f_w(\{v,w\}).\]
 For every $v \in V(G)$ it holds that
 \[\sum_{(v,w) \in E_+(v)} f(\{v,w\}) - \sum_{(w,v) \in E_-(v)} f(\{v,w\}) = \alpha_1(v) \pmod{3}.\]
 So
 \[\sum_{v \in V(G)} \left(\sum_{(v,w) \in E_+(v)} f(\{v,w\}) - \sum_{(w,v) \in E_-(v)} f(\{v,w\})\right) = \sum_{v \in V(G)}\alpha_1(v) \pmod{3}.\]
 However, we have that
 \[\sum_{v \in V(G)} \left(\sum_{(v,w) \in E_+(v)} f(\{v,w\}) - \sum_{(w,v) \in E_-(v)} f(\{v,w\})\right) = 0 \pmod{3}\]
 whereas
 \[\sum_{v \in V(G)}\alpha_1(v) = 1 \pmod{3}.\]
 This is a contradiction.
\end{proof}

Now, we analyze the WL algorithm on the graphs $\CFI_3(\vec{G},<,\alpha)$ for different functions $\alpha\colon V(G) \to \ZZ_3$.
Similar results have been obtained in prior works; see, e.g., \cite{DawarR07,Neuen24}.

\begin{lemma}
 \label{lem:cfi-indistinguishable}
 Let $k$ be an integer such that $\tw(G) \geq k+1$.
 Also let $\alpha,\beta\colon V(G) \to \ZZ_3$ be two functions.
 Then $\CFI_3(\vec{G},<,\alpha) \simeq_k \CFI_3(\vec{G},<,\beta)$.
\end{lemma}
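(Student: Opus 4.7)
The plan is to translate the task, via Theorem~\ref{thm:eq-wl-pebble}, into showing that Duplicator wins the bijective $(k+1)$-pebble game $\BP_{k+1}(\CFI_3(\vec{G},<,\alpha),\CFI_3(\vec{G},<,\beta))$. If $\sum_{v\in V(G)}\alpha(v) \equiv \sum_{v\in V(G)}\beta(v) \pmod{3}$, then Corollary~\ref{cor:cfi-isomorphism} gives an isomorphism and Duplicator wins trivially by copying along it. So the real work is in the case where the sums differ, and we henceforth focus on it. In that case, by Theorem~\ref{thm:cops-and-robbers-characterization-of-treewidth}, the assumption $\tw(G)\geq k+1$ gives Robber a winning strategy in $\CopRob_{k+1}(G)$; this strategy will drive Duplicator's play.

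I would have Duplicator simulate the Robber strategy on the ``shadow'' of the pebbled vertices in $G$. At every position of the pebble game the pebbled vertices $(v_1,f_1),\ldots,(v_{k+1},f_{k+1})$ on the $\alpha$-side and $(v_1',g_1),\ldots,(v_{k+1}',g_{k+1})$ on the $\beta$-side must lie in matching color classes, so their projections yield a common shadow tuple $\bar v\in (V(G))^{k+1}$ (after Duplicator ensures $v_i=v_i'$). The invariant to maintain is: $v_i = v_i'$ for every $i$; the Robber has a current position $u\in V(G)\setminus\{v_1,\ldots,v_{k+1}\}$ consistent with his winning strategy against the cops playing $\bar v$; and there exists a function $\gamma\colon V(G)\to\ZZ_3$ with $\sum_v\gamma(v) = \sum_v\beta(v)$ that agrees with $\alpha$ on every vertex of $V(G)\setminus\{u\}$ outside a set disjoint from the shadow, together with an isomorphism $\varphi\colon \CFI_3(\vec G,<,\alpha)\cong \CFI_3(\vec G,<,\gamma)$ sending $(v_i,f_i)\mapsto (v_i,g_i)$ for every $i$. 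Intuitively, the ``discrepancy'' between $\alpha$ and $\beta$ is routed, via repeated applications of Lemma~\ref{lem:twist-along-path}, to be concentrated at the unpebbled Robber vertex $u$, where the pebble game cannot see it.

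At each round, when Spoiler announces index $i$, the shadow plays a cop move by lifting pebble $i$; Robber responds by moving along a path in $G\setminus\{v_1,\ldots,v_{i-1},v_{i+1},\ldots,v_{k+1}\}$ to some new position $u'$, and the cops choose a new $v_i'\in V(G)$ (picked by Spoiler later). Before knowing Spoiler's choice, Duplicator announces a bijection $F\colon V(\CFI_3(\vec G,<,\alpha))\to V(\CFI_3(\vec G,<,\beta))$ color class by color class. On each color class $M_{\alpha(w)}(w)$ with $w\ne u'$ (and, as long as Robber can still escape, with $w$ not ``trapped'' by the cops), the bijection is induced by an isomorphism of the two CFI graphs obtained by twisting the discrepancy along a path from $u$ to $u'$ via Lemma~\ref{lem:twist-along-path}; this yields a legitimate bijection $M_{\alpha(w)}(w)\to M_{\beta(w)}(w)$ that extends the current partial isomorphism on the pebbled positions. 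On the color class $M_{\alpha(u')}(u')$, where the twist is currently parked, we still have $|M_{\alpha(u')}(u')| = |M_{\beta(u')}(u')|$ (both are cardinality $3^{\deg(u')-1}$), so an arbitrary bijection between them is available. After Spoiler commits to $v_i'$ and a pebbled vertex on one side, Duplicator's bijection tells us which vertex is selected on the other side; the invariant is re-established using Robber's new position $u'$ and a fresh application of Lemma~\ref{lem:twist-along-path}.

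The main obstacle will be making this invariant airtight, in particular choosing the bijection on the ``twisted'' color class $M_{\alpha(u')}(u')$ so that Spoiler cannot gain an advantage by pebbling there: one needs to verify that no matter which element Spoiler selects in that class, the resulting configuration still admits an isomorphism of the two CFI graphs (with a new $\gamma$) that realizes the partial map on the updated pebble set. This is handled by absorbing Spoiler's choice into a further twist using Lemma~\ref{lem:twist-along-path} along a path that stays inside the component of $G$ minus the new shadow to which Robber's strategy points -- exactly the reason we need the cops-and-robber game to be on $k+1$ cops rather than $k$. Assembling these pieces yields a winning Duplicator strategy and hence $\CFI_3(\vec G,<,\alpha)\simeq_k\CFI_3(\vec G,<,\beta)$, as claimed.
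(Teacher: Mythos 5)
Your overall plan coincides with the paper's: reduce via Theorem~\ref{thm:eq-wl-pebble} to the bijective $(k+1)$-pebble game, invoke the Robber's winning strategy in $\CopRob_{k+1}(G)$ from Theorem~\ref{thm:cops-and-robbers-characterization-of-treewidth}, and keep the twist parked at the Robber's unpebbled vertex by moving it with Lemma~\ref{lem:twist-along-path}. However, two steps are genuinely flawed. First, your invariant is unsatisfiable as stated: you demand an isomorphism $\varphi\colon \CFI_3(\vec G,<,\alpha)\cong\CFI_3(\vec G,<,\gamma)$ for a $\gamma$ with $\sum_v\gamma(v)=\sum_v\beta(v)$, while you explicitly restrict to the case $\sum_v\alpha(v)\not\equiv\sum_v\beta(v)\pmod 3$. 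By Corollary~\ref{cor:cfi-isomorphism} we have $\CFI_3(\vec G,<,\gamma)\cong\CFI_3(\vec G,<,\beta)$, so your invariant could only ever hold when the two given graphs are already isomorphic; in the case of interest (e.g.\ $\alpha=\alpha_0$, $\beta=\alpha_1$) it contradicts Lemma~\ref{lem:cfi-non-isomorphism}. The bookkeeping has to go the other way: take $\gamma$ agreeing with $\alpha$ off the Robber vertex $u$ and absorbing the entire discrepancy at $u$ (so $\CFI_3(\vec G,<,\gamma)\cong\CFI_3(\vec G,<,\beta)$), and maintain an isomorphism from $\CFI_3(\vec G,<,\gamma)$ to $\CFI_3(\vec G,<,\beta)$ carrying the pebbled vertices, using that these avoid the class of $u$ and hence are vertices of both $\CFI_3(\vec G,<,\alpha)$ and $\CFI_3(\vec G,<,\gamma)$ with identical incident structure among the pebbled classes.

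Second, your round step has the quantifiers in the wrong order. The Robber's escape vertex depends on where the new cop goes, and Spoiler reveals that only after seeing Duplicator's bijection; so you cannot first fix a single $u'$ and then treat the class of $u'$ specially. In the paper the bijection is assembled class by class: for every potential destination $v\in V(G)$ one takes the Robber's answer $u'(v)$ to the cop moving onto $v$, moves the twist from $u$ to $u'(v)$ along a path avoiding the other cop positions (Lemma~\ref{lem:twist-along-path}, which is the identity off the path, hence fixes the other pebbles), and uses the resulting isomorphism to define the bijection on the class of $v$. With this order the twist never sits in the class that actually gets pebbled, and no class ever receives an ``arbitrary bijection''. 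Your arbitrary bijection on the Robber's class is not a cosmetic shortcut: if a neighbouring class of $u'$ is already pebbled, an arbitrary choice can send Spoiler's pick to a vertex disagreeing with that pebble on the shared edge value, and Spoiler wins immediately; the post hoc ``further twist'' you propose cannot repair this, since the bijection was committed before Spoiler chose and twisting cannot change an image already announced. This is exactly the ``main obstacle'' you flag, and it is resolved only by the per-class construction keyed to the Robber's responses, not by an after-the-fact correction.
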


\begin{proof}
 Using Corollary \ref{cor:cfi-isomorphism} it suffices to show that $\CFI_3(\vec{G},<,\alpha_p) \simeq_k \CFI_3(\vec{G},<,\alpha_q)$ for all $p,q \in \ZZ_3$.
 In fact, since $\simeq_k$ is an equivalence relation, we only need to prove that $\CFI_3(\vec{G},<,\alpha_0) \simeq_k \CFI_3(\vec{G},<,\alpha_q)$ for all $q \in \{1,2\}$.

 So let us fix $q \in \{1,2\}$.
 Since $\tw(G) \geq k+1$, the Robber has a winning strategy in the cops-and-robber game $\CopRob_{k+1}(G)$ by Theorem \ref{thm:cops-and-robbers-characterization-of-treewidth}.
 We translate the winning strategy for the Robber in $\CopRob_{k+1}(G)$ into a winning strategy for Duplicator in the $(k+1)$-bijective pebble game played on $\CFI_3(\vec{G},<,\alpha_0)$ and $\CFI_3(\vec{G},<,\alpha_q)$.
 Using Theorem \ref{thm:eq-wl-pebble} this implies that $\CFI_3(\vec{G},<,\alpha_0) \simeq_k \CFI_3(\vec{G},<,\alpha_q)$ as desired.

 For a tuple $\bar x = (x_1,\dots,x_{k+1}) \in (V(\CFI_3(\vec{G},<,\alpha_0)))^{k+1}$, we define $A(\bar x) \coloneqq (v_1,\dots,v_{k+1})$ where $v_i \in V(G)$ is the unique vertex such that $x_i \in \{v_i\} \times M_{0}(v_i)$.

 We first construct the bijection $f$ for the initialization round.
 Let $\bar v = (v_1,\dots,v_{k+1}) \in (V(G))^{k+1}$, and consider a tuple $\bar x = (x_1,\dots,x_{k+1}) \in (V(\CFI_3(\vec{G},<,\alpha_0)))^{k+1}$ with $A(\bar x) = \bar v$.
 Now let $u$ be the vertex chosen by the Robber if the Cops initially place themselves on $A(\bar x)$.
 We define $\alpha_{u,q}\colon V(G) \to \ZZ_3$ via $\alpha_{u,q}(u) \coloneqq q$ and $\alpha_{u,q}(w) \coloneqq 0$ for all $w \in V(G) \setminus \{u\}$.
 Let $P$ be a path from $u$ to $u_0$ (recall that $G$ is connected), and let $\varphi$ denote the isomorphism from $\CFI_3(\vec{G},<,\alpha_{u,q})$ to $\CFI_3(\vec{G},<,\alpha_{q})$ constructed in Lemma \ref{lem:twist-along-path}.
 We set $f(\bar x) \coloneqq (\varphi(x_1),\dots,\varphi(x_{k+1}))$.
 It is easy to see that this gives a bijection $f$ (we use the same isomorphism $\varphi$ for all tuples $\bar x$ such that $A(\bar x) = \bar v$).
 Observe that $x_i \in V(\CFI_3(\vec{G},<,\alpha_{u,q}))$ since $\alpha_{u,q}(v_i) = 0 = \alpha_0(v_i)$ for all $i \in [k+1]$.

 Now, throughout the game, Duplicator maintains the following invariant.
 Let $(\bar x,\bar y)$ denote the current position.
 Then there is a vertex $u \in V(G)$ and a bijection $\varphi\colon \CFI_3(\vec{G},<,\alpha_{u,q}) \to \CFI_3(\vec{G},<,\alpha_{q})$ such that
 \begin{itemize}
  \item $\varphi(\bar x) = \bar y$,
  \item $u$ does not appear in the tuple $A(\bar x)$, and
  \item the Robber wins from the position $(A(\bar x), u)$, i.e., if the Cops are placed on $A(\bar x)$ and the Robber is on $u$.
 \end{itemize}
 Note that this condition is satisfied by construction after the initialization round.
 Also observe that Duplicator never looses the game in such a position.
 Indeed, since $u$ does not appear in the tuple $A(\bar x)$, the graphs $\CFI_3(\vec{G},<,\alpha_0)$ and $\CFI_3(\vec{G},<,\alpha_{u,q})$ are identical when restricted to vertices from $\bar x$.
 Since $\varphi$ is an isomorphism from $\CFI_3(\vec{G},<,\alpha_{u,q})$ to $\CFI_3(\vec{G},<,\alpha_{q})$ and $\varphi(\bar x) = \bar y$, Duplicator does not loose in the position $(\bar x,\bar y)$.

 So it remains to show that Duplicator can maintain the above invariant in each round of the $(k+1)$-bijective pebble game.
 Suppose $(\bar x,\bar y)$ is the current position.
 Also let $(A(\bar x), u)$ be the associated position in the cops-and-robber game.
 Suppose that $A(\bar x) = (v_1,\dots,v_{k+1})$.

 Let $i \in [k+1]$ denote the index chosen by Spoiler.
 We describe the bijection $f$ chosen by Duplicator.
 Let $v \in V(G)$.
 Let $u'$ be the vertex the Robber moves to if the Cops choose $i$ and $v$ (i.e., the $i$-th cop changes its position to $v$) in the position $(A(\bar x),u)$.
 Let $P$ denote a path from $u$ to $u'$ that avoids $\{v_1,\dots,v_{k+1}\} \setminus \{v_i\}$.
 Let $\psi$ denote the isomorphism from $\CFI(G,\alpha_{u',q})$ to $\CFI(G,\alpha_{u,q})$ constructed in Lemma \ref{lem:twist-along-path}.
 We set $f(x) \coloneqq \varphi(\psi(x))$ for all $x \in \{v\} \times M_{0}(v)$.

 It is easy to see that $f$ is a bijection.
 Let $x$ denote the vertex chosen by Spoiler and let $y \coloneqq f(x)$.
 Let $\bar x' \coloneqq \bar x[i/x]$ and $\bar y' \coloneqq \bar y[i/y]$, i.e., the pair $(\bar x',\bar y')$ is the new position of the game.
 Also, we set $\varphi' \coloneqq \psi \circ \varphi$ where $\psi$ denotes the isomorphism from $\CFI(G,\alpha_{u',q})$ to $\CFI(G,\alpha_{u,q})$ used in the definition of $f(x)$.

 We have $\varphi'(x) = y$ by definition.
 All the other entries of $\bar x'$ are fixed by the mapping $\psi$ (see Lemma \ref{lem:twist-along-path}) which overall implies that $\varphi'(\bar x') = \bar y'$.
 Also, $u'$ does not appear in the tuple $A(\bar x')$ by construction, and the Robber wins from the position $(A(\bar x'), u')$.

 So overall, this means that Duplicator can maintain the above invariant which provides the desired winning strategy.
\end{proof}

Together, Lemmas \ref{lem:cfi-non-isomorphism} and \ref{lem:cfi-indistinguishable} provide pairs of non-isomorphic graphs that are not distinguished by $k$-WL assuming $\tw(G) > k$.
Next, we argue how to turn these graphs into tournaments.

We define the tournament $T = T(\vec{G},<,\alpha)$ with vertex set
\[V(T) \coloneqq V(\CFI_3(\vec{G},<,\alpha)) = \bigcup_{v \in V(G)} \{v\} \times M_{\alpha(v)}(v)\]
and edge set
\begin{align*}
 E(T) \coloneqq \quad  &\Big\{((v,f)(v,g)) \Bigmid (f,g) \in F_{\alpha(v)}(v)\Big\}\\
                \cup\; &\Big\{((v,f)(w,g)) \Bigmid v < w \wedge \{(v,f)(w,g)\} \notin E(\CFI_3(\vec{G},<,\alpha))\Big\}\\
                \cup\; &\Big\{((w,g)(v,f)) \Bigmid v < w \wedge \{(v,f)(w,g)\} \in E(\CFI_3(\vec{G},<,\alpha))\Big\}.
\end{align*}

We argue that the relevant properties are preserved by this translation.

\begin{lemma}
 \label{lem:cfi-tournament-isomorphism}
 Let $\alpha,\beta\colon V(G) \to \ZZ_3$ be two functions.
 Then $T(\vec{G},<,\alpha) \cong T(\vec{G},<,\beta)$ if and only if $\CFI_3(\vec{G},<,\alpha) \cong \CFI_3(\vec{G},<,\beta)$.
\end{lemma}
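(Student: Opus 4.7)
The \emph{forward direction} is a direct verification. Any isomorphism $\varphi\colon\CFI_3(\vec{G},<,\alpha)\cong\CFI_3(\vec{G},<,\beta)$ preserves the vertex coloring $\lambda(v,f)=v$ and therefore satisfies $\varphi(\{v\}\times M_{\alpha(v)}(v))=\{v\}\times M_{\beta(v)}(v)$ for every $v\in V(G)$; in particular $\varphi$ acts as the identity on first coordinates and preserves the order~$<$. The arcs of $T$ come in three types: within-gadget arcs given by $F_{\alpha(v)}(v)$ (which are also arcs of $\CFI$), cross-gadget arcs $(v,f)\to(w,g)$ with $v<w$ and $\{(v,f),(w,g)\}\notin E(\CFI_3(\vec{G},<,\alpha))$, and cross-gadget arcs $(w,g)\to(v,f)$ with $v<w$ and $\{(v,f),(w,g)\}\in E(\CFI_3(\vec{G},<,\alpha))$. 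Since $\varphi$ preserves $\CFI$-arcs, $\CFI$-adjacency, and first-coordinate order, it preserves every arc of $T$, hence is a tournament isomorphism.

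For the \emph{backward direction}, let $\psi\colon T(\vec{G},<,\alpha)\cong T(\vec{G},<,\beta)$. The crux is the following claim: $\psi(\{v\}\times M_{\alpha(v)}(v))=\{v\}\times M_{\beta(v)}(v)$ for every $v\in V(G)$. Granting this, we read off a $\CFI$-isomorphism by inverting the translation: within-gadget arcs of $T$ are literally the within-gadget arcs of $\CFI$, and between gadgets $M_v,M_w$ with $v<w$ an arc $(v,f)\to(w,g)$ in $T$ encodes a $\CFI$ non-edge while $(w,g)\to(v,f)$ encodes a $\CFI$ edge. Hence $\psi$ preserves $\CFI$-adjacency and (by the claim) the coloring $\lambda$, so it is a $\CFI$-isomorphism.

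To prove the claim, the strategy is to show that the gadget partition is canonical, that is, definable from the tournament structure alone. The characteristic signature is the arc distribution between any two gadgets $M_v$ and $M_w$: of the $81$ arcs between them, all point from the lower to the higher gadget in $<$ when $vw\notin E(G)$, whereas for $vw\in E(G)$ they split $27$ to $54$. Combined with the asymmetric directed structure induced by $F_{\alpha(v)}(v)$ inside each $9$-vertex gadget, this signature is detected by the stable coloring of $2$-WL on $T$, so that $2$-WL refines the gadget partition; consequently every tournament isomorphism must respect gadgets.

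\emph{Main obstacle.} The technical heart of the argument is the canonicity claim for the gadget partition, i.e.\ verifying (via case analysis of the local arc patterns above) that the two between-gadget signatures, together with the asymmetric $F$-arcs, cannot be mimicked across gadget boundaries. This is the step where the assumptions that $G$ is $3$-regular and connected, and that the $F$-arcs faithfully record the cyclic $\ZZ_3$ twist inside each gadget, are essentially used.
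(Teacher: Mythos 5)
Your overall architecture matches the paper's: the direction from a $\CFI$-isomorphism to a tournament isomorphism is routine (the paper dismisses it in one sentence, since $T(\vec G,<,\cdot)$ is defined isomorphism-invariantly from the colored $\CFI$ graph), and the substance of the lemma is exactly the claim you isolate, namely that any tournament isomorphism $\psi$ satisfies $\psi(\{v\}\times M_{\alpha(v)}(v))=\{v\}\times M_{\beta(v)}(v)$ for every $v$. The problem is that you do not prove this claim: you sketch a strategy (``the gadget partition is canonical, detected by the stable coloring of $2$-WL'') and then explicitly defer the verification as the ``main obstacle''. That deferred verification \emph{is} the content of the lemma, so as it stands the proposal has a genuine gap rather than a complete proof.

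Two concrete points about why the sketched route falls short. First, your argument only aims at showing that gadgets map to gadgets, but the claim you need (and correctly state) is stronger: the gadget of $v$ must map to the gadget of the \emph{same} base vertex $v$, because the $\CFI$ graphs carry the vertex coloring $\lambda(v,f)=v$, so only a base-vertex-preserving map yields $\CFI_3(\vec G,<,\alpha)\cong\CFI_3(\vec G,<,\beta)$. Distinguishing the gadget of $v_i$ from the gadget of $v_\ell$ for $i\neq\ell$ is precisely where work is needed; the $81$-versus-$27/54$ cross-gadget signature alone cannot do it, since it is the same for every pair of gadgets of a given adjacency type. Second, the paper's actual mechanism for this is an induction along the linear order $<$: assuming $\varphi(X_j)=Y_j$ for $j<i$, it counts in-neighbors inside the suffix $X_{i,n}$, obtaining $4+3\,|N_G(v_i)\cap\{v_{i+1},\dots,v_n\}|$ for a vertex in gadget $i$ versus $4+9(\ell-i)-3\,|N_G(v_\ell)\cap\{v_i,\dots,v_{\ell-1}\}|+3\,|N_G(v_\ell)\cap\{v_{\ell+1},\dots,v_n\}|$ for a vertex in gadget $\ell\geq i$, and uses $3$-regularity to force $\ell=i$. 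Something of this nature (an order-sensitive counting, not just local arc statistics) is what your ``case analysis'' would have to supply; appealing to $2$-WL refining the gadget partition is itself an unproved assertion of comparable difficulty, so it cannot be cited as a black box here.
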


\begin{proof}
 The backward direction is immediately clear since $T(\vec{G},<,\alpha)$ and $T(\vec{G},<,\beta)$ are defined from $\CFI_3(\vec{G},<,\alpha)$ and $\CFI_3(\vec{G},<,\beta)$, respectively, in an isomorphism-invariant manner.
 Note that the linear order $<$ on the vertex set of $G$ is encoded into $\CFI_3(\vec{G},<,\alpha)$ and $\CFI_3(\vec{G},<,\beta)$ via the vertex-coloring.

 So let $\varphi\colon T(\vec{G},<,\alpha) \cong T(\vec{G},<,\beta)$.
 The following claim provides the key tool.

 \begin{claim}
  For every $(v,f) \in V(T(\vec{G},<,\alpha))$ there is some $g \in M_{\beta(v)}(v)$ such that $\varphi(v,f) = (v,g)$.
 \end{claim}
 \begin{subproof}
  Suppose $V(G) = \{v_1,\dots,v_n\}$ where $v_i < v_j$ for all $i < j \in [n]$.
  For $i \leq j$ we define
  \[X_{i,j} \coloneqq \bigcup_{i \leq \ell \leq j} \{v_\ell\} \times M_{\alpha(v_\ell)}(v_\ell)\]
  and
  \[Y_{i,j} \coloneqq \bigcup_{i \leq \ell \leq j} \{v_\ell\} \times M_{\beta(v_\ell)}(v_\ell).\]
  Also let $X_i \coloneqq X_{i,i}$ and $Y_i \coloneqq Y_{i,i}$.
  Let $T_\alpha \coloneqq T(\vec{G},<,\alpha)$ and $T_\beta \coloneqq T(\vec{G},<,\beta)$.

  We argue by induction that $\varphi(X_i) = Y_i$ for all $i \in [n]$.
  So fix some $i \in [n]$ and suppose that $\varphi(X_j) = Y_j$ for all $j < i$.
  In particular, $\varphi(X_{i,n}) = Y_{i,n}$.

  Let $w \in X_i$.
  We have
  \[|N_-(w) \cap X_{i,n}| = 4 + 3 \cdot |N_G(v_i) \cap \{v_{i+1},\dots,v_n\}|.\]
  Let $\ell \in \{i,\dots,n\}$ such that $\varphi(w) \in Y_\ell$.
  Then
  \[|N_-(\varphi(w)) \cap Y_{i,n}| = 4 + (\ell - i) \cdot 9 - 3 \cdot |N_G(v_\ell) \cap \{v_{i},\dots,v_{\ell-1}\}| +  3 \cdot |N_G(v_\ell) \cap \{v_{\ell+1},\dots,v_n\}|.\]
  Moreover, using that $\varphi(X_j) = Y_j$ for all $j < i$, we get that
  \[N_G(v_i) \cap \{v_1,\dots,v_{i-1}\} = N_G(v_\ell) \cap \{v_1,\dots,v_{i-1}\}\]
  which implies that
  \[|N_G(v_i) \cap \{v_i,\dots,v_n\}| = |N_G(v_\ell) \cap \{v_i,\dots,v_n\}|\]
  Also
  \[|N_-(w) \cap X_{i,n}| = |N_-(\varphi(w)) \cap Y_{i,n}|.\]
  All this is only possible if $\ell = i$ using that $G$ is $3$-regular.
  So $\varphi(w) \in Y_i$ as desired.
 \end{subproof}

 The claim immediately implies that $\varphi\colon\CFI_3(\vec{G},<,\alpha) \cong \CFI_3(\vec{G},<,\beta)$.
\end{proof}

\begin{lemma}
 \label{lem:cfi-tournament-indistinguishable}
 Let $k \geq 2$ be an integer such that $\tw(G) \geq k+1$.
 Also let $\alpha,\beta\colon V(G) \to \ZZ_3$ be two functions.
 Then $T(\vec{G},<,\alpha) \simeq_k T(\vec{G},<,\beta)$.
\end{lemma}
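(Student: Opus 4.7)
The plan is to replay the argument of Lemma~\ref{lem:cfi-indistinguishable} almost verbatim, with the CFI graphs replaced by the tournaments $T(\vec{G},<,\alpha)$ and $T(\vec{G},<,\beta)$. What needs to be checked is that the local ``twisting'' isomorphism provided by Lemma~\ref{lem:twist-along-path} actually lifts to an isomorphism of the corresponding tournaments; once this is established, the bijective pebble game strategy carries over with no change.

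First I would prove the following tournament analogue of Lemma~\ref{lem:twist-along-path}: with the same mapping $\varphi(u,f) = (u, f + h_u)$ defined there, one has $\varphi\colon T(\vec{G},<,\alpha) \cong T(\vec{G},<,\beta)$, and $\varphi$ fixes every vertex outside $V(P)$. This reduces to three observations. (i) $\varphi$ preserves the first coordinate, so it respects the color class $\{v\}\times M_{\alpha(v)}(v)$ attached to each vertex $v\in V(G)$, and hence the ``position in the order $<$'' of each vertex of the tournament. (ii) For intra-cluster edges: for $f,g \in M_{\alpha(v)}(v)$ the condition $(f,g) \in F_{\alpha(v)}(v)$ depends only on the minimal $w\in N_G(v)$ where $f$ and $g$ differ and on the value $g(vw)-f(vw) \in \ZZ_3$; both are unchanged when we replace $(f,g)$ by $(f+h_v, g+h_v)$, so $F$-arcs are preserved. (iii) For cross-cluster edges: by construction, for $v<w$ the orientation between $(v,f)$ and $(w,g)$ in $T$ is determined entirely by whether $\{(v,f),(w,g)\} \in E(\CFI_3(\vec G,<,\cdot))$, and the latter is preserved by $\varphi$ because $\varphi$ is already known to be a CFI-isomorphism.

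With this in hand, I would repeat the game-theoretic argument of Lemma~\ref{lem:cfi-indistinguishable} word-for-word, using the tournament isomorphisms just constructed in place of the CFI isomorphisms. Since $\tw(G) \geq k+1$, Theorem~\ref{thm:cops-and-robbers-characterization-of-treewidth} gives the Robber a winning strategy in $\CopRob_{k+1}(G)$. Duplicator's bijection in each round of $\BP_{k+1}\bigl(T(\vec G,<,\alpha_p),\, T(\vec G,<,\alpha_q)\bigr)$ is obtained by composing the tournament isomorphism from our analogue of Lemma~\ref{lem:twist-along-path} along the path the Robber uses to escape from the current cop configuration to the new one. The invariant Duplicator maintains is exactly as before: at position $(\bar x,\bar y)$ there is $u\in V(G)$ and an isomorphism $\varphi\colon T(\vec G,<,\alpha_q) \to T(\vec G,<,\alpha_{u,q})$ with $\varphi(\bar x)=\bar y$, $u$ not appearing in $A(\bar x)$, and the Robber winning from $(A(\bar x),u)$. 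Reducing the general case $\alpha,\beta$ to the special case $\alpha_p,\alpha_q$ is done exactly as in Corollary~\ref{cor:cfi-isomorphism}, now using the tournament version of the twisting isomorphism. Applying Theorem~\ref{thm:eq-wl-pebble} gives $T(\vec G,<,\alpha)\simeq_k T(\vec G,<,\beta)$.

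The only real work is the tournament analogue of Lemma~\ref{lem:twist-along-path}, and within that the only subtle point is part (ii): the $F$-relation is defined using the linear order $<$ on $V(G)$ and the ``first coordinate where $f$ and $g$ differ by $+1$ mod $3$,'' so one has to be careful that adding a common shift $h_v$ preserves both which coordinates differ and by how much. This is immediate from the group structure of $\ZZ_3$, so no real obstacle arises; the rest is bookkeeping that exactly mirrors the CFI case.
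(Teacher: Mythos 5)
Your proposal is correct, but it takes a longer route than the paper. The paper disposes of this lemma in two lines: since the edge relation of $T(\vec G,<,\alpha)$ between any two vertices $(v,f)$ and $(w,g)$ is determined solely by their colors in $\CFI_3(\vec G,<,\alpha)$ (which encode $v$, $w$ and hence their order under $<$) and by their adjacency/non-adjacency in the CFI graph, any position in the bijective $(k+1)$-pebble game that is a partial isomorphism of the vertex-colored CFI graphs is automatically a partial isomorphism of the tournaments. Consequently a winning position for Spoiler on the tournaments is a winning position for Spoiler on the CFI graphs, and the lemma follows from Lemma~\ref{lem:cfi-indistinguishable} as a black box, with no need to revisit Duplicator's strategy. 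You instead re-prove the tournament analogue of Lemma~\ref{lem:twist-along-path} (your checks (i)--(iii) are fine: shifting by $h_v$ preserves the set of coordinates on which $f$ and $g$ differ and the differences themselves, so the $F$-relation is preserved, and the cross-cluster orientation is determined by CFI adjacency plus the order on the color classes) and then replay the whole game argument with tournament isomorphisms in place of CFI isomorphisms. That is sound --- indeed your observation that the twist maps are tournament isomorphisms is essentially the easy direction of Lemma~\ref{lem:cfi-tournament-isomorphism} --- but it duplicates the strategy construction of Lemma~\ref{lem:cfi-indistinguishable} where a single ``the tournament is pair-locally determined by the colored CFI graph'' transfer argument suffices. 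The paper's route is shorter and makes the reuse of Lemma~\ref{lem:cfi-indistinguishable} explicit; yours is self-contained at the level of the game but buys nothing beyond that.
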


\begin{proof}
 Let $(\bar v,\bar w)$ be a winning position for Spoiler in the $(k+1)$-bijective pebble game played on $T(\vec{G},<,\alpha)$ and $T(\vec{G},<,\beta)$.
 Then $(\bar v,\bar w)$ is also a winning position for Spoiler in the $(k+1)$-bijective pebble game played on $\CFI_3(\vec{G},<,\alpha)$ and $\CFI_3(\vec{G},<,\beta)$.
 So the lemma follows directly from Lemma \ref{lem:cfi-indistinguishable}.
\end{proof}

To prove Theorem \ref{thm:wl-tournament-tww}, we also need to bound the twin width of the resulting graph.
Recall that $t \coloneqq \tww(G^{\red})$ where $G^{\red}$ denotes the version of $G$ where every edge is colored red.
Also recall that we defined the linear order $<$ on $V(G)$ in such a way that $t = \tww(G^{\red}) = \tww(G^{\red},<)$ (see Equation \eqref{eq:order-twin-width}).

\begin{lemma}
 \label{lem:tww-cfi-tournament}
 For every function $\alpha\colon V(G) \to \ZZ_3$ it holds that $\tww(T(\vec{G},<,\alpha)) \leq \max(35,t)$.
\end{lemma}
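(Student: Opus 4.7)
The plan is to build a contraction sequence for $T = T(\vec{G},<,\alpha)$ in two phases. In Phase~1 each gadget $\{v\}\times M_{\alpha(v)}(v)$ is collapsed to a single part $P_v$, one gadget at a time. In Phase~2 the resulting quotient on $\{P_v : v \in V(G)\}$ is contracted using a width-$t$ contraction sequence for $(G^{\red},<)$, which exists by Lemma~\ref{lem:twin-width-order}. The motivating observation is that for $v\neq w$ the pair $(P_v,P_w)$ is non-homogeneous in $T$ exactly when $vw\in E(G)$: if $vw\notin E(G)$ every edge of $T$ between the two gadgets goes from the $<$-smaller to the $<$-larger endpoint, whereas if $vw\in E(G)$ the direction of the edge $((v,f),(w,g))$ depends on whether $f(vw)=g(vw)$ and so varies across the pair. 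Hence after Phase~1 the quotient is essentially $G^{\red}$ (with red edges exactly on the $G$-edges) together with additional non-red directed edges that are fully determined by $<$ and the non-edges of $G$.

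For Phase~1, I would process the gadgets in any order $v_1,\dots,v_n$ and bound the red degree of each kind of part that can appear mid-phase: a partial part $P\subseteq\{v_i\}\times M_{\alpha(v_i)}(v_i)$ inside the currently processed gadget, a fully contracted $P_w$ for some already-processed $w$, or an untouched singleton $(u,h)$ in an unprocessed gadget. For a partial part $P$, the intra-gadget contribution is trivially at most $8$ (there are only $9$ vertices in a gadget); each processed $G$-neighbor $w$ of $v_i$ contributes exactly one red edge, because $P_w$ contains all three values of $g(v_iw)$ and so $(P,P_w)$ is automatically non-homogeneous; and each unprocessed $G$-neighbor $u$ of $v_i$ contributes at most $9$ red edges, one per singleton $(u,h)$ with $h\in M_{\alpha(u)}(u)$. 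Writing $p$ and $q$ for the numbers of processed and unprocessed $G$-neighbors of $v_i$, the $3$-regularity gives $p+q\le 3$, so the red degree of $P$ is at most $8+p+9q \le 35-8p \le 35$. A parallel tally shows that the red degree of a processed $P_w$ never exceeds $27$ and that of an untouched singleton $(u,h)$ never exceeds $6$.

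For Phase~2, I would lift a contraction sequence for $(G^{\red},<)$ of width $t$ provided by Lemma~\ref{lem:twin-width-order}, whose parts are always intervals of $<$. On the quotient, whenever a pair $(P,P')$ of parts is homogeneous for $G^{\red}$ (either all pairs in $P\times P'$ are $G$-adjacent or none are), homogeneity for the extra non-red directed relation is automatic: in the adjacent case the pair is already red, and in the non-adjacent case the direction of the edges between $P$ and $P'$ is fixed by the relative position of the two intervals in $<$. So Phase~2 creates only the red edges that the $(G^{\red},<)$-sequence itself creates, and its width is at most $t$. Concatenating the two phases yields a contraction sequence of width $\max(35,t)$. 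The main obstacle is the red-degree accounting in Phase~1: to get the constant down to $35$ one must simultaneously exploit the $3$-regularity of $G$, the fact that $|M_a(v)|=9$, and the case distinction between a neighboring gadget being already processed (contributing only one red edge) or still unprocessed (contributing up to nine). It is precisely the combination $8+3\cdot 9=35$ that sets the constant in the statement.
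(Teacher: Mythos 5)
Your proposal is correct and follows essentially the same route as the paper: a first phase that collapses each gadget $\{v\}\times M_{\alpha(v)}(v)$ while every part stays inside a single gadget (so, by $3$-regularity and $|M_{\alpha(v)}(v)|=9$ together with homogeneity of non-adjacent gadget pairs, the red degree stays at most $35$), followed by a second phase that runs a width-$t$ contraction sequence for $(G^{\red},<)$ on the quotient. Your per-case accounting in Phase~1 and the explicit comparison to $(G^{\red},<)$ in Phase~2 are just finer-grained versions of the paper's bound $4\cdot 9-1=35$ and its observation that $\tww(T(\vec{G},<,\alpha)/\CP_{8n+1})\leq\tww(G^{\red},<)=t$.
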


\begin{proof}
 Throughout the proof, we define $M(v) \coloneqq M_{\alpha(v)}(v)$ for every $v \in V(G)$.
 Since $G$ is $3$-regular, we have that $|M(v)| = 9$ for every $v \in V(G)$.
 So $|V(T(\vec{G},<,\alpha))| = 9 \cdot |V(G)|$.
 Also note that $(M(v),M(w))$ is homogeneous for all distinct $v,w \in V(G)$ such that $\{v,w\} \notin E(G)$.

 We construct a partial contraction sequence as follows.
 Let $n \coloneqq |V(G)|$.
 We define $\CP_1,\dots,\CP_{8n+1}$ arbitrarily such that $\CP_{8n+1} = \{M(v) \mid v \in V(G)\}$.
 Since $G$ is $3$-regular and $|M(v)| = 9$ for every $v \in V(G)$, we conclude that $(T(\vec{G},<,\alpha))/\CP_i$ has red degree at most $4 \cdot 9 - 1 = 35$ for every $i \in [8n+1]$.
 Now observe that
 \[\tww((T(\vec{G},<,\alpha))/\CP_{8n+1}) \leq \tww(G^{\red},<) = t.\]
 It follows that $\tww(T(\vec{G},<,\alpha)) \leq \max(35,t)$ as desired.
\end{proof}

With Lemma \ref{lem:tww-cfi-tournament} in hand, we apply the construction $T(\vec{G},<,\alpha)$ to a $3$-regular base graph $G$ which has tree width linear in the number of vertices, but the twin width of $G^{\red}$ is bounded.
The existence of such graphs has already been observed in \cite{BonnetGKTW22}.
More precisely, the following theorem follows from combining the arguments from \cite[Lemma 5.1]{BonnetGKTW22} and the results from \cite{BiluL06,MarcusSS15}.

\begin{theorem}
 \label{thm:expander-bounded-tww}
 There is a family of $3$-regular graphs $(G_n)_{n \geq 1}$ such that
 \begin{enumerate}
  \item $|V(G_n)| = O(n)$,
  \item $\tww(G_n^{\red}) \leq 6$ (where $G_n^{\red}$ denotes the version of $G_n$ where all edges are turned into red edges), and
  \item $\tw(G_n) \geq n$
 \end{enumerate}
 for every $n \geq 1$.
\end{theorem}

Now, we are ready to give a proof for Theorem \ref{thm:wl-tournament-tww}.

\begin{theorem}
 For every $k \geq 2$ there are non-isomorphic tournaments $T_k$ and $T_k'$ such that
 \begin{enumerate}
  \item $|V(T_k)| = |V(T_k')| = O(k)$,
  \item $\tww(T_k) \leq 35$ and $\tww(T_k') \leq 35$, and
  \item $T_k \simeq_k T_k'$.
 \end{enumerate}
\end{theorem}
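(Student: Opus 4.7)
The plan is to follow the proof of the Dawar--Kopczynski theorem above verbatim, but replacing the base graph by a 3-regular graph $G_k$ on $O(k^2)$ vertices satisfying $\tw(G_k) \geq k+1$ and $\tww(G_k^{\red}) \leq 35$. Given such $G_k$, I would pick an arbitrary orientation $\vec{G}_k$, invoke Lemma \ref{lem:twin-width-order} to obtain a linear order $<$ with $\tww(G_k^{\red}, <) = \tww(G_k^{\red})$, fix a vertex $v_0 \in V(G_k)$, and define the mappings $\alpha_0, \alpha_1\colon V(G_k) \to \ZZ_3$ exactly as before. Setting $T_k \coloneqq T(\vec{G}_k, <, \alpha_0)$ and $T_k' \coloneqq T(\vec{G}_k, <, \alpha_1)$, all three conclusions then follow directly from the lemmas already proved: non-isomorphism from Lemmas \ref{lem:cfi-non-isomorphism} and \ref{lem:cfi-tournament-isomorphism}; the twin-width bound $\max(35, \tww(G_k^{\red})) = 35$ from Lemma \ref{lem:tww-cfi-tournament}; the $k$-WL indistinguishability $T_k \simeq_k T_k'$ from Lemma \ref{lem:cfi-tournament-indistinguishable}; and the vertex count $|V(T_k)| = 9|V(G_k)| = O(k^2)$ directly from the construction, since 3-regularity makes each fiber $M_{\alpha(v)}(v)$ have size $3^{3-1} = 9$.

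The entire proof thus reduces to exhibiting the base graph $G_k$. My first candidate is obtained from a toroidal grid $G_{n,n}$ with $n$ a sufficiently large multiple of $k$ by deleting a perfect matching (e.g.\ the horizontal edges $\{(i,j),(i{+}1,j)\}$ for every even $i$ and every $j$), which yields a 3-regular graph on $n^2 = O(k^2)$ vertices. Removing a single matching does not destroy the large grid minor in $G_{n,n}$, so $\tw(G_k) = \Omega(n) = \Omega(k)$, and choosing the constant in $n = \Theta(k)$ large enough guarantees $\tw(G_k) \geq k+1$. For the red twin width bound, I would adapt the row-by-row inductive contraction sequence from the proof of Lemma \ref{lem:tww-red-toroidal-grid}: applying essentially the same sequence to $G_k^{\red}$ introduces only a small constant-bounded perturbation in the maximum red degree at each step (the missing matching edges convert a few homogeneous pairs into heterogeneous ones, but only locally), and an explicit bookkeeping keeps the red degree comfortably below $35$. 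Should the bookkeeping for this simple construction be awkward, a toroidal honeycomb lattice of side $\Theta(k)$ is a clean alternative: 3-regular by construction, of tree width $\Theta(k)$, and admitting a grid-like contraction sequence.

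The main obstacle is precisely this simultaneous set of constraints on $G_k$: it must be 3-regular (so that $|M(v)| = 9$ drives the bound $4 \cdot 9 - 1 = 35$ in Lemma \ref{lem:tww-cfi-tournament}), have tree width at least $k+1$ on only $O(k^2)$ vertices (required for Lemma \ref{lem:cfi-tournament-indistinguishable}), and have bounded red twin width (required for Lemma \ref{lem:tww-cfi-tournament}). Each property individually is standard, but since toroidal grids are 4-regular and ordinary walls have boundary vertices of degree 2, pinning down a single natural family of 3-regular graphs of quadratic order with linear tree width whose red version has bounded twin width requires some care; once this is done, the rest of the proof is purely bookkeeping from the already-established lemmas.
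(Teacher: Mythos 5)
Your overall strategy is exactly the paper's: run the CFI-tournament construction $T(\vec G,<,\alpha)$ on a connected $3$-regular base graph of order $O(k^2)$ with $\tw \geq k+1$ and bounded red twin width, and then quote Lemmas \ref{lem:cfi-non-isomorphism}, \ref{lem:cfi-tournament-isomorphism}, \ref{lem:tww-cfi-tournament} and \ref{lem:cfi-tournament-indistinguishable}. That reduction is fine. The genuine gap is in your primary candidate for the base graph. If you delete the edges $\{(i,j),(i+1,j)\}$ for \emph{every even $i$ and every $j$}, you delete, for each even $i$, \emph{all} edges between row $i$ and row $i+1$. The result is not a wall: it splits $G_{n,n}$ into disjoint components, each consisting of two consecutive row-cycles joined by a perfect matching, i.e.\ circular ladders (prisms) $C_n \times K_2$. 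These have tree width $3$, so the claim that ``removing a single matching does not destroy the large grid minor'' is false for this matching, and the hypothesis $\tw(G_k)\geq k+1$ needed for Lemma \ref{lem:cfi-tournament-indistinguishable} fails; moreover the graph is disconnected, violating the standing connectivity assumption used in Lemma \ref{lem:twist-along-path}, Corollary \ref{cor:cfi-isomorphism} and Lemma \ref{lem:cfi-non-isomorphism}. The repair is to \emph{stagger} the deleted matching between alternate columns (delete $\{(i,2j),(i+1,2j)\}$ for even $i$ and $\{(i,2j+1),(i+1,2j+1)\}$ for odd $i$), which is what the paper does: the resulting toroidal wall $W_{2k+2}$ is connected, $3$-regular, and still contains a $(k+1)\times(k+1)$ grid minor, so $\tw(W_{2k+2})\geq k+1$.

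Two further remarks. First, your plan to re-run the row-by-row contraction of Lemma \ref{lem:tww-red-toroidal-grid} on the modified graph and argue a ``constant-bounded perturbation'' is both unverified and unnecessary: since all edges are red, deleting edges never increases twin width (along the same contraction sequence, a red edge between two parts of a quotient of the subgraph forces a red edge between them in the quotient of the supergraph), so $\tww(W_{2k+2}^{\red}) \leq \tww(G_{2k+2,2k+2}^{\red}) \leq 6$ follows immediately, and Lemma \ref{lem:tww-cfi-tournament} gives $\max(35,6)=35$. Second, your fallback (a toroidal honeycomb of side $\Theta(k)$) is essentially the same object as the wall in its brick-wall drawing; it would work, but as stated its red-twin-width bound is only asserted, and the clean way to justify it is again via the subgraph-of-the-toroidal-grid argument rather than a fresh contraction-sequence bookkeeping.
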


\begin{proof}
 Let $k \geq 2$.
 Let $G_{k+1}$ be the $3$-regular graph obtained from Theorem \ref{thm:expander-bounded-tww}.
 Note that $\tw(G_{k+1}) \geq k+1$.

 We also fix an arbitrary orientation $\vec{G}_{k+1}$ of $G_{k+1}$ and let $G_{k+1}^{\red}$ denote the version of $G_{k+1}$ where every edge is replaced by a red edge.
 We have $t \coloneqq \tww(G_{k+1}^{\red}) \leq 6$ by Theorem \ref{thm:expander-bounded-tww}.
 By Lemma \ref{lem:twin-width-order} there is some linear order $<$ on $V(G_{k+1})$ such that $\tww(G_{k+1}^\red,<) \leq t \leq 6$.

 Now fix an arbitrary $u_0 \in V(G_{k+1})$.
 For $p \in \ZZ_3$ we define the mapping $\alpha_p \colon V(G_{k+1}) \to \ZZ_3$ via $\alpha_p(u_0) = p$ and $\alpha_p(w) = 0$ for all $w \in V(G_{k+1}) \setminus \{u_0\}$.
 We define $T_k \coloneqq T(\vec{G}_{k+1},<,\alpha_0)$ and $T_k' \coloneqq T(\vec{G}_{k+1},<,\alpha_1)$.
 We have $|V(T_k)| = |V(T_k')| = 9 \cdot |V(G_{k+1})| = O(k)$.
 Also, $\tww(T_k) \leq 35$ and $\tww(T_k') \leq 35$ by Lemma \ref{lem:tww-cfi-tournament}.
 Finally, $T_k \not\cong T_k'$ by Lemmas \ref{lem:cfi-non-isomorphism} and \ref{lem:cfi-tournament-isomorphism}, and $T_k \simeq_k T_k'$ by Lemma~\ref{lem:cfi-tournament-indistinguishable}.
\end{proof}

\section{Twin Width is Smaller Than Other Widths}
\label{sec:widths}

In this section, we compare twin width with other natural width parameters of tournaments.
If $f,g$ are mappings from (directed) graphs to the natural numbers, we say that $f$ is \emph{functionally smaller} than $g$ on a class $\CC$ of graphs if
for every $k$ there is a $k'$ such that for all graphs $G\in\CC$, if $g(G)\le k$ then $f(G)\le k'$.
We write $f\precsim_{\CC}g$ to denote that $f$ is functionally smaller than $g$ on $\CC$.
We omit the subscript ${}_{\CC}$ if $\CC$ is the class of all digraphs.

Natural width measures for directed graphs are \emph{cut width} \cite{ChudnovskyFS12}, \emph{directed path width}, \emph{directed tree width} \cite{JohnsonRST01}, and \emph{clique width} \cite{CourcelleO00} (see also \cite{Rehs22} for width measures on directed graphs).
On the class of tournaments twin width turns out to be functionally smaller than all of these.
For clique width, it has already been shown in \cite{BonnetKTW22} that twin width is functionally smaller than clique width on undirected graphs; the proof easily extends to arbitrary binary relational structures and hence to tournaments.

We start by giving definitions for the other width measures.

Let $G$ be a digraph.
For a linear order $\leq$ on $V(G)$ and a vertex $v \in V(G)$, we let $S_\leq(v) \coloneqq \{w \in V(G) \mid w \leq v\}$ be the set of all vertices smaller than or equal to $v$ in $\leq$.
Let
\[s_\leq(v) \coloneqq |E_G(S_\leq(v),V(G)\setminus S_\leq(v))|\]
be the number of edges from $S_\le(v)$ to its complement.
The \emph{width} of $\leq$ is $\max_{v\in V(G)} s_\leq(v)$, and the \emph{cut width} $\ctw(G)$ is the minimum over the width of all linear orders of $V(G)$.

A \emph{directed path decomposition} of a digraph $G$ is a mapping $\beta\colon[p] \to 2^{V(G)}$, for some $p\in\NN$, such that
for every vertex $v\in V(G)$ there are $\ell,r \in [p]$ such that
\[v \in\beta(t) \iff \ell \leq t \leq r,\]
and for all edges $(v,w)\in E(G)$ there are $\ell,r \in [p]$ with $\ell \leq r$ such that $v \in \beta(r)$ and $w \in \beta(\ell)$.
The sets $\beta(t)$, $t \in [p]$, are the \emph{bags} of the decomposition.
The \emph{width} of the decomposition is $\max_{t \in [p]}|\beta(t)|-1$,
and the \emph{directed path width} $\dpw(G)$ is the minimum width of a directed path decomposition of $G$.

A digraph $R$ is a \emph{rooted directed tree} if there is a vertex $r_0 \in V(R)$ such that for every $t \in V(R)$ there is a unique directed walk from $r_0$ to $t$.
Note that every rooted directed tree can be obtained from an undirected tree by selecting a root $r_0$ and directing all edges away from the root.
For $t \in V(R)$ we denote by $R_t$ the unique induced subgraph of $R$ rooted at $t$.

Let $G$ be a digraph.
A \emph{directed tree decomposition} of $G$ is a triple $(R,\beta,\gamma)$ where $R$ is a rooted directed tree, $\beta\colon V(R) \to 2^{V(G)}$ and $\gamma\colon E(R) \to 2^{V(G)}$ such that
\begin{enumerate}[label = (D.\arabic*),leftmargin=*]
 \item\label{item:directed-tree-decomposition-1} $\{\beta(t) \mid t \in V(R)\}$ is a partition of $V(G)$, and
 \item\label{item:directed-tree-decomposition-2} for every $(s,t) \in E(R)$ the set $\gamma(s,t)$ is a hitting set for all directed walks that start and end in $\beta(R_t) \coloneqq \bigcup_{t' \in V(R_t)}\beta(t')$ and contain a vertex outside of $\beta(R_t)$.
\end{enumerate}
For $t \in V(R)$ we define $\Gamma(t) \coloneqq \beta(t) \cup \bigcup_{(s,s') \in E(t)} \gamma(s,s')$ where $E(t)$ denotes the set of edges incident to $t$.
The \emph{width} of $(R,\beta,\gamma)$ is defined as
\[\width(R,\beta,\gamma) \coloneqq \max_{t \in V(R)} |\Gamma(t)| - 1.\]
The \emph{directed tree width} $\dtw(G)$ is the minimum width of a directed tree decomposition of $G$.

For the reader's convenience, we include a proof of the following well-known inequalities.

\begin{proposition}
 For all digraphs $G$, it holds that
 \[\dtw(G) \leq \dpw(G) \leq \ctw(G).\]
\end{proposition}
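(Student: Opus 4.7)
The plan is to handle the two inequalities in order, each by an explicit construction.

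For $\dpw(G) \leq \ctw(G)$, I would fix a linear order $v_1 < \cdots < v_n$ witnessing $\ctw(G) = k$ and build a directed path decomposition with $n$ bags by setting
\[
\beta(t) \coloneqq \{v_i : i \leq t \leq r(v_i)\}, \quad \text{where} \quad r(v_i) \coloneqq \max\big(\{i\} \cup \{j : (v_i, v_j) \in E(G),\; j > i\}\big).
\]
Each vertex then appears in a contiguous range of bags by construction. For an edge $(v_i, v_j)$: if $j > i$, then both $v_i, v_j \in \beta(j)$ (since $r(v_i) \geq j$); if $j < i$, then $v_i \in \beta(i)$, $v_j \in \beta(j)$, and $j \leq i$; in either case the edge condition holds. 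For the width bound, any $v_i \in \beta(t) \setminus \{v_t\}$ has $i < t$ and is the tail of some forward edge $(v_i, v_j)$ with $j \geq t$, so the number of such $v_i$ is at most $s_\leq(v_{t-1}) \leq k$. Hence $|\beta(t)| \leq k+1$.

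For $\dtw(G) \leq \dpw(G)$, given a directed path decomposition $\beta \colon [p] \to 2^{V(G)}$ of width $k$, take $R$ to be the rooted directed path $1 \to 2 \to \cdots \to p$ rooted at $1$, define $\beta'(t) \coloneqq \beta(t) \setminus \bigcup_{t' < t} \beta(t')$ so that each vertex sits in the bag of its first appearance (a partition, as required by \ref{item:directed-tree-decomposition-1}), and put $\gamma(t, t+1) \coloneqq \beta(t) \cap \beta(t+1)$ on every tree edge. The width bound is then immediate, since each $\beta'(t)$ and each $\gamma$-set incident to $t$ is contained in $\beta(t)$, so $\Gamma(t) \subseteq \beta(t)$ and $|\Gamma(t)| - 1 \leq k$.

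The main step, and the only slightly nontrivial part, is verifying the hitting-set property \ref{item:directed-tree-decomposition-2} for this choice of $\gamma$. Consider a walk $v_0, \ldots, v_m$ starting and ending in $\beta'(R_{t+1})$ — the set of vertices whose first appearance in $\beta$ is at position $\geq t+1$ — and visiting some vertex outside this set. Because the walk must return from outside to inside, there is an index $j$ with $v_{j-1} \notin \beta'(R_{t+1})$ and $v_j \in \beta'(R_{t+1})$. The first appearance of $v_{j-1}$ is then at most $t$, while the path-decomposition edge condition applied to $(v_{j-1}, v_j)$, combined with $v_j$'s first appearance being at least $t+1$, forces the last appearance of $v_{j-1}$ to be at least $t+1$. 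By contiguity of bag-ranges, $v_{j-1}$ then lies in both $\beta(t)$ and $\beta(t+1)$, hence in $\gamma(t, t+1)$, which therefore hits the walk. Everything else is routine bookkeeping with contiguous bag-ranges, and both inequalities follow.
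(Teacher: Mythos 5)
Your proof is correct and follows essentially the same route as the paper: the identical bag construction $\beta(i) = \{v \mid v \leq i \leq \ell_v\}$ with the cut-counting argument for $\dpw(G) \leq \ctw(G)$, and the same path-shaped directed tree decomposition (new bags by set difference, $\gamma$-sets as intersections of consecutive bags) for $\dtw(G) \leq \dpw(G)$. The only deviation is that you root the path at the first bag, whereas the paper roots it at the last bag so that its subtrees are prefixes rather than suffixes of the bag sequence; both orientations work, and your verification of the hitting-set condition via the re-entry edge of an escaping walk --- a check the paper leaves to the reader --- is sound.
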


\begin{proof}
 To prove that $\dtw(G) \leq \dpw(G)$, let $\beta\colon [p] \to 2^{V(G)}$ be a directed path decomposition of $G$ of width $k$.
 Let $R$ be the rooted directed tree with vertex set $V(R) \coloneqq [p]$, $E(R) \coloneqq \{(i,i-1) \mid 2 \leq i \leq p\}$, and root $r_0 \coloneqq p$.
 (That is, $R$ is the path $1,\ldots,p$ in reverse order.)
 We let $\beta'(1)\coloneqq \beta(1)$, and for $2 \leq i\leq p$, we let $\beta'(i) \coloneqq \beta(i) \setminus \beta(i-1)$ and $\gamma(i,i-1) \coloneqq \beta(i) \cap \beta(i-1)$.
 It is easy to show that $(R,\beta',\gamma)$ is a directed tree decomposition of $G$ of width $k$.

 To prove that $\dpw(G) \leq \ctw(G)$, let $\leq$ be a linear order of $G$ of width $k$.
 Without loss of generality we may assume that $V(G) = [n]$ and $\leq$ is the natural linear order on $[n]$.
 For every $v \in [n]$, let $\ell_v \coloneqq \max\big(\{v\} \cup N_+(v)\big)$.
 Note that there are at most $k$ vertices $u \leq v$ such that $\ell_u > v$.
 We define $\beta\colon [n] \to 2^{V(G)}$ by $\beta(i) \coloneqq \{v \in V(G) \mid v \leq i \leq \ell_v\}$.
 Note that $|\beta(i)| \leq k+1$.
 It is easy to verify that $\beta$ is a directed path decomposition of $G$ of width at most $k$.
\end{proof}

The proposition implies that $\dtw \precsim \dpw \precsim \ctw$.
The following example shows that $\tww \not\precsim \ctw$ and hence $\tww \not\precsim \dpw$ and $\tww \not\precsim \dtw$ on the class of all digraphs.

\begin{example}
 For every undirected graph $G$, let $\widehat G$ be the digraph with vertex set $V(\widehat G)\coloneqq V(G)\cup E(G)$ and edge set
 \[E(\widehat G) \coloneqq \big\{(e,v),(e,w) \bigmid e = (v,w) \in E(G)\big\}.\]
 Suppose that $V(G) = \{v_1,\dots,v_n\}$ and $E(G) = \{e_1,\dots,e_m\}$.
 Let $\leq$ be the linear order on $V(\widehat G)$ defined by $v_i \leq v_j$ if and only if $i \leq j$,
 $v_i \leq e_j$ for all $i,j$, and $e_i \leq e_j$ if and only if $i \leq j$.
 Since there are only backward edges, the width of this linear order is $0$.
 Hence, $\ctw(\widehat G)=0$.

 However, it is easy to see that the class of all digraphs $\widehat G$ for arbitrary $G$ has unbounded twin width.
\end{example}

In contrast to the last example, it turns out that on the class of tournaments, twin width is functionally smaller than directed tree width.
Actually, this even holds for the larger class of semi-complete graphs.
A digraph $G$ is \emph{semi-complete} if for all distinct $v,w \in V(G)$ at least one of the pairs $(v,w),(w,v)$ is an edge.
Note that every tournament is semi-complete.
We argue that
\begin{equation}
 \label{eq:tww-leq-dtw}
 \tww\precsim_{\CS}\dtw,
\end{equation}
where $\CS$ denotes the class of all semi-complete digraphs.

To prove \eqref{eq:tww-leq-dtw}, we rely on the following result which shows that $\dpw\precsim_{\CS}\dtw$.

\begin{theorem}[{\cite[Proposition 5]{GurskiKRW21}}]
 \label{thm:dpw-vs-dtw}
 Let $G$ be a semi-complete graph.
 Then
 \[\dpw(G) \leq 4(\dtw(G) + 2)^2 + 7(\dtw(G) + 2) - 1.\]
\end{theorem}

Hence, it suffices to show that, on the class of semi-complete digraphs, twin width is functionally smaller than directed path width.

\begin{theorem}
 \label{thm:tww-vs-dpw}
 Let $G$ be a semi-complete graph.
 Then $\tww(G) \leq \dpw(G)$.
\end{theorem}

\begin{proof}
 Let $G$ be a semi-complete graph of directed path width $k$, and let $\beta\colon [p] \to 2^{V(T)}$ be a directed path decomposition of $G$ of width $k$.
 For every $v \in V(G)$, let $r(v) \in [p]$ be maximum such that $v \in \beta(r(v))$.
 Let $v_1,\dots,v_n$ be an enumeration of $V(G)$ such that for all $i < j$ we have $r(v_i) \leq r(v_j)$.

 We define a contraction sequence $\CP_1,\dots,\CP_n$ of $G$ by contracting the vertices from left to right.
 Formally, we define $\CP_i \coloneqq \{\{v_1,\dots,v_i\},\{v_{i+1}\},\dots,\{v_n\}\}$ for all $i \in [n]$.

 Consider the graph $G/\CP_i$.
 Since $P_i \coloneqq \{v_1,\dots,v_i\}$ is the only non-singleton part in $\CP_i$, every red edge in $G/\CP_i$ is incident to $P_i$.
 Suppose $\{P_i,\{v_j\}\}$ is a red edge in $G/\CP_i$ (for some $j > i$).
 Then there are vertices $v,w \in P_i = \{v_1,\dots,v_i\}$ such that $(v,v_j),(v_j,w) \in E(T)$.
 Let $r\coloneqq r(v_i)$.
 Since $v = v_{i'}$ for some $i'\leq i$, we have $r(v) \leq r \leq r(v_j)$.
 By the definition of path decompositions, since $(v,v_j) \in E(T)$ this implies $v_j \in \beta(r)$.

 So if an edge $\{P_i,\{v_j\}\}$ is red in $G/\CP_i$, then $v_j \in \beta(r)$.
 Since we also have $v_i \in \beta(r)$ and $|\beta(r)| \leq k+1$, there can be at most $k$ red edges.
 Hence, the width of $\CP_1,\dots,\CP_n$ is at most $k$.
\end{proof}

Combining Theorems \ref{thm:dpw-vs-dtw} and \ref{thm:tww-vs-dpw}, we obtain the following corollary.

\begin{corollary}
 \label{cor:tww-vs-dtw}
 Let $G$ be a semi-complete graph.
 Then
 \[\tww(G) \leq 4(\dtw(G) + 2)^2 + 7(\dtw(G) + 2) - 1.\]
\end{corollary}

Note that the corollary implies \eqref{eq:tww-leq-dtw}.
We close this section with an example showing that inequality \eqref{eq:tww-leq-dtw} is strict even on tournaments, that is,
\[\dtw\not\precsim_{\CT}\tww\]
where $\CT$ denotes the class of all tournaments.

\begin{example}
 For every $n \geq 1$, let $T_n$ be the tournament obtained by taking a cycle of length $2n+1$ and directing edges along the shorter path.
 Formally, we set $V(T_n) \coloneqq \{0,\dots,2n\}$ and
 \[E(T_n) \coloneqq \big\{ (i,i+j \bmod{2n+1}) \bigmid i \in \{0,\dots,2n\}, j \in \{1,\dots,n\}\big\}.\]
 Then it is easy to show that $\tww(T_n) \leq 1$, but the directed tree width of the family of all $T_n$, $n \geq 1$, is unbounded.
\end{example}

\section{Conclusion}

We prove that the isomorphism problem for classes of tournaments of bounded (or slowly growing) twin width is in polynomial time.
Many algorithmic problems that can be solved efficiently on (classes of) tournaments can also be solved efficiently on (corresponding classes of) semi-complete graphs, that is, directed graphs where for every pair $(v,w)$ of vertices at least one of the pairs $(v,w),(w,v)$ is an edge (see, e.g., \cite{Pilipczuk13}).
Contrary to this, we remark that isomorphism of semi-complete graphs of bounded twin width is GI-complete:
we can reduce isomorphism of oriented graphs to isomorphism of semi-complete graphs by replacing each non-edge by a bidirectional edge.
This reduction preserves twin with.

Classes of tournaments of bounded twin width are precisely the classes that are considered to be structurally sparse.
Formally, these are the classes that are monadically dependent, which means that all set systems definable over the tournaments in such a class have bounded VC dimension.
The most natural set systems definable within a tournament are those consisting of the in-neighbors of the vertices and of the out-neighbors of the vertices.
Bounded twin width implies that the VC dimension of these two set systems is bounded, but the converse does not hold.
It is easy to see that the VC dimensions of the in-neighbors and out-neighbors systems as well as the set system consisting of the mixed neighbors of all edges are within a linear factor of one another.
As a natural next step, we may ask if isomorphism of tournaments where the VC-dimension of these systems is bounded is in polynomial time.

\printbibliography
\end{document}